\newif\ifconf
\def\thm@space@setup{%
  \thm@preskip=4.0pt plus 1.0pt minus 2.0pt 
  \thm@postskip=\thm@preskip
}
\titlespacing*{\section}{0pt}{3pt}{2pt}
\titlespacing*{\subsection}{0pt}{2pt}{1pt}
\providecommand{\figfont}{\footnotesize} 
\renewcommand\paragraph{\@startsection{paragraph}{4}{\parindent}%
  {0pt}%
  {-3.5\p@}%
  {\ACM@NRadjust{\@parfont\@adddotafter}}}
\newcommand{\aref}[1]{%
  \ifconf
    the full version of our paper~\cite{tseng2022parallel_arxiv}%
  \else
    \cref{#1}%
  \fi
}
\newcommand{\Aref}[1]{%
  \ifconf
    The full version of our paper~\cite{tseng2022parallel_arxiv}%
  \else
    \Cref{#1}%
  \fi
}
\newcommand{\noconf}[1]{%
  \ifconf\else
    #1%
  \fi
}
\providecommand{\ab}{\allowbreak}
\providecommand{\smallabs}[1]{\lvert#1\rvert}
\providecommand{\tsc}{\textsc}
\providecommand{\combine}{\textsc{Combine}}
\providecommand{\rank}{\operatorname{rank}}
\providecommand{\rmin}{\operatorname{rmin}}
\providecommand{\rmax}{\operatorname{rmax}}
\providecommand{\Break}{\State \textbf{break}\ }
\providecommand{\Pfor}[1]{\For{#1} \textbf{in parallel}} 
\providecommand{\Local}{\State}
\makeatletter\algnewcommand{\LComment}[1]{\Statex \hskip\ALG@thistlm  \(\triangleright\) #1}\makeatother
\patchcmd{\maketitle}{\@copyrightpermission}{
   \begin{minipage}{0.3\columnwidth}
     \href{https://creativecommons.org/licenses/by/4.0/}{\includegraphics[width=0.90\textwidth]{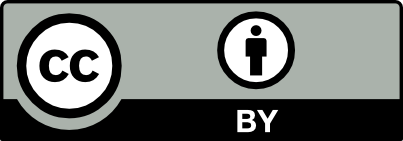}}
   \end{minipage}\hfill
   \begin{minipage}{0.7\columnwidth}
     \href{https://creativecommons.org/licenses/by/4.0/}{This work is licensed under a Creative Commons Attribution International 4.0 License.}
   \end{minipage}

   \vspace{5pt}
}{}{}
\begin{document}

\title{Parallel Batch-Dynamic Minimum Spanning Forest and the Efficiency of Dynamic Agglomerative Graph Clustering}
\ifconf\else
     \titlenote{This is the full version of the paper appearing in the ACM Symposium on Parallelism in Algorithms and Architectures (SPAA), 2022.}
\fi

\author{Tom Tseng}
\orcid{0000-0002-6422-288X}
\affiliation{
  \institution{MIT CSAIL}
  \city{Cambridge}
  \state{Massachusetts}
  \country{USA}
}
\email{tomtseng@mit.edu}

\author{Laxman Dhulipala}
\affiliation{
  \institution{University of Maryland}
  \city{College Park}
  \state{Maryland}
  \country{USA}
}
\email{laxman@umd.edu}

\author{Julian Shun}
\affiliation{
  \institution{MIT CSAIL}
  \city{Cambridge}
  \state{Massachusetts}
  \country{USA}
}
\email{jshun@mit.edu}


\begin{abstract}
  Hierarchical agglomerative clustering (HAC) is a popular algorithm for clustering data, but despite its importance, no dynamic
algorithms for HAC with good theoretical guarantees exist. In this paper, we study
dynamic HAC on edge-weighted graphs. As single-linkage HAC
reduces to computing a minimum spanning forest (MSF), our first result is
a parallel batch-dynamic algorithm for maintaining MSFs. 
On a batch of $k$ edge insertions or deletions, our batch-dynamic MSF algorithm
runs in $O(k\log^6 n)$ expected amortized work and $O(\log^4 n)$ span with high
probability. 
It is the first fully dynamic MSF algorithm handling batches of edge
updates with polylogarithmic work per update and polylogarithmic
span. Using our MSF algorithm, we obtain a parallel batch-dynamic algorithm that can answer queries about 
single-linkage graph HAC clusters.

Our second result is that dynamic graph HAC is significantly harder for other common
linkage functions. For example, assuming the strong exponential time hypothesis, 
dynamic graph HAC requires $\Omega(n^{1-o(1)})$ work per update or query on a graph with $n$
vertices for complete linkage, weighted average linkage, and average linkage. For complete
linkage and weighted average linkage, the bound still holds even for incremental or
decremental algorithms and even if we allow $\operatorname{poly}(n)$-approximation. For
average linkage, the bound weakens to $\Omega(n^{1/2 - o(1)})$ for incremental and
decremental algorithms, and the bounds still hold when allowing
$n^{o(1)}$-approximation.

\end{abstract}

\ifconf
\begin{CCSXML}
<ccs2012>
<concept>
<concept_id>10003752.10003809.10010170.10010171</concept_id>
<concept_desc>Theory of computation~Shared memory algorithms</concept_desc>
<concept_significance>500</concept_significance>
</concept>
<concept>
<concept_id>10003752.10003809.10003635.10010038</concept_id>
<concept_desc>Theory of computation~Dynamic graph algorithms</concept_desc>
<concept_significance>500</concept_significance>
</concept>
<concept>
<concept_id>10002951.10003227.10003351.10003444</concept_id>
<concept_desc>Information systems~Clustering</concept_desc>
<concept_significance>500</concept_significance>
</concept>
</ccs2012>
\end{CCSXML}
\ccsdesc[500]{Theory of computation~Shared memory algorithms}
\ccsdesc[500]{Theory of computation~Dynamic graph algorithms}
\ccsdesc[500]{Information systems~Clustering}

\keywords{Parallel Algorithms, Dynamic Algorithms, Graph Clustering}
\fi 

\maketitle

\section{Introduction}\label{sec:intro}

Clustering is a fundamental technique in data mining and unsupervised learning
that organizes data into meaningful groups. In this paper, we study 
\emph{hierarchical agglomerative clustering} (HAC) algorithms. HAC constructs a
hierarchy of clusters over a set of points by starting with each point
in a separate cluster and merging the two most similar clusters until all points are merged. 
The similarity between clusters is specified by a \emph{linkage function}. Popular linkage
functions include single linkage, complete linkage, average linkage, 
and weighted average linkage, with average linkage perhaps being the most widely used.
Several popular clustering algorithms are based on single linkage as well~\cite{havens2009vat,bateni2017affinity}.
HAC on $n$ points can be solved in cubic work in general, and 
several common linkage functions require only quadratic work~\cite{benzecri1982construction}. Quadratic work is optimal
in the sense that if the input is an $n \times n$ similarity matrix for the $n$
points, then all matrix entries need to be read to compute HAC.

Because the similarity matrix has lots of negligible
entries in many scenarios, Dhulipala et al.~\cite{dhulipala2021hierarchical} recently studied \emph{graph-based HAC} (graph HAC) as opposed to the traditional \emph{point-based HAC}. In graph HAC,
not all similarities between points need to be
specified. Instead, the input is a graph with edges weighted by the similarity
between their endpoints. Dhulipala et al.\ develop exact and approximate algorithms for
graph HAC with subquadratic work on sparse graphs and empirically showed that the resulting clusters are of similar quality to those of point-based HAC.

Modern data sets are large and are often rapidly changing, and so a natural question
is whether we can compute HAC over a dynamic data set. Even with subquadratic work, it is
inefficient to statically re-compute HAC on every update of a large,
dynamically changing graph. Little research has been done on dynamic HAC. Graph HAC
seems more likely to yield fast dynamic algorithms than point-based HAC---a graph
update can be as granular as updating a similarity between one pair of vertices,
whereas updating points in point-based HAC incurs $\Omega(n)$ changes in the
similarity matrix. As such, this paper aims to study whether graph HAC
allows efficient dynamic algorithms under edge insertions and deletions.

The canonical output for HAC is a dendrogram showing the hierarchical clustering, but there are graphs for which
one edge update can completely change the structure of the dendrogram. 
It therefore seems that a dynamic HAC algorithm that explicitly maintains a dendrogram will
have poor worst-case update time. We hence examine dynamic graph HAC algorithms
with more restricted query outputs, e.g., queries of the form ``are query vertices $s$ and $t$
in the same cluster if we agglomeratively cluster until all similarities are below query threshold $\theta$.''

With this form of query, single-linkage graph HAC indeed admits efficient dynamic algorithms.
As single-linkage HAC reduces to
computing a minimum spanning forest (MSF)~\cite{gower1969minimum}, we can solve
dynamic single-linkage HAC by first applying a dynamic MSF algorithm. The state-of-the-art
dynamic sequential MSF algorithm achieves $O(\log^4 n / \log \log
n)$ amortized work per edge update to maintain an MSF~\cite{holm2015faster}. Then, storing the
MSF in a dynamic trees data structure~\cite{sleator1983data} allows us to answer the queries in logarithmic work.
To support a high velocity of updates, however, we may want a
\emph{batch-dynamic} algorithm that can batch together
updates and exploit parallelism across a batch. Though there are efficient
parallel batch-dynamic algorithms for connectivity and incremental MSF~\cite{acar2019parallel,anderson2020work}, 
no such algorithm has been developed for general dynamic MSF.

This discussion raises two questions: (1) Can we develop a
parallel batch-dynamic MSF algorithm, hence giving an parallel batch-dynamic algorithm for
single-linkage graph HAC? (2) Do other linkage functions
also admit dynamic algorithms with polylogarithmic work per update?

In this paper, we give a parallel batch-dynamic MSF algorithm achieving
$O(k\log^6 n)$ expected amortized work and $O(\log^3 n \log k)$ span with high
probability (w.h.p.)\footnote{We say that an event
occurs \emph{with high probability (w.h.p.)} if it occurs with probability at least
$1-1/n^c$ for any $c \ge 1$, where constants inside asymptotic bounds can
depend on $\poly{c}$.} for a batch of $k$ edge insertions or $k$
edge deletions.
Moreover, our MSF result is of independent interest outside the
context of clustering. Prior
to our algorithm, there was not even a batch-decremental MSF algorithm with 
polylogarithmic span achieving $O(kn)$ work on edge deletions.

We first give a
parallel batch-decremental MSF algorithm achieving 
$O(\log^4 n)$ expected amortized work per edge and
$O(\log^3 n \log k)$ span w.h.p.\ per batch. A key challenge in parallelizing the 
decremental MSF algorithm is fetching the $k$ lightest edges incident to a connected component in low span. We solve this approximately by augmenting an internal data structure with quantile summaries.
Then, we parallelize Holm et al.'s reduction of fully dynamic MSF to decremental MSF~\cite{holm2001poly} 
to obtain our batch-dynamic MSF algorithm.

\begin{table}\centering
  \figfont
  \begin{tabular}{@{}lllll@{}}
  \toprule
  \multirow{2}{*}{Problem} & \multicolumn{4}{c}{Work lower bounds} \\
                           & Preprocess & Update & Query & Conjecture \\
  \hline
  \multirow{4}{*}{\shortstack[l]{HAC (complete or\\ weighted average)}}
   & $\on{poly}(n)$ & $n^{1 - \eps}$ & $n^{1 - \eps}$ & SETH \\
   & $m^{1+\delta-\eps}$ & $m^{\delta - \eps}$ & $m^{2\delta - \eps}$ & Triangle \\
   & $m^{4/3-\eps}$ & $m^{\alpha - \eps}$ & $m^{2/3 - \alpha - \eps}$ & 3SUM \\
   & $\on{poly}(n)$ & $m^{1/2 - \eps}$ & $m^{1 - \eps}$ & OMv \\
   \hline
   \multirow{3}{*}{\shortstack[l]{\#HAC (complete or\\ weighted average)}}
   & $\on{poly}(n)$ & $n^{1 - \eps}$ & $n^{1 - \eps}$ & SETH \\
   & $m^{1+\delta-\eps}$ & $m^{\delta - \eps}$ & $m^{2\delta - \eps}$ & Triangle \\
   & $\on{poly}(n)$ & $m^{1/2 - \eps}$ & $m^{1 - \eps}$ & OMv \\
   \hline
  \multirow{2}{*}{HAC (average)} & $\on{poly}(n)$ & $n^{1-5c-\eps}$ & $n^{1-5c-\eps}$ & SETH \\
   & $\poly{n}$ & $n^{(1-c)/6 - \eps}$ & $n^{(1-c)/3 - \eps}$ & OMv \\
   \hline
  \multirowcell{2}[0pt][l]{Dec/Inc HAC\\(average)}
  & $\on{poly}(n)$ & $n^{(1-8c)/2-\eps}$ & $n^{(1-8c)/2-\eps}$ & SETH \\
  & & & &\\
   \hline
  \multirow{3}{*}{\#HAC (average)}
   & $\on{poly}(n)$ & $n^{1-c-\eps}$ & $n^{1-c-\eps}$ & SETH \\
   & $m^{(1+\delta)(1-c)-\eps}$ & $m^{\delta(1-c) - \eps}$ & $m^{2\delta(1-c) - \eps}$ & Triangle \\
   & $\poly{n}$ & $m^{(1-c)/2 - \eps}$ & $m^{1-c - \eps}$ & OMv \\
   \hline
   \multirowcell{2}[0pt][l]{Dec/Inc \#HAC\\(average)}
  & $\on{poly}(n)$ & $n^{1-2c-\eps}$ & $n^{1-2c-\eps}$ & SETH \\
  & & & &\\
  \bottomrule
  \end{tabular}
  \caption{The table states asymptotic work bounds for dynamic graph HAC such that the listed
    conjecture (defined in \cref{sec:avw-lower-bounds}) would be falsified. For problems listed as ``HAC'', queries answer whether
    two query vertices are in the same cluster after agglomeratively clustering
    up to a query similarity threshold, and in ``\#HAC'', queries (given a query similarity threshold) answer with the number of clusters. The bounds allow $O(n^c)$-approximation for a
    constant $c \ge 0$. In the table, $\eps > 0$ is an arbitrarily small
    constant, $\alpha \in [1/6, 1/3]$, and $\delta > 0$ is some constant for
    which triangle detection takes $\Omega(m^{1+\delta-o(1)})$ work. 
    The same bounds also hold for partially dynamic algorithms except for the 
    average-linkage bounds based on SETH; 
    we list SETH-based partially dynamic average-linkage bounds 
    separately as ``Dec/Inc.''  
    The bounds are amortized for fully dynamic algorithms and worst-case for partially dynamic algorithms.
  }
  \label{tab:lower-bounds}
\end{table}

On the other hand, even under our restricted query model for dynamic HAC, we show polynomial conditional 
lower bounds on the work
of dynamic graph HAC for complete linkage, weighted average linkage, and average linkage, 
even with $n^{o(1)}$-approximation and even when restricted
to incremental or decremental algorithms. 
\Cref{tab:lower-bounds} summarizes our lower bounds. 
Our bounds build on past work 
showing that several dynamic problems have lower bounds conditional on
conjectures like the strong exponential time hypothesis
(SETH)~\cite{calabro2009complexity} via reductions~\cite{abboud2014popular,henzinger2015unifying}.

Our contributions are summarized as follows:
\begin{itemize}
  \item We parallelize a relative-error quantile summary data structure
    (see \aref{app:quantile-proofs-par}) and use it to solve parallel batch-decremental
    MSF in $O(\log^3 n \log k)$ span w.h.p.\ per batch of $k$ edge deletions and 
    $O(\log^3 n \log (1 + n/\Delta)) \le O(\log^4 n)$ expected amortized work per edge 
    where $\Delta$ is the average batch size across deletion operations 
    (\cref{sec:dec-msf}).
  \item We use batch-decremental MSF to solve parallel batch-dynamic MSF (and
    hence also parallel batch-dynamic single-linkage graph HAC) in $O(k
    \log^6 n)$ expected amortized work and $O(\log^3 n \log k)$ span w.h.p.\ on a
    batch of $k$ edge insertions or edge deletions (\cref{sec:dyn-msf}).
    These are the first decremental and fully dynamic MSF algorithms 
    achieving polylogarithmic work per update and
    polylogarithmic span per batch.
  \item We prove polynomial conditional work lower bounds for dynamic and
    partially dynamic graph HAC with complete linkage, weighted average linkage, and average 
    linkage (\cref{sec:hac}). For example, assuming the SETH, dynamic HAC takes
    $\Omega(n^{1-o(1)})$ per update or query for all of these linkage functions,
    even with $n^{o(1)}$-approximation.
\end{itemize}

\section{Related work}

\paragraph{Graph HAC} We use the definition of graph HAC by Dhulipala et al., who give
algorithms solving static graph HAC on $m$ edges and $n$ vertices in $O(m \log n)$
expected work for weighted-average-linkage HAC, $\tilde{O}(n\sqrt{m})$ work for
average-linkage HAC, and $O(m\log^2 n)$ work for approximate average-linkage
HAC~\cite{dhulipala2021hierarchical}. Older papers have also studied graph HAC 
but with weaker theoretical
guarantees~\cite{karypis1999chameleon,franti2006fast}.
Another line of work has developed the theoretical foundations of HAC by studying the objective function that it optimizes~\cite{Dasgupta2016,hac-reward,moseley-wang}.

\paragraph{Dynamic HAC} There is no prior work on dynamic HAC with good approximation or running time guarantees. Menon et
al.\ give an online approximate algorithm for point-based dynamic HAC~\cite{menon2019online}. 
Their algorithm does not have rigorous bounds on approximation quality
or worst-case running time. Other online clustering
algorithms like Perch~\cite{kobren2017hierarchical} and Grinch~\cite{monath2019scalable}
neither compute the same output as HAC nor approximate HAC in a provably efficient way. 

\paragraph{HAC lower bounds} Point-based HAC in Euclidean space is at least
as hard as finding the closest pair of points. Karthik and Manurangsi show
that, assuming the SETH, closest-pair in dimension
$\omega(\on{polylog}(n))$ requires $\Omega(n^{2-o(1)})$ work, and
$(1 + o(1))$-approximate closest-pair in dimension $\omega(\log n)$ requires
$\Omega(n^{1.5-o(1))})$ work. These lower bounds do not apply to graph HAC.

\paragraph{Dynamic MSF} In this paper, we focus on edge
insertions and deletions. For sequential dynamic MSF, Holm et al.\ give an
algorithm with $O(\log^4 n)$ amortized work per edge update, which was later
improved to $O(\log^4 n / \log \log n)$ amortized work per
update~\cite{holm2001poly,holm2015faster}. The best worst-case bound is
$O(n^{o(1)})$ work per update~\cite{nanongkai2017dynamic, chuzhoy2020deterministic}.

For parallel batched edge updates, Anderson et al.\ give an incremental MSF algorithm
that handles $k$ edge insertions in $O(k\log(1 + n/k))$ expected work and
$O(\log^2 n)$ span w.h.p.~\cite{anderson2020work}. Other existing algorithms are deterministic but
have much higher work bounds. Pawagi and Kaiser give an algorithm handling
insertions in $O(kn)$ work and $O(\log n \log k)$ span and deletions in
$O(n^2(1+ \log^2 k / \log n))$ work and $O(\log n + \log^2 k)$
span~\cite{pawagi1993optimal}. Shen and Liang give an algorithm handling
insertions and deletions in $O(n^2)$ work and $O(\log n \log k)$
span~\cite{shen1993parallel}. Ferragina and Luccio give an algorithm handling
insertions in $O(n \log \log \log n \log(m/n))$ work and $O(\log n)$ span and $k
= O(n)$ deletions in $O(kn \log \log \log n \log(m/n))$ work and $O(\log n
\log(m/n))$ span~\cite{ferragina1994batch}.

For parallel single edge updates, Kopelowitz et al.\ give an algorithm running
in $O(\sqrt{n} \log n)$ work and $O(\log n)$ span per
update~\cite{kopelowitz2018improved}. There are also many algorithms for the
harder problem of dynamic vertex updates, all of which cost $\Omega(n)$ work per
update. We refer the reader to Das and Ferragina's survey for an overview of algorithms for vertex
updates as well as for edge updates~\cite{das1995parallel}.

\section{Preliminaries}

\paragraph{Graph HAC}
We denote a graph by $G=(V,E)$. Graphs are undirected and simple unless
noted otherwise. For weighted graphs, we denote a weight or similarity of an edge
$\set{x, y}$ either by writing $w(x, y)$ where $w: E \to \R$ is a weight
function or by placing weight $w \in \R$ in a tuple $(\set{u, v}, w)$.  We
often denote $n = \abs{V}$ and $m = \abs{E}$. In our asymptotic bounds, we
assume $m = \Omega(n)$.  We denote the neighbors of $v \in V$ as $N(v)$. We
write $\on{Cut}(X, Y)$ to denote the set of edges between two sets of vertices
$X$ and $Y$. 

In graph HAC, we are
given a weighted undirected graph and a \emph{linkage function}
specifying the similarities between clusters. Each vertex starts in its own
cluster, and we compute a hierarchical clustering by repeatedly merging the two
most similar clusters, i.e., the clusters connected by the highest-weight edge. 

In \emph{single linkage}, the similarity $\mathcal{W}(X, Y)$ between two
clusters $X$ and $Y$ is the maximum similarity between two vertices in $X$
and $Y$, i.e., $\max_{(x,y) \in \on{Cut}(X, Y)} w(x,y)$, whereas \emph{complete
linkage} takes the minimum similarity. In \emph{average linkage} (which still applies to 
weighted graphs), the similarity is $\sum_{(x,y) \in
\on{Cut}(X,Y)} w(x,y) / (\abs{X}\abs{Y})$. In \emph{weighted average linkage}, if a cluster $Z$ is formed by merging clusters $X$ and $Y$,
then the similarity between $Z$ and an adjacent cluster $U$ is $(\mathcal{W}(X,
U) + \mathcal{W}(Y, U)) / 2$ if edges $(X, U)$ and $(Y, U)$ both exist and is
otherwise the weight of the existing edge.

\paragraph{Parallel model.}
We use the work-span model with arbitrary forking, a closely related
model to the classic CRCW PRAM model~\cite{jaja1992introduction,blelloch1996programming}.
Running time bounds are in terms of \emph{work} and \emph{span} (depth). The work of an algorithm is
the total number of instructions, and the span
is the length of the longest chain of sequentially dependent instructions.
We assume that concurrent reads and writes are supported in $O(1)$ work and span.
Procedures can fork other procedure calls
to run in parallel and can wait for forked calls to finish. 

\paragraph{Parallel primitives.}
We use several existing parallel primitives in our algorithms.
Unordered sets can be stored in parallel dictionaries using linear space and handling batch
insertions or deletions of $k$ elements in $O(k)$ work and $O(\log^* k)$ span
w.h.p.~\cite{gil1991towards}. Lookup costs $O(1)$ work w.h.p. Ordered sets can be
stored in search trees called P-trees~\cite{blelloch2016just, sun2019join}.
Finding an element by rank or splitting a P-tree of $n$ elements takes $O(\log n)$ work~\cite{sun2018pam}. 
Inserting or deleting $k$ elements takes $O(k \log (1 + n/k))$ work and $O(\log n \log k)$ span~\cite{sun2019join}. A \emph{semisort}, taking an array of $n$ keyed elements and reordering
them so that elements with equal keys are contiguous,
can be computed in $O(n)$ expected work and $O(\log n)$ span w.h.p.~\cite{gu2015top}.
A \emph{minimum spanning forest (MSF)} is a spanning forest of minimum weight.
An MSF on $n$ vertices and $m$ edges can be statically
computed in $O(m)$ expected work and $O(\log n)$ span
w.h.p.~\cite{pettie2002randomized}.

\section{Parallel decremental MSF}\label{sec:dec-msf}

This section will show how to perform parallel batch-decremental MSF (supporting batches of edge deletions), and \cref{sec:dyn-msf} will show how to perform parallel batch-dynamic MSF.
We accomplish this by parallelizing the sequential dynamic
MSF algorithm by Holm, De Lichtenberg, and Thorup (HDT) that runs in $O(\log^4
n)$ amortized work per update~\cite{holm2001poly}. Their MSF algorithm has three
steps: first, they give an algorithm for dynamic connectivity; second, they
modify that algorithm into an algorithm for decremental MSF (parallelized in this section); and third, they use
decremental MSF to create a fully dynamic MSF algorithm (parallelized in \cref{sec:dyn-msf}). 
Without loss of generality, when discussing MSF, we assume edge
weights are unique by breaking ties using lexicographic ordering over edges'
endpoints.

The relevance of dynamic MSF to dynamic graph HAC is that single-linkage graph HAC 
can be solved with Kruskal's algorithm for computing a MSF after
negating all edge weights~\cite{gower1969minimum}.
A complication is that although the canonical output for HAC is a dendrogram, 
explicitly representing the dendrogram is too expensive for dynamic HAC since an edge update
can drastically change the dendrogram's structure (see \aref{app:dendro-single} for examples).
Instead, we implicitly represent the dendrogram by dynamically maintaining the MSF 
for the clustering.
We can then extract information about the single-linkage clustering from the MSF.
For instance, suppose that we want to answer the following ``group-by-cluster'' query,
a generalization of the type of query discussed in \cref{sec:intro}:
given a query set of $k$ vertices  $K \subseteq V$, we want to partition $K$ by the cluster 
that each vertex would be in if agglomerative clustering were run until all
similarities fell below a query similarity threshold $\theta$. We can answer such queries
in $O(k\log (1 + n/k))$ expected work and $O(\log n)$ span w.h.p.\ by
storing the MSF in a rake-compress (RC) tree,
computing a compressed path tree $P$ on the MSF relative to $K$
(\cref{sec:dyn-msf-bg} describes RC trees and compressed path trees), 
removing all edges with similarities below $\theta$ from $P$,
and computing connected components on $P$. 

\subsection{Background}

We first discuss existing algorithms and data structures that our work builds upon.

\paragraph{Euler tour trees.}
Euler tour trees (ETTs) are a
data structure for dynamic forests supporting edge insertion, edge
deletion, and connectivity queries in $O(\log n)$ deterministic work~\cite{henzinger1995randomized,miltersen1994complexity}. Tseng et al.\ introduce a
parallel batch-dynamic ETT that internally represents each tree in
in the forest as a circular skip list containing the tree's 
vertices and edges~\cite{tseng2019batch}.
%
The ETT can be augmented by a combining function $f: D \times D \to
D$, with $D$ being an arbitrary domain. After assigning values from $D$ to vertices
and edges, we can maintain the sum of $f$ over each tree (i.e., each connected component) in the forest 
by having each skip list node store the sum of $f$ over a contiguous subsequence 
of the sequence represented by the node's skip list.
Given an augmentation
function $f$ that takes $O(W)$ work and $O(S)$ span to compute, a batch of
$k$ insertions, $k$ deletions, or $k$ updates to assigned values for the augmentation
takes $O(Wk\log(1 + n/k))$ expected work and $O(S\log n)$ span w.h.p.\ on an
$n$-vertex forest.
The randomness in the bounds holds against oblivious adversaries
who cannot see heights of skip list elements.

\paragraph{Sequential dynamic connectivity.}
\label{sec:conn}
The HDT connectivity algorithm maintains a graph $G$ of $n$ vertices and
supports edge insertion, edge deletion, and connectivity queries.
The algorithm maintains $\log n$ \emph{levels}. Each edge is assigned a level,
and the algorithm maintains subgraphs $G_1 \subseteq G_2 \subseteq \ldots
\subseteq G_{\log n} = G$, where $G_i$ contains all edges of level at least $i$
and has the invariant that each connected component has size at most
$2^i$. The algorithm also maintains spanning forests $F_1 \subseteq F_2 \subseteq
\ldots \subseteq F_{\log n}$, where $F_i$ is a spanning forest of $G_i$. 
Connectivity queries are answered in $O(\log n)$ work by storing $F_{\log n}$ in
an ETT and querying the ETT. 
An edge insertion is handled by assigning the edge to level $\log n$ and inserting
it into $F_{\log n}$ if the edge's endpoints are not connected by a path.

A deletion of an edge $e$ of level $\ell$ is handled by deleting it from $G_i$ for
all $i \ge \ell$. If $e$ is not in $F_{\log n}$, then we are done. Otherwise,
the deletion of $e$ splits a connected component in $F_{\log n}$ in two, and we must
search for a \emph{replacement edge} reconnecting the two components. We
delete $e$ from $F_i$ for $i \ge \ell$ and conduct our search starting on level $\ell$. We
look at the smaller of the two connected components formed in $F_i$ by the
deletion of $e$. This connected component has size at most $2^{i-1}$, and so pushing
this entire component to level $i-1$ would not violate the size invariant. We
push all level-$i$ tree edges in the component to level $i-1$. Then,
we look at non-tree edges incident to the component one-by-one. If the non-tree
edge reconnects the two components, then we have found a replacement
edge---we change that edge into a tree edge, and we are done. Otherwise, we amortize
the cost of looking at this non-replacement edge by pushing it to level $i-1$.
We repeatedly run this search on increasing values of $i$ until a replacement edge is found.

If each $F_i$ is stored in an appropriately augmented ETT, then insertions and
deletions cost $O(\log^2 n)$ amortized work since each inserted edge can be
pushed down at most $\log n$ levels and it costs $O(\log n)$ work to find and push an
edge one level using the ETTs.


\paragraph{Parallel batch-dynamic connectivity.}
Acar et al.\ developed a parallel batch-dynamic version of the HDT algorithm~\cite{acar2019parallel}. We describe the ``non-interleaved''
version of their algorithm because we will modify it into a decremental MSF algorithm in \cref{sec:dec-msf-desc}. 
(The interleaved version has a better span bound, but it seems harder to adapt for decremental MSF.) 


The main difference from the original HDT algorithm to discuss is 
how the batch-parallel algorithm finds replacement edges after deleting a batch of edges.
The replacement search begins on
the minimum level among the deleted edges.
When searching on a level $i$, the algorithm proceeds in $O(\log n)$ rounds.
For every component of size at most
most $2^{i-1}$, we search for a replacement edge out of that component. To achieve
low span, each component performs a doubling search, looking at $2^j$
incident level-$i$ non-tree edges in parallel for increasing $j$ until finding a
replacement edge. We push non-replacement edges to the next level to amortize the cost
of examining them. We then compute a spanning tree over the replacement edges,
keeping only the replacement edges that are in the spanning tree. We
proceed to the next round on each ``active'' component, i.e., each component
that still has incident edges to search and that still has size at most
$2^{i-1}$. After all the rounds, we repeat at higher levels.

By storing each spanning forest in an ETT with appropriate augmentations, 
the algorithm can process a batch of $k$ edge updates in $O(k \log^2 n)$ expected amortized work and 
$O(\log^4 n)$ span w.h.p.
We note that the span bound can be tightened to $O(\log^3 n \log k)$. The bound has a $\log n$ term from the $O(\log n)$ rounds  per level, but since $k$ deletions creates $O(k)$ active
components and the active component count decreases geometrically each round, 
there are only $O(\log k)$ rounds per level.

\paragraph{Sequential decremental MSF} 
The HDT decremental MSF algorithm is initialized with a graph $G$ of $n$
vertices and maintains the MSF of $G$ while supporting edge deletions. There are
only two changes to the algorithm compared to the sequential HDT connectivity algorithm: we
initialize $F_{\log n}$ to be the MSF over $G$, and when we perform a replacement
search out of a component, we find the \emph{lightest} replacement edge by looking at
incident non-tree edges in increasing weight. Deletions still cost $O(\log^2 n)$
amortized work.

For correctness, the lightest replacement edge for a deleted edge must have the
minimum level among all possible replacement edges.  Holm et al.\ prove that the
algorithm maintains a \emph{cycle invariant} implying correctness: in every
cycle of $G$, the maximum-weight edge in the cycle is a non-tree edge and has
maximum level in the cycle. This invariant holds so long as whenever we push an
incident level-$i$ non-tree edge $e$ of a component to level $i-1$, $e$ is
lighter than the lightest replacement edge out of the component and we have already pushed
all lighter level-$i$ edges incident to this component.


\paragraph{Relative quantile summaries}
\label{sec:quantile}
Consider a set $S$ that is a subset of a totally ordered universe $U$. For an element $y \in U$, define
the rank of $y$ to be the number of elements in $S$ no greater than $y$:
$\rank(y; S) = \abs{\set{x \in S \mid x \le y}}$.
\noconf{We omit the second argument
of $\rank(\cdot;\cdot)$ when it is clear from context.}

For $\eps \in (0, 1)$ and a set $S$, an $\eps$-approximate relative
quantile summary $Q$ is a compressed form of the set that can compute queries
of the following form: given a rank $r \in [1, \abs{S}]$ such that $[r(1-\eps),
r(1+\eps)]$ contains an integer, return an element $y$ such that $\rank(y;S) \in
[r(1-\eps), r(1+\eps)]$. For the remainder of this paper, we use the
deterministic, mergeable relative quantile summaries
described by Zhang and Wang~\cite{zhang2007efficient}. \Aref{app:quantile-proofs-seq} re-derives the
construction of the summaries since Zhang and Wang's paper omits several proofs of correctness.

Additionally, we show in \aref{app:quantile-proofs-par} that we can parallelize operations on the
quantile summaries, which may be of independent interest. The following lemmas give the relevant operations and bounds.

\begin{lem}\label{lem:quantile-construct}
  Given a set $S$, if we can look up elements of $S$ by rank in $O(W)$ work, we
  can construct an $\ep$-approximate summary $Q$ of size 
  $\abs{Q} = O(\log(\ep \abs{S})/\ep)$
  in $O(W\log(\ep \abs{S})/\ep)$ work and $O(W)$ span.
\end{lem}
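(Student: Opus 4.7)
The plan is to exhibit $Q$ as a subset of $S$ consisting of the elements at a specific collection of $O(\log(\ep\abs{S})/\ep)$ ranks that is determined in advance from $\abs{S}$ and $\ep$ alone, and then argue that these elements can be fetched independently by rank, which makes parallelization immediate.

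First I would recall the sequential Zhang--Wang construction from \cref{app:quantile-proofs-seq}: the rank range $[1,\abs{S}]$ is partitioned into geometrically growing blocks, and within each block a fixed number of ``boundary'' ranks are chosen so that within any block the additive rank error is at most $\ep$ times the smallest rank in the block. This yields $O(\log(\ep\abs{S}))$ blocks, $O(1/\ep)$ summary ranks per block, and hence $\abs{Q}=O(\log(\ep\abs{S})/\ep)$ stored elements in total. Crucially, the list of summary ranks $r_1<r_2<\cdots<r_{\abs{Q}}$ is \emph{data-oblivious}: it depends only on $\abs{S}$ and $\ep$, not on the comparisons among elements of $S$. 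This is the single fact I would need to lift from the sequential construction.

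Given that, the parallel construction is straightforward. Compute the rank list $r_1,\ldots,r_{\abs{Q}}$; this takes $O(\abs{Q})$ work and $O(\log\abs{Q})$ span using standard parallel prefix, which is dominated by later terms. Then, in parallel over all $i\in[1,\abs{Q}]$, invoke the assumed rank-lookup primitive on $r_i$ at cost $O(W)$ per call, and store the result in position $i$ of an output array of size $\abs{Q}$. Since the $r_i$ are already in increasing order, no sort is needed. The total work is $\abs{Q}\cdot O(W)=O(W\log(\ep\abs{S})/\ep)$, and the span is $O(W)$ from a single rank lookup, since the $\abs{Q}$ lookups proceed fully in parallel.

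The only nontrivial point—not so much an obstacle as a bookkeeping check—is verifying the data-obliviousness claim for the rank list, so that we truly can issue all lookups in parallel rather than constructing the summary level by level. This follows directly from the fact that the Zhang--Wang block boundaries and intra-block sample positions are defined by geometric sequences in the rank domain; I would simply transcribe those formulas from \cref{app:quantile-proofs-seq} and observe that evaluating them requires no access to $S$.
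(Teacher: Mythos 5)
Your proposal is correct and matches the paper's proof, which likewise observes that the Zhang--Wang construction of \cref{lem:quantile-construct-seq} amounts to fetching $O(\log(\ep\abs{S})/\ep)$ elements at data-oblivious ranks (a fixed prefix of ranks below $1/\ep$, then $1/\ep$ evenly spaced ranks in each geometric interval $[2^{i-1}/\ep, 2^i/\ep)$, plus the maximum), so all lookups run in parallel for $O(W\abs{Q})$ work and $O(W)$ span. The only cosmetic difference is that you invoke a parallel prefix to generate the rank list, which is unnecessary since each rank is given by a closed-form expression, but this does not affect the bounds.
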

\begin{lem}\label{lem:quantile-query}
  Given an approximate summary $Q$ on set $S$, we can answer queries in $O(\log
  \abs{Q})$ work and can obtain the minimum element of $S$ in constant work.
\end{lem}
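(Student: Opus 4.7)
The plan is to store $Q$ as a sorted array of tuples $(v_1, g_1, \Delta_1), \ldots, (v_{\abs{Q}}, g_{\abs{Q}}, \Delta_{\abs{Q}})$, where $v_1 < v_2 < \cdots < v_{\abs{Q}}$ are the retained elements of $S$, $g_i = \rmin(v_i) - \rmin(v_{i-1})$ encodes the gap in minimum possible rank between consecutive retained elements, and $\Delta_i = \rmax(v_i) - \rmin(v_i)$ bounds the uncertainty in $v_i$'s rank. Alongside the tuples I would maintain a prefix-sum array of the $g_i$ values, so that $\rmin(v_i) = \sum_{j \le i} g_j$ (and hence $\rmax(v_i) = \rmin(v_i) + \Delta_i$) can be read in $O(1)$ work given an index $i$.

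For the rank-query operation, given a queried rank $r$, I would binary-search over $i \in [1, \abs{Q}]$ for the first index at which the upper bound $\rmin(v_i) + \Delta_i$ strictly exceeds $r(1 - \ep)$ and return $v_i$. Monotonicity of $\rmin(v_i)$ in $i$ makes this a standard binary search on a monotone predicate, costing $O(\log \abs{Q})$ comparisons; each comparison is an $O(1)$ prefix-sum lookup, giving $O(\log \abs{Q})$ work total. Correctness of the approximation, namely that the returned $v_i$ satisfies $\rank(v_i;S) \in [r(1-\ep), r(1+\ep)]$, then follows from the summary's invariant $g_i + \Delta_i \le 2\ep \cdot \rmin(v_i)$, which is established in \cref{app:quantile-proofs-seq} and is preserved by every construction, merge, and compress step.

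For the minimum operation, I would simply return $v_1$ in $O(1)$ work. This is correct because the construction of $Q$ together with each merge/compress step preserves the invariant that the true minimum of $S$ is retained exactly as $v_1$, with $g_1 = 1$ and $\Delta_1 = 0$; this invariant is recorded alongside the other structural invariants of the summary in the preceding section.

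The main subtlety, and the reason this lemma is not entirely mechanical, is confirming that the one-sided binary-search criterion actually lands on an index whose true rank lies in $[r(1-\ep), r(1+\ep)]$. The argument amounts to chaining the bracketing inequalities $\rmin(v_i) \le \rank(v_i; S) \le \rmin(v_i) + \Delta_i$ with the invariant $g_i + \Delta_i \le 2\ep \rmin(v_i)$ on both the returned index $i$ and the predecessor index $i-1$, which I expect to take only a few lines once the invariant from \cref{app:quantile-proofs-seq} is in hand. No additional data-structural machinery is needed beyond the sorted array with prefix sums.
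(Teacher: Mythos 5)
Your overall plan matches the paper's: store the summary so that $\rmin$ and $\rmax$ can be read off in $O(1)$ per index, answer a rank query with one binary search over a monotone predicate ($O(\log\abs{Q})$ work), and return the first stored element for the minimum, which is correct because every construction, merge, and prune step keeps $\min(S)$ as $q_1$ with exact rank. The choice of a Greenwald--Khanna-style $(g_i,\Delta_i)$ encoding with prefix sums versus the paper's direct $\rmin/\rmax$ arrays is immaterial. However, your specific search predicate is wrong. You return the \emph{first} index $i$ with $\rmax(v_i) > r(1-\eps)$. This pins down the \emph{predecessor} ($\rmax(v_{i-1})\le r(1-\eps)$, hence $\rmin(v_{i-1})\le r(1-\eps)$), which together with \cref{eq:quantile-constraint} gives the upper bound $\rank(v_i)\le\rmax(v_i)\le\rmin(v_{i-1})+2\eps r\le r(1+\eps)$ --- but nothing forces $\rmin(v_i)\ge r(1-\eps)$, so the true rank of the returned element can fall below the window. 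Concretely, with $\eps=1/2$ and $r=100$ (target window $[50,150]$), a summary containing consecutive entries with $(\rmin,\rmax)=(20,25)$ and $(30,55)$ satisfies \cref{eq:quantile-constraint} (since $55-20\le 2\cdot 20$), your search returns the second entry, and its true rank may be $30\notin[50,150]$. Your own stated invariant $g_i+\Delta_i\le 2\eps\rmin(v_i)$ does not rescue this either: it only yields $\rmin(v_i)\ge\rmax(v_i)/(1+2\eps)>r(1-\eps)/(1+2\eps)$, which is weaker than $r(1-\eps)$.

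The fix is to make the predicate and the bound it certifies point the same way: either search for the \emph{largest} $i$ with $\rmax(v_i)\le r(1+\eps)$ (this is \cref{alg:quantile-query}; the successor then forces $\rmin(v_i)\ge r(1-\eps)$ by the contradiction argument in the proof of \cref{lem:quantile-constraint}), or symmetrically for the \emph{smallest} $i$ with $\rmin(v_i)\ge r(1-\eps)$ (the predecessor then forces $\rmax(v_i)\le r(1+\eps)$). Either way you also need the $\max\set{\cdot,1}$ branch of \cref{eq:quantile-constraint}, which is handled by the hypothesis that $[r(1-\eps),r(1+\eps)]$ contains an integer; your proposal drops this case. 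Note also that in the paper this correctness argument lives entirely in \cref{lem:quantile-constraint}, so the proof of \cref{lem:quantile-query} itself only needs to observe that the binary search costs $O(\log\abs{Q})$ and that $q_1=\min(S)$; re-deriving the query criterion here is where the error crept in.
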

\begin{lem}\label{lem:quantile-combine}
  Given an integer $b > 0$ and two $\ep$-approximate summaries $Q_1$ and $Q_2$ on
  non-overlapping sets $S_1$ and $S_2$, we can
  create an $(\ep + 1/b)$-approximate summary $Q$ over $S_1 \cup S_2$ of size
  $O(b \log (n/b))$ in $O(\abs{Q_1} +
  \abs{Q_2})$ work and $O(\log(\abs{Q_1} + \abs{Q_2}))$ span.
\end{lem}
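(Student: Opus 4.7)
The plan is to parallel-merge the two input summaries as sorted sequences of elements, and then parallel-compress the merged sequence down to the target size $O(b \log(n/b))$.

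First, I treat each $Q_i$ as a sorted sequence (by element value) in which every stored $x$ carries interval bounds $(\rmin(x), \rmax(x))$ certifying $\rank(x; S_i) \in [\rmin(x), \rmax(x)]$ with $\rmax(x) - \rmin(x) \le \ep \cdot \rank(x; S_i)$. I merge $Q_1$ and $Q_2$ into one sorted sequence $Q'$ using the standard rank-based parallel merge of two sorted lists, which costs $O(\abs{Q_1} + \abs{Q_2})$ work and $O(\log(\abs{Q_1} + \abs{Q_2}))$ span. During the merge, for each element $x$ (say, originating in $Q_1$), the merge also locates $x$'s predecessor $y$ in $Q_2$, and I update $x$'s rank bounds to bounds on $\rank(x; S_1 \cup S_2)$ by adding the $(\rmin, \rmax)$ pair carried by $y$ in $Q_2$. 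A short case analysis on which summary contributed each element shows that $Q'$ is an $\ep$-approximate summary on $S_1 \cup S_2$: the additive width of the new rank bounds at $x$ remains within $\ep$ times the new rank, because both component widths scale with their respective ranks and the ranks sum.

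Second, I compress $Q'$ down to size $O(b \log(n/b))$. I assign each element $x \in Q'$ a band index computed from its updated $\rmax(x)$: singleton bands for $\rmax(x) \le b$, and geometric bands $[(1+1/b)^j \cdot b,\ (1+1/b)^{j+1} \cdot b)$ for larger ranks, yielding $O(b \log(n/b))$ bands total. I then semisort the elements by band index in $O(\abs{Q'})$ expected work and $O(\log \abs{Q'})$ span, and from each non-empty band I keep a single representative (for concreteness, the element of largest $\rmax$) using a parallel segmented reduction. The result $Q$ has one element per non-empty band, so $\abs{Q} = O(b \log(n/b))$.

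The main obstacle is the error analysis for the compression step. For any query rank $r$, the element returned by $Q$ differs from the element that $Q'$ would have returned only by swapping within a single band; since any two elements in the same band have true ranks in a ratio of at most $(1 + 1/b)$, this introduces at most $1/b$ additional relative error. Composing with the $\ep$ relative error already carried by $Q'$ gives the claimed $(\ep + 1/b)$ bound. The overall work is dominated by the parallel merge and the semisort, both $O(\abs{Q_1} + \abs{Q_2})$, and the overall span is $O(\log(\abs{Q_1} + \abs{Q_2}))$, as required.
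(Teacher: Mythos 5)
Your high-level architecture (parallel merge of the two summaries, then compression to size $O(b\log(n/b))$) matches the paper's, which merges via \cref{lem:quantile-merge-par} and then prunes via \cref{lem:quantile-prune-par}. However, both of your steps have concrete problems.

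First, the merge. When an element $x$ originates in $Q_1$, you form its new $\rmax$ by adding the $\rmax$ of $x$'s \emph{predecessor} $y$ in $Q_2$. That is not a valid upper bound on $\rank(x; S_1\cup S_2)$: the set $S_2$ may contain elements strictly between $y$ and $x$ that $Q_2$ does not store, so the number of elements of $S_2$ at most $x$ can exceed $\rmax(y;Q_2)$. The correct update (as in \cref{lem:quantile-merge}) uses the \emph{successor} $t$ of $x$ in $Q_2$ and sets $\rmax(x;Q)=\rmax(x;Q_1)+\rmax(t;Q_2)-1$, falling back to the predecessor only when no successor exists; only $\rmin$ is computed from the predecessor.

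Second, and more fundamentally, you maintain the wrong invariant. You define an $\ep$-approximate summary by the per-element condition $\rmax(x)-\rmin(x)\le\ep\cdot\rank(x)$, but that is neither what the query procedure relies on nor sufficient to answer queries: a summary storing only $\min(S)$ and $\max(S)$ with exact ranks satisfies it vacuously yet cannot answer queries for intermediate ranks. The property that must be preserved is the gap condition between \emph{consecutive} stored elements, $\rmax(q_{i+1})-\rmin(q_i)\le\max\set{2\ep\rmin(q_i)/(1-\ep),1}$ (\cref{eq:quantile-constraint}), which is what guarantees that some stored element's rank interval lands inside the query window. Your compression analysis (``the answer only moves within one band, and ranks within a band differ by a factor $1+1/b$'') does not establish this condition: the bands are defined by $\rmax$ values rather than true ranks, so two elements in one band need not have true ranks within a $(1+1/b)$ factor, and when a band is empty the surviving representatives of the neighboring non-empty bands can be separated by a gap exceeding what the $(\ep+1/b)$ constraint permits; nothing in your argument rules this out. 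The paper's prune sidesteps this by issuing rank queries at a fixed grid of ranks and keeping the answers, so the query guarantee $[\rmin,\rmax]\subseteq[r(1-\ep),r(1+\ep)]$ directly yields the gap bound for the pruned summary. To repair your proof you would need to either adopt that scheme or explicitly verify the gap condition for your band representatives, including across empty bands. (Minor: the semisort is unnecessary, since band indices are monotone in the already-sorted $Q'$ and a scan suffices, and using it would make the work bound hold only in expectation.)
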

We let $\combine(Q_1, Q_2, b)$ denote the algorithm combining summaries $Q_1$ and
$Q_2$ with parameter $b$.

\subsection{Finding light replacement edges}\label{sec:ett}

Like how the HDT decremental MSF algorithm comes from modifying the HDT dynamic connectivity algorithm,
we will obtain a parallel batch-decremental MSF algorithm by modifying Acar et al.'s batch-dynamic connectivity
algorithm to search for batches of light replacement edge candidates rather than arbitrary candidates.
The primary challenge is searching for the lightest non-tree edges incident on a component efficiently in parallel.

As in Acar et al's connectivity algorithm,
for each HDT level $\ell \in [1, \log n]$, we store 
$F_\ell$ in an augmented batch-dynamic ETT.
Using quantile summaries, we add an additional augmentation, described
in the proof of the following theorem, that allows fetching the lightest
non-tree edges incident on a component at the cost of increasing the
running time of edge insertions and deletions for the ETTs. \Cref{sec:ett-aug-example} walks through an example of the additional augmentation.

\begin{theorem}\label{thm:ett-aug}
  Let $\ell \in [1, \log n]$, and for each vertex $v \in V$, let $N_{v,\ell}$
  represent the level-$\ell$ non-tree edges incident to $v$. Using $O(n \log^2n)$
  space w.h.p., we can support the following operations over $F_\ell$, with all
  work bounds being in expectation and all span bounds being w.h.p.:
  \begin{itemize}
    \item inserting or deleting $k$ edges to $F_\ell$ in $O(k \log^2n \log(1 + n/k))$
      work and $O(\log n \log \log n)$ span,
    \item inserting or deleting $k$ edges to $\set{N_{v, \ell}}_{v \in V}$ in $O(k \log^2n \log(1 + n/k))$
      work and $O(\log n (\log \log n + \log k))$ span,
    \item obtaining the $k(1 \pm 1/2)$ lightest edges in
      $\bigcup_{v \in C} N_{v, \ell}$ of a connected component $C$ of $F_\ell$
      in $O(k \log n)$ work and $O(\log n)$ span.
  \end{itemize}
\end{theorem}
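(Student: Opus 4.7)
\textbf{The plan is} to augment the batch-dynamic ETT of Tseng et al.\ using the relative quantile summaries of Lemmas~\ref{lem:quantile-construct}--\ref{lem:quantile-combine}. I store each $N_{v,\ell}$ as a sorted P-tree, designate one \emph{canonical occurrence} of each vertex among its $\deg(v)$ appearances in the Euler tour, and attach to every skip-list node a quantile summary of the non-tree edges belonging to canonical occurrences in its subsequence; internal-node summaries are built via \combine. Since the skip list has height $L = O(\log n)$ w.h.p.\ and each \combine\ inflates the error by $1/b$ (Lemma~\ref{lem:quantile-combine}), choosing $b = \Theta(\log n)$ at every level yields a top-level summary with error $\le 1/4$ and keeps every summary at size $O(b\log(n/b)) = O(\log^2 n)$. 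A single \combine\ then costs $W = O(\log^2 n)$ work and $S = O(\log \log n)$ span, and the total space is $O(n \log^2 n)$ w.h.p.

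\textbf{Edge updates to $F_\ell$.} The augmented-ETT bound of Tseng et al.\ directly gives $O(Wk \log(1+n/k)) = O(k \log^2 n \log(1+n/k))$ expected work and $O(S \log n) = O(\log n \log \log n)$ span w.h.p.\ for a batch of $k$ tree-edge updates. For a batch of $k$ non-tree-edge updates, I first semisort the batch by endpoint vertex, batch-update each affected $N_{v,\ell}$ P-tree (this step contributes the $O(\log n \log k)$ term to the span), rebuild each affected canonical-occurrence leaf summary in $O(\log^2 n)$ work via Lemma~\ref{lem:quantile-construct}, and re-propagate summaries through the ETT mechanism, giving $O(k \log^2 n \log(1+n/k))$ expected work and $O(\log n (\log \log n + \log k))$ span w.h.p.

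\textbf{Query.} To return $k(1 \pm 1/2)$ of the lightest edges in a component $C$, I obtain $C$'s top-level summary $Q_C$ in $O(\log n)$ work (e.g., by maintaining $Q_C$ at a designated sentinel element of each ETT component as part of augmentation maintenance), query $Q_C$ for its rank-$k$ element via Lemma~\ref{lem:quantile-query} to get a threshold $y$ satisfying $k/2 \le \abs{\{e \in \bigcup_{v \in C} N_{v,\ell} : w(e) \le y\}} \le 3k/2$, and extract the edges via a parallel selective descent through $C$'s skip list: any subtree whose \emph{exact} augmented minimum exceeds $y$ is pruned (Lemma~\ref{lem:quantile-query} gives the minimum in $O(1)$), and at each reached canonical-occurrence leaf I pull the sorted prefix of $N_{v,\ell}$ of weight $\le y$ from the P-tree. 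Since only $O(k)$ leaves are relevant, the descent touches $O(k \log n)$ nodes total, the P-tree prefix extractions sum to $O(k \log n)$ work, and the parallel depth is $O(\log n)$.

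\textbf{Main obstacle.} The subtlest part is the selective descent. Correctness hinges on pruning with the \emph{exact} per-subtree minimum rather than a rank-based summary query, so the approximation in $y$ affects only the returned \emph{count} (which remains in $[k/2, 3k/2]$) and not which edges are returned. Maintaining the $O(\log n)$ span bound despite potentially many edges being extracted from a single $N_{v,\ell}$ requires using the leaf's own quantile summary to locally cap or refine the extraction so that each P-tree prefix read finishes in $O(\log n)$ span w.h.p.; reconciling this with the bookkeeping needed to keep the sentinel summaries current after concurrent updates is the principal implementation challenge.
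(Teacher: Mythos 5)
Your proposal is correct and follows essentially the same route as the paper: augment the batch-dynamic ETT skip-list nodes with \combine-merged relative quantile summaries built over per-vertex P-trees of incident non-tree edges, pay $W = O(\log^2 n)$ work and $S = O(\log\log n)$ span per combine to get the stated update bounds, and answer queries by a rank-$k$ threshold query followed by a minimum-pruned descent with P-tree splits at the leaves. The only substantive differences are that the paper uses a depth-dependent precision $b(t) = 8(\log n + t^2/\log n)$ so that the accumulated error stays below $1/2$ for \emph{every} skip-list configuration (making the $k(1 \pm 1/2)$ guarantee unconditional rather than only w.h.p.\ as with your fixed $b = \Theta(\log n)$), and that your ``main obstacle'' dissolves because a single $O(\log n)$-work split of the ordered set for $N_{v,\ell}$ at the threshold returns all qualifying edges without enumerating them, so no per-leaf capping or refinement is needed.
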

\begin{proof}
  We store $F_\ell$ in batch-dynamic ETTs. We will augment each ETT skip list
  node with a quantile summary $Q$ and an integer $t \ge 0$, where
  $Q$ summarizes the weights of non-tree edges 
  incident on the vertices in the node's subsequence and 
  $t$ indicates the error $Q$ has accumulated
  from $\combine(\cdot, \cdot, \cdot)$ operations.

  The augmented value for a vertex $v$ is given by constructing a $1/4$-approximate 
  quantile summary $Q$ over $N_{v,\ell}$
  and setting $t = 0$. Each vertex $v$
  stores $N_{v, \ell}$ in an ordered set. A weight-$w$ edge in $N_{v, \ell}$ to a neighbor
  $u$ is stored as a tuple $(w, v, u)$, and ordering is lexicographic.
  By \cref{lem:quantile-construct}, using a P-tree to represent the ordered set, we can construct $Q$
  in $O(\log^2 n)$ work and $O(\log n)$ span.

  Define $b(t) = 8(\log n + t^2/\log n)$ and
  define the ETT augmentation function $f$ as
  \begin{align*}
    f((Q_1, t_1), (Q_2, t_2)) &=
    \left(
      \combine(Q_1, Q_2, b(t)), t
    \right) \\
    \text{where } t &= \max\set{t_1, t_2} + 1
  .\end{align*}

  For a skip list node in the ETT whose subsequence has vertices $S$, its
  augmented value $Q$ summarizes the weights of $\bigcup_{v \in S}
  N_{v,\ell}$, and its $t$ is bounded by the longest search path length from that
  node to a node representing some $v \in S$ at the bottom level of the skip list. Since the maximum path length in a skip list is $O(\log n)$
  w.h.p.~\cite{demaine2004lecture}, we have $t = O(\log n)$ and $b(t) = O(\log n)$ for every summary w.h.p.
  By \cref{lem:quantile-combine}, the augmentation takes $O(\log^2 n)$ space per skip list node and runs in
  $O(\log^2 n)$ work and $O(\log \log n)$ span, all w.h.p. 
  
  Recall that given an augmentation function that costs $O(W)$ work and $O(S)$ span,
  a batch of $k$ updates to an ETT takes
  $O(Wk\log(1 + n/k))$ expected work and $O(S\log n)$ span w.h.p.
  Therefore, with our augmentation function, a batch of $k$ edge insertions or deletions to $F_\ell$ takes 
  $O(k \log^2n \log(1 + n/k))$ expected work and $O(\log n \log \log n)$ span w.h.p. 
  The cost of insertions or deletions to $\set{N_{v,\ell}}_{v \in V}$ also incurs the same cost
  in updating augmented values, but there is the additional cost of 
  having to actually update $\set{N_{v,\ell}}_{v \in V}$ and to rebuild the quantile summaries
  over $\set{N_{v,\ell}}_{v \in V}$.
  
  To update $\set{N_{v,\ell}}_{v \in V}$ with $k$ edges $U$,
  apply a semisort to group the edges by endpoint:
  let $r$ be the number of distinct endpoints in $U$, and
  let $K = \set{(v_1,
  E_1), \ldots, (v_k, E_r)}$ represent the semisorted updates, where
  $N_{v_i,\ell}$ should be updated with edges $E_i$ for each integer $i \in [1, r]$.
  Updating the ordered set for $N_{v_i,\ell}$ costs
  $O(\abs{E_i} \log(1 + \abs{N_{v_i,\ell}}/\abs{E_i}))$ work and $O(\log n \log k)$ span. 
  The sum of this work over all $v_i$ is $O(k \log
  n)$. We then rebuild the quantile summary for each $v_i$ from scratch via \cref{lem:quantile-construct}
  in $O(r \log^2 n)$ total work and $O(\log n)$ span. Adding the cost of updating augmented values gives a
  total expected work is $O(k \log^2n \log(1 + n/k))$ and 
  the total span is $O(\log n (\log \log n + \log k))$ w.h.p.

  The approximation error of a summary $Q$ in the ETT is bounded above by 
  \begin{align*}
    \frac{1}{4} + \sum_{t=1}^\infty \frac{1}{b(t)}
    =
    \frac{1}{4} +
      \sum_{t=1}^{\log n} \frac{1}{b(t)}
      + \sum_{t=\log n + 1}^{\infty} \frac{1}{b(t)}
    \\ <
    \frac{1}{4} +
      \sum_{t=1}^{\log n} \frac{1}{8\log n}
      + \sum_{t=\log n + 1}^{\infty} \frac{1}{8t^2/\log n}
    \\=
    \frac{1}{4} + \frac{1}{8}
      + \frac{\log n}{8} \sum_{t=\log n + 1}^{\infty} \frac{1}{t^2}
    <
    \frac{3}{8} +
      \frac{\log n}{8} \int_{t=\log n}^{\infty}\frac{1}{t^2}\,dt
     = \frac{1}{2}
  .\end{align*}
  That is, $Q$ is always a $1/2$-approximate quantile summary. We can therefore fetch
  the $k(1 \pm 1/2)$ lightest edges of $\bigcup_{v \in C} N_{v, \ell}$ for a
  connected component $C$ by querying the summary of component $C$ for a
  weight $w$ whose rank is $k$. Then, by checking whether the summaries'
  minimum element is less than $w$, we traverse down the skip list to efficiently
  find all vertices $v$ in $C$ such that $N_{v,\ell}$ has edges
  lighter than $w$ in $O(k \log(1 + n/k))$ expected work and $O(\log n)$ span
  w.h.p. We fetch those edges from each vertex $v$ by splitting the
  ordered set for $N_{v, \ell}$ in $O(\log n)$ work
  and $O(\log n)$ span per vertex for $O(k \log n)$ total work.
\end{proof}

\subsubsection{Example of augmentation.}\label{sec:ett-aug-example}

\begin{figure*}
    \centering
    \begin{subfigure}{0.2\linewidth}
      \centering
      \includegraphics[scale=0.3]{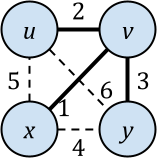}
      \caption{Graph and a spanning forest (in bold). 
      Each non-tree edge has the same HDT level $\ell$.}
    \end{subfigure}%
    \begin{subfigure}{0.8\linewidth}
        \centering
        \includegraphics[scale=0.3]{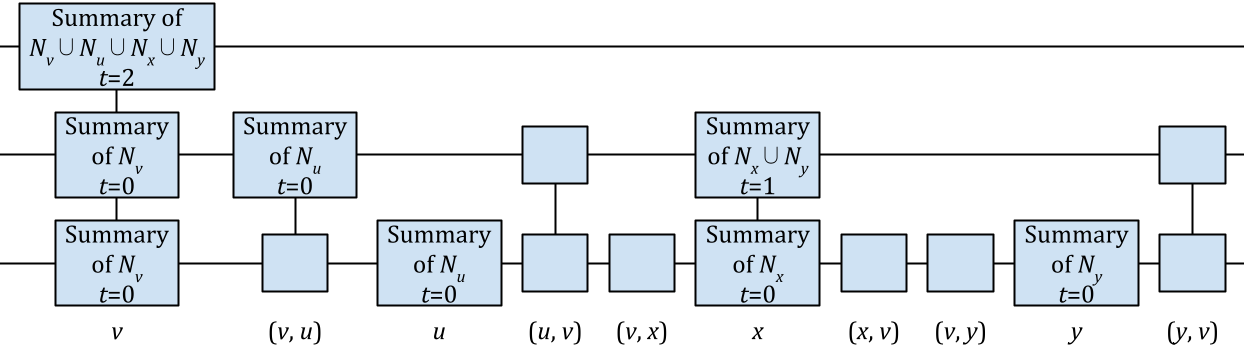}
        \caption{Skip list of the ETT for the graph.}
    \end{subfigure}
    \caption{The figure displays a graph and a possible height-$3$ skip list corresponding
      to an ETT representing the spanning forest of the graph. The skip list is circular, so the last
      node at each skip list level is connected to the first node at the same level.
      The labels below the skip list nodes at
      the bottom skip list level denote the vertex or edge that the node represents.
      At higher levels, the skip
      list nodes represent a contiguous subsequence of the overall sequence (e.g., the fourth
      node at the middle level represents $(x, (x,v), (v,y), y)$, and the node at the top level
      represents the whole sequence). 
      Following the augmentation of \cref{thm:ett-aug},
      each list node is augmented with a quantile summary
      for the vertices within its subsequence.
      In the figure, the labels inside each skip list node are the augmented values for each node. Some nodes contain no label because their subsequence contains no vertices.
    }
    \label{fig:ett-aug}
\end{figure*}

This subsection gives an example illustrating
the ETT augmentation from \cref{thm:ett-aug}.
\Cref{fig:ett-aug} displays an example graph $G = (V, E)$ with one connected component, a spanning tree for the graph, 
and a possible skip list internally held by an ETT representing the tree.
For simplicity, every non-tree edge in the graph has the same HDT level $\ell$, and we write $N_v$ instead of $N_{v,\ell}$ to denote the non-tree edges incident on a vertex $v$.

The skip list contains a sequence given by taking an Euler tour on the spanning tree after duplicating each edge in both directions and after adding a self-loop edge on every vertex.
At the bottom level of the skip list, each node representing some vertex $v \in V$ (i.e., representing the self-loop edge for vertex $v$)
constructs a quantile summary over $v$'s incident non-tree edges $N_v$. 
For instance, $N_u = \set{(5,u,x), (6,u,y)}$, and the skip list node for $u$ is augmented 
with a $1/4$-approximate quantile summary on $N_u$.

The quantile summaries at higher levels of the skip list are computed by 
calling the augmentation function $f$ to combine quantile summaries at the level below, ignoring
skip list nodes that correspond to edges and hence have no quantile summaries. For instance, the 
node at the top level of the skip list computes its quantile summary over $N_v \cup N_u \cup N_x \cup N_y$ 
by calling $f$ on the 
quantile summaries for $N_v$ and $N_u$ and then calling $f$ on the result of the previous call along with 
the quantile summary for $N_x \cup N_y$. Each quantile summary has an associated $t$ value
such that the summary is $(1/4 + \sum_{i=1}^t 1/b(i))$-approximate, where $b(\cdot)$ is defined in
the proof of \cref{thm:ett-aug}. The $\sum_{i=1}^t 1/b(i)$ term is accumulated from $\combine$ calls
used to compute $f$.
When taking a union of incident non-tree edges such as $N_x \cup N_y$, we may
have an edge twice, once in each direction, e.g., $N_x \cup N_y = \set{ 
  (4, x, y), (4, y, x), (5, x, u), (6, y, u)
}$ has the weight-$4$ edge $\set{x, y}$ twice. 
For our use case of decremental MSF (\cref{sec:dec-msf-desc}), 
the edge duplication does not affect correctness or running time complexity.

If there were multiple nodes at the top level of the skip list, we would also call $f$
to combine all of the quantile summaries at the top level to obtain a quantile summary for the entire
connected component. At an arbitrary top-level node, we would store a pointer to that quantile summary so that we can quickly
fetch a quantile summary for this connected component.

As an example of updating non-tree edges, suppose we wanted to delete an element from $N_y$. 
We perform the deletion on $N_y$, rebuild the quantile summary for $N_y$ entirely from
scratch, and then apply $f$ again to rebuild the quantile summaries for the fourth node at the middle
skip list level and for the node at the top level.

As an example of searching for light non-tree edges, suppose that we wanted to find the lightest $k(1 \pm 1/2)$
non-tree edges incident on this component for some $k$. Define the \emph{children} of a skip list node 
with associated subsequence $S$ to be the nodes at the level immediately below whose 
subsequences constitute $S$, e.g., the children of the fourth node at the 
middle level are the sixth through ninth (inclusive) nodes at the bottom level. 
We first query the quantile summary at the top level of the skip list for a rank-$k$ entry.
Suppose that it returned $(4,y,x)$. Then, we traverse down to the middle level to inspect the quantile
summaries of the node's children. At every quantile summary whose
minimum element is at most $(4,y,x)$ lexicographically, we traverse down to \emph{that} node's children.
In this example, only the fourth node at the middle level satisfies this condition.
We again check childrens' quantile summaries' minimum elements, and in this case, the
sixth and ninth nodes at the bottom level satisfy the condition. Since we have
reached the bottom of the list, we directly access $N_x$ and $N_y$ and fetch all elements
that are at most $(4, y, x)$.

\subsection{Parallel batch-decremental MSF}\label{sec:dec-msf-desc}

As with the sequential HDT decremental MSF algorithm,
two changes are needed to change Acar et al.'s batch-dynamic connectivity
algorithm into a batch-decremental MSF algorithm.
First, given an input graph $G$, we compute an MSF $F$
over $G$ and set $F_{\log n} = F$. The MSF for the graph will always
be $F_{\log n}$.

Second, when performing a doubling search out of a component to find a replacement edge,
instead of looking for $2^j$ arbitrary incident
non-tree edges on phase $j$ of a doubling search,
we use the ETT augmentation from \cref{thm:ett-aug}
to search for the $2^j(1 \pm 1/2)$ lightest incident non-tree edges. 
To maintain the HDT cycle invariant, we do not push any
edges on a doubling phase in which we find a replacement edge. In addition,
to reduce span, we defer pushing edges to the end of 
the entire replacement search on a level
rather
than pushing non-tree edges after every doubling phase.

\begin{theorem}\label{thm:dec-msf}
  We can initialize a
  batch-decremental MSF data structure in
  $O(\log^2 n)$ span w.h.p. The data structure supports
  batches of $k$ edge deletions in
  $O(\log^3 n \log k)$ span w.h.p.\ and uses $O(m + n\log^3 n)$
  space w.h.p. The total expected work across initialization and all deletions
  is $O(m \log^3 n \log (1 + n/\Delta))$, i.e., $O(\log^3 n \log (1 + n/\Delta)) \le O(\log^4 n)$ amortized per edge,
  where $\Delta$ is the average batch size across all batch deletions.
\end{theorem}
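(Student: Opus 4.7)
The plan is to establish correctness of the modified algorithm and then separately bound initialization, space, per-batch span, and total work. For correctness, I would verify Holm et al.'s cycle invariant. The key observation is that the quantile-based augmentation of \cref{thm:ett-aug} is deterministic and monotone in the query rank, so the sets $S_0 \subseteq S_1 \subseteq \cdots$ of level-$\ell$ non-tree edges examined by successive doubling phases out of a component $C$ form a nested chain of the (approximately) lightest incident edges. On an unsuccessful phase $j$ the algorithm pushes exactly $S_j \setminus S_{j-1}$; on the successful phase $j^*$ it selects the lightest edge in $S_{j^*} \setminus S_{j^*-1}$ as the replacement edge $e^*$ and pushes nothing. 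Since $S_{j^*-1}$ is a weight-prefix of incident level-$\ell$ edges and $e^* \notin S_{j^*-1}$, every pushed edge is lighter than $e^*$, and every lighter incident level-$\ell$ edge is pushed simultaneously, which is exactly the condition under which the HDT cycle invariant is preserved.

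For space, each of the $\log n$ augmented ETTs uses $O(n \log^2 n)$ w.h.p.\ by \cref{thm:ett-aug}, and each of the $m$ edges is stored once in an $N_{v,\ell}$ set, giving $O(m + n \log^3 n)$ total. For initialization, I would statically compute the MSF in $O(m)$ expected work and $O(\log n)$ span w.h.p., batch-insert its $n-1$ tree edges into $F_{\log n}$, and batch-insert the remaining non-tree edges into $\set{N_{v,1}}_{v \in V}$; by \cref{thm:ett-aug} the dominant span is $O(\log n(\log\log n + \log m)) = O(\log^2 n)$ w.h.p., and the initialization work is at most $O(m \log^3 n)$, which is absorbed into the $O(m \log^4 n)$ total.

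For the per-batch span, at each of the $O(\log n)$ levels the replacement search runs for $O(\log k)$ rounds (the active-component count starts at $O(k)$ and shrinks geometrically); within a round the doubling search is $O(\log n)$ sequential phases of $O(\log n)$ span each (one quantile fetch per phase), and the batched ETT and $N_{v,\ell}$ updates that apply the deferred pushes and install any replacement edge cost $O(\log^2 n)$ span by \cref{thm:ett-aug}. This gives $O(\log^2 n)$ span per round, $O(\log^2 n \log k)$ per level, and $O(\log^3 n \log k)$ total. For work, I would apply the standard HDT amortization: charge each doubling search to the non-tree edges it pushes, absorbing the successful phase's cost into the preceding unsuccessful phases via the geometric doubling of $\abs{S_j}$. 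Since each non-tree edge is pushed at most $\log n$ times in its lifetime and each push costs $O(\log^3 n)$ expected work by \cref{thm:ett-aug} with $k=1$, the total expected work is $O(m \log^4 n)$. The main obstacle is the correctness step, since the resource bounds follow from routine composition of \cref{thm:ett-aug} with the HDT analysis; the correctness argument itself reduces to confirming the nested-chain structure on the examined edge sets, after which the HDT cycle-invariant argument carries through essentially unchanged in the approximate-search setting.
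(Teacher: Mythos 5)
Your proposal is correct and takes essentially the same route as the paper: both are a delta analysis on Acar et al.'s batch-dynamic connectivity algorithm that invokes \cref{thm:ett-aug} for the augmented-ETT costs, charges the extra $O(\log^2 n)$ work factor to ETT insertions and edge pushes, keeps the span dominated by the doubling search over $O(\log n)$ levels and $O(\log k)$ rounds, and notes that the $(1\pm 1/2)$ slack and the withheld pushes on the successful phase only change the amortization by constant factors---your nested weight-prefix argument for the cycle invariant is in fact more explicit than the paper's, which leaves that to the algorithm description. Two trivial slips worth fixing: in the HDT scheme fresh non-tree edges are placed at level $\log n$ (not in $\set{N_{v,1}}$) and are pushed \emph{down}, and the replacement edge is the lightest \emph{reconnecting} edge in $S_{j^*}$ rather than the lightest element of $S_{j^*}\setminus S_{j^*-1}$, though it does lie in that difference set.
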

\begin{proof}
  For initialization, computing $F_{\log n}$ costs
  $O(m)$ expected work and $O(\log n)$ span w.h.p.
  Then, by \cref{thm:ett-aug}, 
  storing $F_{\log n}$ in an
  augmented ETT and updating the ETT with $O(m)$ incident non-tree edges
  costs $O(m\log^2 n)$ expected work and $O(\log^2 n)$ span w.h.p.

  The span to delete a batch of $k$ edges remains the same as Acar et al.'s
  algorithm. Despite the increase in span that our more complicated ETT
  augmentation incurs for insertions and pushing edges, the span is still
  dominated by the doubling search, whose span remains the same. 
  
  On the other hand, the work increases by a factor of $O(\log^2n)$ to a total of $O(\log^4
  n)$ expected amortized work per edge due to the increased work for ETT
  insertion and pushing. 
  The $(1 \pm 1/2)$ uncertainty in searching for incident
  non-tree edges may increase the amount of amortized cost to charge to each
  edge by a constant factor, but this does not affect the asymptotic bounds.
  Not pushing edges found on the last phase of a doubling search also
  only affects amortized costs by a constant factor. 
  Analysis in \aref{app:msf-work-bounds} gives the tighter 
  $O(\log^3 n \log (1 + n/\Delta)$ work bound.

  The maximum space usage is $O(n \log^2 n)$ w.h.p.\ for each of the $\log n$
  ETTs plus $O(m)$ total space to store the non-tree edges.
\end{proof}

\section{Parallel fully dynamic MSF}\label{sec:dyn-msf}

In this section, we describe
a parallel batch-dynamic MSF algorithm supporting both batch insertions and batch deletions of edges (\cref{sec:dyn-msf-desc}) and provide an example execution of the algorithm (\cref{sec:dyn-msf-example}).

\subsection{Background}\label{sec:dyn-msf-bg}

\paragraph{Compressed path trees}
Like ETTs, \emph{rake-compress (RC) trees} are a parallelizable data structure for dynamic
forests~\cite{acar2004dynamizing,acar2020parallel}. 
Inserting or deleting $k$ edges from an RC tree takes $O(k
\log (1 + n/k))$ expected work and $O(\log^2 n)$ span w.h.p.

Anderson et al. showed that if a dynamic forest $F$ of $n$ vertices is stored in
an RC tree, 
then given $k$ vertices in $F$ (``marked'' vertices),
we can construct a compressed form of $F$ called
a \emph{compressed path tree} relative to the vertices in $O(k\log(1+n/k))$ expected work
and $O(\log n)$ span w.h.p.~\cite{anderson2020work}. The compressed
path tree is a forest $F'$ on $O(k)$ vertices (including all marked vertices) such that the heaviest edge on the path between
any pair of marked vertices has the same weight in $F'$ as in $F$.
More specifically, the compressed path tree is the union of the paths between the
marked vertices with all non-marked vertices of degree below three spliced out.
Given $k$ edges in $F$, we can also find
which compressed edges in $F'$ correspond to those edges in $O(k
\log (1 + n/k))$ expected work and $O(\log n)$ span w.h.p.\ by traversing up the
RC tree that generated $F'$.

\paragraph{Dynamic MSF}
The fully dynamic HDT MSF algorithm supports both insertions and
deletions with $O(\log^4 n)$ amortized work per update. In describing this
algorithm, we follow the presentation of Holm, Rotenberg, and
Wulff--Nilsen~\cite{holm2015faster}.

Along with maintaining the MSF $F$ of the graph $G$ (the \emph{global} tree
and graph), the algorithm maintains $2\log n + 1$ subgraphs $A_0,\ab A_1,\ab
\ldots,\ab A_{2 \log n} \subseteq G$ and MSFs $F_0,\ldots,F_{2 \log n}$ of each subgraph (the
\emph{local} graphs and trees). Each
$A_i$ has at most $2^i$ non-tree edges $A_i \setminus F_i$ (the \emph{edge-count invariant}), and
each non-tree edge in $G$ is a non-tree edge of some $A_i$ (the \emph{non-tree-edge invariant}). We maintain
decremental MSF data structures over each local graph.

To insert an edge $e = \set{u, v}$, we query for the heaviest
edge $e'$ on the path between $u$ and $v$ in $F$ by storing $F$ in a top
tree~\cite{alstrup1997minimizing}. We replace $e'$
with $e$ in $F$ if $e$ is lighter. Either $e$ or $e'$ now becomes a new non-tree
edge. To make the non-tree-edge invariant hold for the edge, we call
\tsc{Update}, a subroutine that we describe shortly below, on the edge to insert it
as a local non-tree edge.

To delete an edge $e$, we delete $e$ from all local graphs
and obtain a set of $O(\log n)$ local replacement edges $R$. 
If $e$ is in $F$, we delete it from $F$ and need a global replacement edge.
Due to the non-tree-edge invariant,
the lightest edge $r$ in $R$ reconnecting $F$ is the global replacement edge.
We insert that edge into the global tree. Since edges in $R$ (besides $r$) are global
non-tree edges that might now violate the non-tree edge invariant, we call
\tsc{Update} on $R$.

The \tsc{Update} subroutine with input $U$ inserts the edges
in $U$ as local non-tree edges. It re-initializes $A_j$ to be $F \cup U \cup
\bigcup_{i \le j}(A_i \setminus F_i)$, with $j$ being the minimal value such that this
reinitialization respects the edge-count invariant. The new local tree edges for $A_j$ are the edges in $F$,
and the other edges become local non-tree edges. The subroutine then clears $A_i$ for all $i < j$.

The number of tree edges in each $A_i$ may be large, and so we only store them
in compressed form. When initializing $A_i$, we use a top tree to efficiently compute a structure similar
to a compressed path tree. Initializing and storing $A_i$ then takes only $O(2^i
\log n)$ work and $O(2^i)$ space, and initializing a decremental MSF over $A_i$
costs $O(2^i \log^2 2^i)$ amortized work.

To analyze the work, in \tsc{Update}, the choice of $j$ means that there
are at least $2^{j-1}$ non-tree local edges $U \cup \bigcup_{i < j}(A_i
\setminus F_i)$ being pushed up to $A_j$. These edges pay for the initialization cost of $A_j$. A
non-tree edge costs $O(\log^3 n)$ across its lifetime since it can be pushed up $2 \log n$ times and may pay $O(\log^2 n)$
amortized work on each push to pay for the cost per edge in the newly initialized decremental MSF
data structure. Since each global deletion introduces $O(\log n)$
non-tree local edges, 
the amortized cost of a deletion in the dynamic MSF algorithm is $O(\log^4 n)$.


\subsection{Parallel batch-dynamic MSF}\label{sec:dyn-msf-desc}
Our parallel batch-dynamic MSF algorithm comes from
parallelizing the fully dynamic HDT MSF
algorithm. The main changes are to use our decremental MSF algorithm from
\cref{sec:dec-msf-desc} and to use RC trees instead of top trees
for compressing local graphs and for efficient batch insertion.

\begin{algorithm}
  \captionsetup{font=footnotesize}
  \caption{The algorithm that sets global variables to initialize the batch-dynamic MSF data structure on an
  $n$-vertex graph.}
  \label{alg:msf-init}
\begin{algorithmic}[1]
          \figfont
  \Procedure{Initialize}{$n$}
    \State $F \gets$ RC tree on an empty $n$-vertex graph
      \label{ln:msf-init:f}
      \Comment The MSF, i.e., the global tree.
    \Pfor{$i = 0, 1, 2, \ldots, 2\log n$}
      \State $A_i \gets \varnothing$ \Comment Decremental MSF data structure for the $i$-th local graph.
        \label{ln:msf-init:a}
      \State $T_i \gets$ RC tree on an empty $n$-vertex graph 
        \label{ln:msf-init:t}
      \State $(B_{D, i}, B_{I, i}) \gets (\varnothing, \varnothing)$
        \label{ln:msf-init:b}
    \EndFor
  \EndProcedure
\end{algorithmic}
\end{algorithm}

\Cref{alg:msf-init} initializes the data structure on an $n$-vertex graph. All variables
(and only these variables) defined in this algorithm are globally visible.
We assume the input graph begins with no edges 
since input edges can be added separately via batch insertion. The RC tree $F$ maintains the MSF (global tree) (\cref{ln:msf-init:f}). 

Each $A_i$ is a batch-decremental MSF data structure over the 
$i$-th local graph, which is initially empty (\cref{ln:msf-init:a}). Whenever we initialize $A_i$, we will 
need to compress its tree edges by computing a compressed path tree on the tree edges relative to 
$A_i$'s non-tree edges' endpoints. The RC tree used to compute the compressed path tree should remain
unmodified until $A_i$'s next initialization so that when deleting edges, we can 
use the RC tree to look up the compressed representations of the edges in $A_i$. The RC tree
$T_i$ serves this purpose for $A_i$ (\cref{ln:msf-init:t}). 
Its value matches $A_i$'s (uncompressed) tree edges at $A_i$'s latest initialization, or equivalently,
the value of $F$ at $A_i$'s latest initialization.
To update $T_i$ to match $F$ at $A_i$'s next initialization, we keep buffers $B_{D, i}$ and $B_{I, i}$ representing the difference between  
$T_i$ and $F$ (\cref{ln:msf-init:b}). In particular, $(T_i \setminus  B_{D, i}) \cup B_{I, i} = F$.
\Cref{sec:dyn-msf-example} illustrates an example of how $A_i$ and $T_i$ changes over several edge 
updates.

\begin{algorithm}
  \captionsetup{font=footnotesize}
  \caption{A helper algorithm for restoring the HDT non-tree-edge invariant.}
  \label{alg:msf-update}
\begin{algorithmic}[1]
  \figfont
  \Procedure{Update}{$U = \set{(\set{u_1, v_1}, w_1), \ldots, (\set{u_k, v_k}, w_k)}$}
    \For{$i = 0, 1, 2, \ldots, 2\log n$} \label{ln:msf-update:for}
      \State $U \gets U \cup (\Call{NontreeEdges}{A_i} \setminus F)$
        \label{ln:msf-update:u}
      \State $A_i \gets \varnothing$
        \label{ln:msf-update:clear-a}
      \If{$\abs{U} \le 2^i$} \label{ln:msf-update:if}
        \State $T_i.\Call{Delete}{B_{D,i}}$
          \label{ln:msf-update:tdel}
        \State $T_i.\Call{Insert}{B_{I,i}}$
          \label{ln:msf-update:tins}
        \State $(B_{D, i}, B_{I, i}) \gets (\varnothing, \varnothing)$
          \label{ln:msf-update:clear-b}
        \LComment \parbox[t]{.8\linewidth}{
          Given an RC tree $T$ representing a forest, $T.\Call{CompressedPathTree}{\cdot}$
          takes a list of vertices $L$ and returns a compressed path tree for the forest relative to $L$.
        }
        \Local $P \gets T_i.\Call{CompressedPathTree}{\bigcup_{(\set{u, v}, w) \in U}\set{u, v}}$
          \label{ln:msf-update:p}
        \State $A_i \gets$ Batch-decremental MSF on $P \cup U$
          \label{ln:msf-update:a}
        \Break
      \EndIf
    \EndFor
      \label{ln:msf-update:for-end}
  \EndProcedure
\end{algorithmic}
\end{algorithm}

Before discussing batch insertion, we describe the helper function
\tsc{Update} (\cref{alg:msf-update}) that takes non-tree edges $U$ and 
inserts them in a local graph to satisfy the HDT non-tree-edge invariant. We iterate
through each local graph $A_i$ sequentially to find some $A_j$ to re-initialize
with $U$ such that the HDT edge-count invariant still holds (\cref{ln:msf-update:for}).
As we iterate through increasing $i$, we collapse the non-tree edges of $A_i$
into $U$ (\crefrange{ln:msf-update:u}{ln:msf-update:clear-a}) since pushing them up to level $j$ will pay
for the re-initialization cost of $A_j$. We discard the tree edges of
$A_i$ since they are irrelevant to the non-tree-edge invariant.
Once we find the level $j$ (\cref{ln:msf-update:if}), we update the RC tree $T_j$ to match
the global tree $F$ using buffers $B_{D,j}$ and $B_{I, j}$ and clear the
buffers (\crefrange{ln:msf-update:tdel}{ln:msf-update:clear-b}). Then, we use
$T_j$ to compress $F$ into a compressed path tree $P$ relative to $U$ so that
$O(\abs{P}) = O(2^j)$, and we set $A_j$ to be a newly initialized decremental 
MSF data structure over $P \cup U$ (\crefrange{ln:msf-update:p}{ln:msf-update:a}).
(The compressed local graphs may be non-simple because between a pair of vertices, there can be both
one compressed tree edge and one non-tree edge. The decremental MSF algorithm still works in
this setting.)

\begin{algorithm}
  \captionsetup{font=footnotesize}
  \caption{The algorithm for inserting a batch of edges.}
  \label{alg:msf-insert}
\begin{algorithmic}[1]
  \figfont
  \Procedure{BatchInsert}{$U = \set{(\set{u_1, v_1}, w_1), \ldots, (\set{u_k, v_k}, w_k)}$}
    \Local $P \gets F.\Call{CompressedPathTree}{\bigcup_{i=1}^k \set{u_i, v_i}}$
      \label{ln:msf-insert:p}
    \Local $M \gets \Call{MSF}{P \cup U}$
      \label{ln:msf-insert:m}
    \Local $(D, I) \gets (P \setminus M, U \cap M)$
      \label{ln:msf-insert:di}
    \State $F.\Call{Delete}{D}$
      \label{ln:msf-insert:del}
    \State $F.\Call{Insert}{I}$
      \label{ln:msf-insert:ins}
    \Pfor{$i = 0, 1, 2, \ldots, 2\log n$}
     \State $(B_{D, i}, B_{I, i}) \gets (B_{D, i} \cup (D \setminus B_{I, i}), (B_{I, i} \setminus D) \cup I)$
      \label{ln:msf-insert:set-b}
    \EndFor
    \State $\Call{Update}{D \cup (U \setminus M)}$
      \label{ln:msf-insert:up}
  \EndProcedure
\end{algorithmic}
\end{algorithm}

\Cref{alg:msf-insert} gives pseudocode for batch insertion. We start by compressing the
global tree $F$ into a compressed path tree $P$ relative to inserted edges $U$
(\cref{ln:msf-insert:p}). 
Each compressed edge $e$ in the compressed path tree also stores a pointer to the
heaviest edge in the path that $e$ represents in $F$. In this way we can, for
brevity, refer to edges from the compressed path tree and the corresponding heavy
edges in the uncompressed tree $F$ interchangeably in the
pseudocode. We compute an MSF $M$ over $P \cup U$ (\cref{ln:msf-insert:m}). 
Using $M$, we can determine which edges $I$ from $U$ to insert into the global
tree $F$ and which edges $D$ from $F$ get replaced by $I$
(\cref{ln:msf-insert:di}). We delete $D$ from $F$, insert $I$ into $F$, and
update the buffers for every local graph $A_i$
(\crefrange{ln:msf-insert:del}{ln:msf-insert:set-b}). Finally, we call
\tsc{Update} on edges $D \cup (U \setminus M)$ since they are new global non-tree
edges that may violate the non-tree-edge invariant (\cref{ln:msf-insert:up}).

\begin{algorithm}
  \captionsetup{font=footnotesize}
  \caption{The algorithm for deleting a batch of edges.}
  \label{alg:msf-delete}
\begin{algorithmic}[1]
  \figfont
  \Procedure{BatchDelete}{$U = \set{\set{u_1, v_1}, \ldots, \set{u_k, v_k}}$}
    \Local $D \gets F \cap U$
      \label{ln:msf-delete:d}
    \State $F.\Call{Delete}{D}$
    \Pfor{$i = 0, 1, 2, \ldots, 2\log n$}
      \State $(B_{D, i}, B_{I, i}) \gets (B_{D, i} \cup (D \setminus B_{I, i}), B_{I, i} \setminus D)$
        \label{ln:msf-delete:set-b}
      \Local $U' \gets$ Representation of $U$ in compressed $A_i$
        \label{ln:msf-delete:rep}
      \LComment \parbox[t]{.85\linewidth}{$A_i.\Call{Delete}{\cdot}$ takes a list of edges, deletes them from $A_i$, and 
        returns the replacement edges used to reconnect $A_i$.}
      \Local $R_i \gets A_i.\Call{Delete}{U'}$
          \label{ln:msf-delete:ri}
    \EndFor
    \State \Call{BatchInsert}{$\bigcup_{i=0}^{2\log n} R_i$}
      \label{ln:msf-delete:ins}
  \EndProcedure
\end{algorithmic}
\end{algorithm}

\Cref{alg:msf-delete} describes batch deletion. First, we delete the input
edges $U$ from the global tree $F$ and update the buffers 
for every local graph $A_i$ accordingly (\crefrange{ln:msf-delete:d}{ln:msf-delete:set-b}).
In parallel over every $A_i$, we want to delete $U$ from $A_i$, though this requires some effort 
since $A_i$ is in compressed form. To map each edge $e$ in $U$ to its representation in 
$A_i$ (\cref{ln:msf-delete:rep}), there are three cases: 
$e$ appears in compressed form in $A_i$ (because it was a tree edge in $A_i$ when $A_i$ 
was last initialized),
$e$ appears in uncompressed form in $A_i$, or it does not exist in $A_i$. 
To handle the first case, we try looking up $e$ in $T_i$ to get a compressed 
edge and then try looking that up in $A_i$.
Simultaneously, we try looking up $e$ directly in $A_i$ to handle the second case. 
If the two cases fail, then we are in the third case and ignore the edge.

Now we can delete $U$ from $A_i$ and extract the local replacement edges $R_i$ that 
the decremental MSF data structure uses to replace $U$ (\cref{ln:msf-delete:ri}). 
Finally, we insert the replacement edges into the global tree by calling
\tsc{BatchInsert} (\cref{ln:msf-delete:ins}). Although these replacement edges
are already global edges, calling \tsc{BatchInsert} has the correct
behavior of reconnecting $F$ and calling \tsc{Update}.

\begin{theorem}
  Our dynamic MSF algorithm
  maintains an MSF in $O(k \log^6 n)$ expected amortized work for a batch of $k$
  edge insertions or $k$ edge deletions. Insertions take $O(\log^2 n)$ span w.h.p.,
  and deletions take $O(\log^3 n \log k)$ span w.h.p. The maximum amount of space
  the data structure uses is $O(m + \min\set{m, n\log n} \log^3 n)$ 
  w.h.p., where $m$ is the maximum number of edges in the graph.
\end{theorem}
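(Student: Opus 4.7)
The plan is to show correctness by checking that the parallel algorithm preserves the HDT invariants (cycle, edge-count, and non-tree-edge invariants), then bound work by an amortized argument essentially identical to the HDT analysis but with our parallel decremental MSF cost substituted in, and finally bound span and space by following the control flow of \tsc{BatchInsert}, \tsc{BatchDelete}, and \tsc{Update}.

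For correctness, I would first argue that \tsc{BatchInsert} and \tsc{BatchDelete} maintain $F$ as an MSF of the global graph: the compressed path tree $P$ computed on \cref{ln:msf-insert:p} has the same heaviest edge between any pair of inserted endpoints as $F$ does, so computing a static MSF on $P \cup U$ and swapping $D$ for $I$ produces exactly the MSF of the updated global graph (this is the batched analog of the single-edge cycle argument). For deletions, the correctness of the replacement edges follows from \cref{thm:dec-msf} together with the non-tree-edge invariant, which guarantees that every global non-tree edge lives in some $A_i$ and that the lightest local replacement is a valid global replacement. \tsc{Update} preserves the edge-count and non-tree-edge invariants by construction: the chosen level $j$ is the minimum level where $|U| \le 2^j$ after absorbing non-tree edges from strictly smaller levels, and the compressed tree $P$ on \cref{ln:msf-update:p} is just a compact re-encoding of $F$ (supported by $T_j$, which we bring into sync with $F$ using the buffers $B_{D,j}, B_{I,j}$).

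For work, I would reuse the HDT amortization verbatim but with our updated per-edge cost: a non-tree edge can be pushed up at most $2 \log n$ times, and on each push it pays $O(\log^4 n)$ amortized work for its share of the re-initialization of the destination $A_j$ (the decremental MSF initialization and deletion cost per edge, by \cref{thm:dec-msf}). This gives $O(\log^5 n)$ amortized work per local non-tree edge. A batch of $k$ insertions creates $O(k)$ new local non-tree edges directly, while a batch of $k$ deletions creates up to $O(\log n)$ replacement edges per deleted edge (one per level of decremental MSF) and then funnels them through \tsc{BatchInsert}, giving $O(k \log n)$ new local non-tree edges. Multiplying, batched insertions cost $O(k \log^5 n)$ and batched deletions cost $O(k \log^6 n)$ expected amortized work. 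The non-amortized per-batch overhead (compressed path tree, static MSF on $P \cup U$, RC-tree and buffer updates) is dominated by these bounds.

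For span, in \tsc{BatchInsert} the compressed path tree, static MSF, and RC tree operations on $F$ each take $O(\log^2 n)$ span w.h.p., the buffer updates across the $O(\log n)$ levels run in parallel with low-span dictionary operations, and \tsc{Update} runs its for loop sequentially for at most $2\log n + 1$ iterations but each iteration is low-span and only the terminal iteration incurs the $O(\log^2 n)$ cost of initializing a decremental MSF on a graph of size $O(2^j)$; combining gives $O(\log^2 n)$ w.h.p.\ for insertions. In \tsc{BatchDelete}, the dominant term is the parallel call to $A_i.\tsc{Delete}(U')$ across the $O(\log n)$ levels, which costs $O(\log^3 n \log k)$ span w.h.p.\ by \cref{thm:dec-msf}, followed by the $O(\log^2 n)$-span \tsc{BatchInsert} on the $O(k \log n)$ replacement edges, so the overall span is $O(\log^3 n \log k)$ w.h.p. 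For space, each of the $O(\log n)$ local decremental MSF structures $A_i$ stores at most $2^i$ non-tree edges and a compressed tree of size $O(2^i)$, so by \cref{thm:dec-msf} it uses $O(2^i + \min\{2^i, n\} \log^3 n)$ space w.h.p.; summing geometrically and adding $O(m)$ for the global graph gives the stated $O(m + \min\{m, n \log n\} \log^3 n)$ bound. The main obstacle is keeping the amortization honest when the decremental MSF structures are re-initialized from compressed local graphs rather than from the original edges: I would verify that the compressed-edge representation does not affect the per-edge amortized bound of \cref{thm:dec-msf}, since the decremental MSF only cares about the number of edges in its input graph, which is $O(2^j)$ after compression.
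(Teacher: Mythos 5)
Your proposal is correct and follows essentially the same route as the paper's proof: the same credit scheme ($O(\log^4 n)$ per push, at most $2\log n$ pushes, hence $O(\log^5 n)$ per local non-tree edge, with deletions generating $O(k\log n)$ replacement edges to yield $O(k\log^6 n)$), the same span accounting dominated by the RC-tree/\textsc{Update} operations for insertions and by the parallel $A_i$ deletions for deletions, and the same per-level space sum truncated at $\log m$. The explicit correctness discussion you include is a reasonable addition the paper leaves implicit, but it does not change the approach.
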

\begin{proof}

  \emph{Work}: We first analyze the work for \tsc{Update}
  (\cref{alg:msf-update}). We will give $(2 \log n - i) \cdot
  O(\log^4 n)$ amortization credits to non-tree edges in local graph $A_i$ and
  get $O(k \log^5 n)$ expected amortized work for \tsc{Update} as a consequence.

  Suppose we call \tsc{Update} (\cref{alg:msf-update})
  with $k$ input edges. Let $\bar{U}$ be the original input to \tsc{Update}, and let $U_i$
  indicate the value of $U$ after the $i$-th iteration of
  \cref{ln:msf-update:u}. We give $O(\log^5 n)$ 
  credits to each edge in $\bar{U}$. Let $j$ be
  the value of $i$ that satisfies the condition on \cref{ln:msf-update:if}.
  The work done by insertions to $U$ (\cref{ln:msf-update:u}) across all
  iterations sums to $O(\smallabs{\bar{U}} + 2^j)$. Updating $T_j$ on
  \crefrange{ln:msf-update:tdel}{ln:msf-update:tins} costs $O((\smallabs{B_{D,
  j}} + \smallabs{B_{I, j}}) \log n)$ expected work, which we charge to the \tsc{BatchInsert}
  and \tsc{BatchDelete} calls that inserted these elements into $B_{D, j}$ and $B_{I, j}$.
  Computing the compressed path tree (\cref{ln:msf-update:p}) costs
  $O(\smallabs{U_j}\log n)$ expected work, and initializing the decremental MSF data
  structure (\cref{ln:msf-update:a}) costs $O(\smallabs{U_j}\log^4 n)$ expected
  amortized work.
  We know that $\smallabs{U_j} \le 2^j$ and $\smallabs{U_{j-1}} > 2^{j-1}$ due to the
  choice of $j$. We pay for the $O(2^j + \smallabs{U_j}\log^4 n) = O(2^j\log^4 n)$ work of $\tsc{Update}$
  by charging $O(\log^4 n)$ credits to the elements in $\smallabs{U_{j-1}}$ that we have
  pushed up to local graph $A_j$. The remaining expected amortized
  work is $O(k\log^5 n)$ from the credits we gave to $\bar{U}$.

  Batch insertion (\cref{alg:msf-insert}) also costs $O(k \log^5 n)$ amortized
  expected work due to its work being dominated by \tsc{Update}
  (\cref{ln:msf-insert:up}). For instance, RC tree operations and
  computing an MSF takes only $O(k \log (1 + n/k))$ expected work on
  \crefrange{ln:msf-insert:p}{ln:msf-insert:ins}.
  Updating buffers $B_{i,*}$ on \cref{ln:msf-insert:set-b} costs only $O(k
  \log^2 n)$ amortized total work, where one $\log n$ factor comes from summing
  over $i$ and the other $\log n$ factor pays for the cost of updating
  $T_i$ with $B_{i,*}$ in \tsc{Update}.

  Batch deletion (\cref{alg:msf-delete}) costs $O(k \log^6 n)$ amortized
  expected work. Like with batch insertions, the expected work of RC tree operations and
  dictionary operations is $O(k \log^2 n)$ on
  \crefrange{ln:msf-delete:d}{ln:msf-delete:rep}. We charge the cost of
  deleting $k$ edges from each $A_i$ (\cref{ln:msf-delete:ri}) to the
  initialization of $A_i$. Finally, we call \tsc{BatchInsert} on
  $\bigcup_{i=0}^{2\log n} R_i$ (\cref{ln:msf-delete:ins}) where $\abs{R_i} \le k$
  for $O(k\log^6 n)$ expected amortized work.

  \emph{Span}: The span of \tsc{Update} (\cref{alg:msf-update}) is $O(\log^2 n)$ w.h.p.
  Collapsing $A_i$ into $U$ on \cref{ln:msf-update:u} summed over all $O(\log
  n)$ iterations takes $O(\log n \log^* n)$ span w.h.p.\ using a parallel dictionary.
  Then the operations on $T_i$ and $A_i$ in
  \crefrange{ln:msf-update:tdel}{ln:msf-update:a} all take $O(\log^2 n)$ span
  w.h.p.\ and are only performed on one value of $i$.

  The span of \tsc{Insert} (\cref{alg:msf-insert}) is also $O(\log^2 n)$ w.h.p.
  The span is dominated by the RC tree updates
  (\crefrange{ln:msf-insert:del}{ln:msf-insert:ins}) and \tsc{Update}
  (\cref{ln:msf-insert:up}), each of which take $O(\log^2 n)$ span w.h.p.
  The span of \tsc{Delete} (\cref{alg:msf-delete}) is $O(\log^3 n \log k)$
  w.h.p. The span is dominated by deleting up to $k$ edges from $A_i$
  for each $i$ (\cref{ln:msf-delete:ri}) for a cost of $O(\log^3 n \log k)$ span
  w.h.p.\ according to \cref{thm:dec-msf}.

  \emph{Space}: The global tree $F$ and each local RC tree $T_i$ takes $O(n)$ space
  w.h.p., and each pair of local buffers $(B_{D, i}, B_{I, i})$ takes $O(n)$
  space since they store the symmetric difference between the trees $F$ and $T_i$.
  Each $A_i$ has $O(\min\{2^i, n\})$ vertices and
  $O(2^i)$ edges, and the initialization strategy for the local graphs in
  \tsc{Update} leaves $A_i$ empty for $i > \log m$.
  Applying \cref{thm:dec-msf} on each $A_i$ gives a total space usage of
  $
    \sum_{i=1}^{\log m} O(2^i + \min\set{2^i, n} \log^3 n)
    = O(m + \min\set{m, n\log n} \log^3 n)
  $.
\end{proof}

\subsection{Example}\label{sec:dyn-msf-example}

\providecommand{\msfExImgSize}{0.24}
\providecommand{\msgExImg}[1]{
  \raisebox{-.5\height}{
    \includegraphics[scale=\msfExImgSize]{#1}
  } 
  \vspace{.2mm}
}

\providecommand{\msfExEmpty}{\msgExImg{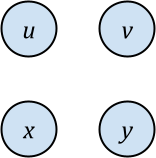}}
\providecommand{\msfExInsA}{\msgExImg{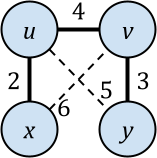}}
\providecommand{\msfExInsAF}{\msgExImg{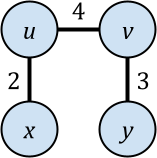}}
\providecommand{\msfExInsB}{\msgExImg{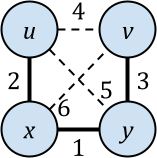}}
\providecommand{\msfExInsBF}{\msgExImg{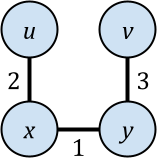}}
\providecommand{\msfExInsBLocalA}{\msgExImg{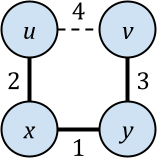}}
\providecommand{\msfExInsBLocalACompress}{\msgExImg{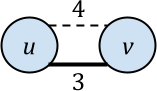}}
\providecommand{\msfExDelBef}{\msgExImg{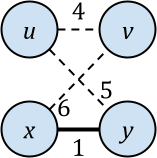}}
\providecommand{\msfExDelBefLocalA}{\msgExImg{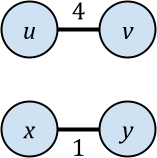}}
\providecommand{\msfExDelBefLocalACompress}{\msgExImg{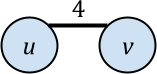}}
\providecommand{\msfExDelBefLocalB}{\msgExImg{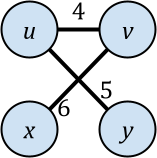}}
\providecommand{\msfExDel}{\msgExImg{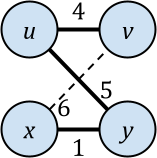}}
\providecommand{\msfExDelF}{\msgExImg{img/dyn-msf-example/del-F.png}}
\providecommand{\msfExDelLocalACompress}{\msgExImg{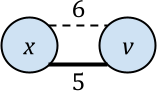}}

\begin{table*}[t]\centering
  \figfont
  \begin{tabular}{lcccccccc}
    \toprule
    Operation & $G$  & $A_0$ uncompressed & $A_0$ compressed & $T_0$ & \makecell{$B_{D,0}$ \\ $B_{I,0}$}
    & $A_1$ uncompressed & $T_1$ & \makecell{$B_{D,1}$ \\ $B_{I,1}$} \\ 
    
    \hline \noalign{\vskip 1mm}  
    Initialize & 
      \msfExEmpty & \msfExEmpty & (empty)   & \msfExEmpty & \makecell{$\set{}$\\~\\$\set{}$} 
      & \msfExEmpty & \msfExEmpty & \makecell{$\set{}$\\~\\$\set{}$} \\ 
    \hline \noalign{\vskip 1mm}
    1st insert & 
      \msfExInsA & & &  &  \makecell{$\set{}$\\~\\$\{\set{u, v},$\\$\set{u, x}, \set{v, y}\}$}
      & \msfExInsA & \msfExInsAF &  \makecell{$\set{}$\\~\\$\set{}$} \\ 
    \hline \noalign{\vskip 1mm}
    2nd insert & 
      \msfExInsB & \msfExInsBLocalA & \msfExInsBLocalACompress & \msfExInsBF 
      & \makecell{$\set{}$\\~\\$\set{}$}
      & & & \makecell{$\set{\set{u,v}}$\\~\\$\set{\set{x,y}}$}
    \\ 
    \hline \noalign{\vskip 1mm}
    \makecell[l]{Delete (before\\ insertion\\ subroutine call)} &
      \msfExDelBef & \msfExDelBefLocalA & \msfExDelBefLocalACompress &
      & \makecell{$\set{\set{u, x}, \set{v, y}}$\\~\\$\set{}$}
      & \msfExDelBefLocalB & 
      & \makecell{$\{\set{u,v},$\\$ \set{u, x}, \set{v, y}\}$\\~\\$\set{\set{x,y}}$}
    \\
    \noalign{\vskip 2mm}
    \makecell[l]{Delete (after\\ insertion\\ subroutine call)} &
      \msfExDel & \msfExDel & \msfExDelLocalACompress & \msfExDelF 
        & \makecell{$\set{}$\\~\\$\set{}$} & & 
        & 
        \makecell{$\{\set{u,v},$\\$ \set{u, x}, \set{v, y}\}$\\~\\$\{\set{u, v},$\\$\set{u, y} \set{x,y}\}$}
    \\ \bottomrule
  \end{tabular}
  \caption{This table walks through an example of the batch-dynamic MSF algorithm.
  The global graph $G$ has four vertices. We first insert five edges into it, then insert one edge, and finally delete two edges.
  The bolded edges in $G$ form the MSF (the global tree) $F$. 
  The local graphs are $A_0$ and $A_1$, and the bolded edges within are the local tree edges.
  For $i \in \set{0, 1}$, the tree $T_i$ is the RC tree for $A_i$. 
  The buffers $B_{D,i}$ and $B_{I, 1}$ represent the difference between $T_i$ and $F$, i.e., 
  $(T_i \setminus  B_{D, i}) \cup B_{I, i} = F$. For brevity, we omit listing the weights of edges 
  in the buffers.
  The compressed form of $A_1$ is the same as its uncompressed form throughout this example (except
  at initialization when the compressed form of $A_1$ is completely empty), and so we omit illustrating it.
  If a cell in the table is blank, that means it is the same as the cell in the row above.
  }
  \label{tab:dyn-msf-example}
\end{table*}

\Cref{tab:dyn-msf-example} displays an example of how the local graphs 
change as the global graph $G$ changes. Though the algorithm does not actually store
the non-tree edges of $G$ or the uncompressed forms of the local graphs $A_0$ and $A_1$, they
are displayed in the table for clarity.

In the first row of the table, we initialize a graph with four vertices $u$, $v$, $x$, and $y$.

In the second row of the table, we call \tsc{BatchInsert} on the five edges 
 $(\set{u,v}, 4)$,
 $(\set{u,x}, 2)$,
 $(\set{v,y}, 3)$,
 $(\set{u,y}, 5)$, and
 $ (\set{v,x}, 6)$.
\tsc{BatchInsert} then invokes \tsc{Update} on the two non-tree edges
$\set{u,y}$ and $\set{v,x}$. Placing these edges in
$A_0$ would violate the edge-count invariant that $A_0$ has at most 
$2^0 = 1$ local non-tree edges, but we can place them in $A_1$ because it can have
$2^1 = 2$ non-tree edges.
The \tsc{Update} call hence initializes $A_1$ on the non-tree edges
along with the current global tree. In this case, the compressed form of $A_1$ is the
same as the uncompressed form---every vertex has an incident level-1 non-tree edge and cannot be compressed out. 

In the third row of the table, we call \tsc{BatchInsert} on the edge $(\set{x,y}, 1)$.
The edge $\set{u,v}$ becomes a non-tree edge, and \tsc{Update} is invoked on it.
This time, the edge can be placed in $A_0$ without violating the edge-count invariant.
The \tsc{Update} call initializes $A_0$ on the non-tree edge
along with the current global tree. The compressed form of $A_0$
keeps the vertices $u$ and $v$ since they are non-tree edge endpoints and splices out
vertices $x$ and $y$, replacing the path $u$--$x$--$y$--$v$ with a compressed edge $\set{u, v}$
that has the same weight as the heaviest weight in the path.

In the fourth row of the table, we call \tsc{BatchDelete} on edges $\set{u, x}$ and $\set{v, y}$.
We first delete them from local graphs $A_0$ and $A_1$. Local graph $A_0$ returns local replacement edge
$\set{u,v}$, and $A_1$ returns local replacement edges $\set{u,y}$ and $\set{v,x}$. (Deleting
$\set{u, x}$ and $\set{v, y}$, or even just deleting either of these edges individually, in the
compressed form of $A_0$ means deleting the entire compressed edge representing the path
$u$--$x$--$y$--$v$. This triggers the correct behavior of searching for a local replacement edge
that reconnects $u$'s connected component and $v$'s connected component.)

Finally, \tsc{BatchDelete} invokes \tsc{BatchInsert} to re-insert all of the local replacement edges into the global graph. Edges
$\set{u,v}$ and $\set{u,y}$ are global replacement edges that are inserted as tree edges,
whereas edge $\set{v, x}$ remains a global non-tree edge. We invoke \tsc{Update} on $\set{v, x}$,
which re-initializes $A_0$ with the non-tree edge.
\section{Dynamic graph HAC lower bounds}\label{sec:hac}

In this section, we show lower bounds on dynamic graph HAC under
edge insertions and deletions. Queries take two vertices $s$ and $t$ and a similarity
threshold $\theta$, and answer whether $s$ and $t$ are in the same cluster if we
run agglomerative clustering until all cluster similarities are strictly below $\theta$.
Such queries provide limited information, but we will show
that answering such queries is still difficult for complete linkage, weighted average
linkage, and average linkage.
We also consider a different type of query that asks how many clusters
there are if we run HAC until a threshold $\theta$---we refer to this
problem as \#HAC. All of our lower bounds hold for the special case where $s$,
$t$, and $\theta$ are fixed across all queries.

Our bounds hold even for an approximate form of graph HAC. We use the
approximation notion given by Lattanzi et al.~\cite{lattanzi2019framework}. In
$\lambda$-approximate graph HAC with $\lambda \ge 1$, at an agglomeration step
where the maximum similarity is $\mathcal{W}_\text{max}$, the clustering process may
merge any clusters with similarity at least $\mathcal{W}_\text{max}/\lambda$.
We assume $\lambda \le \on{poly}(n)$ so that $\on{poly}(\lambda)$ fits in a
constant number of words.

\subsection{Background: other dynamic lower bounds}\label{sec:avw-lower-bounds}

Abboud and Vassilevska Williams as well as Henzinger et al.\ showed lower bounds on several
dynamic problems conditional on well-known conjectures~\cite{abboud2014popular,henzinger2015unifying}.
The conjectures include the strong exponential time hypothesis (SETH)~\cite{calabro2009complexity},
triangle detection requiring greater than linear work,
3SUM requiring quadratic work~\cite{gajentaan1995class,patrascu2010towards}, 
and online Boolean matrix-vector multiplication (OMv) 
requiring cubic work~\cite{henzinger2015unifying}.

%
%


The studied dynamic problems include Chan’s subset union problem
(SubUnion)~\cite{chan2006dynamic}, subgraph connectivity (SubConn), and connected subgraph
(ConnSub). In SubUnion, given a collection of sets $X = \set{X_1,\ldots , X_t}$ and $U :=
\bigcup_i X_i$, we maintain a subcollection $S \subseteq X$ under insertions and
deletions to $S$ while answering whether $\bigcup_{X_i \in S} X_i = U$.
In SubConn, given an undirected graph, we maintain a subset of vertices $S$
under insertions and deletions to $S$ with queries answering whether query vertices
$s$ and $t$ are connected in the subgraph induced by
$S$~\cite{frigioni2000dynamically}. 
ConnSub is SubConn with the query instead being whether the subgraph is connected.


\begin{table}\centering
  \figfont
  \begin{tabular}{lllll}
  \toprule
  \multirow{2}{*}{Problem} & \multicolumn{4}{c}{Work lower bounds} \\
                          & Preprocess & Update & Query & Conjecture \\
  \hline
  \multirow{2}{*}{\shortstack[l]{SubUnion with \\ $\abs{X} = O(\log \abs{U})$}}
  & $\on{poly}(\abs{U})$ & $\abs{U}^{1 - \eps}$ & $\abs{U}^{1 - \eps}$ & SETH \\
  & & & & \\
  \hline
  \multirow{3}{*}{$st$-SubConn}
  & $m^{1+\delta-\eps}$ & $m^{\delta - \eps}$ & $m^{2\delta - \eps}$ & Triangle \\
  & $m^{4/3}$ & $m^{\alpha - \eps}$ & $m^{2/3 - \alpha - \eps}$ & 3SUM \\
  & $\poly{n}$ & $m^{1/2 - \eps}$ & $m^{1-\eps}$ & OMv \\
  \hline
  ConnSub 
  & $\on{poly}(n)$ & $n^{1-\eps}$ & $n^{1-\eps}$ & SETH  \\
  \bottomrule
  \end{tabular}
  \caption{
    The table states conditional asymptotic work lower bounds for 
    some dynamic problems~\cite{abboud2014popular,henzinger2015unifying}.
    The values of $\eps$, $\delta$, and $\alpha$ follow the definitions in \cref{tab:lower-bounds}.
    The bounds hold for partially dynamic algorithms as well. The bounds are amortized in the fully dynamic case
    and are worst-case in the partially dynamic case.
  }
  \label{tab:abboud-lower-bounds}
\end{table}

\Cref{tab:abboud-lower-bounds} lists the lower bounds for these problems.
The SubUnion bounds hold for the special case where $\abs{X} = O(\log \abs{U})$, and 
the SubConn bounds hold for the special case of $st$-SubConn where $s$ and $t$ are fixed across all queries.

\subsection{Statement of HAC lower bounds}\label{sec:hac-state-lower-bounds}

The following theorem reduces SubConn to HAC and ConnSub to \#HAC.
It implies that the existing conditional lower bounds for $st$-SubConn
apply directly to HAC under complete linkage or weighted average linkage.
Similarly, the SETH gives the
same lower bounds to \#HAC as it does to ConnSub. We note that existing constructions reducing from triangle detection~\cite{abboud2014popular} and OMv~\cite{henzinger2015unifying} to SubConn also work when reducing to ConnSub, and
so the conditional lower bounds for HAC based on triangle detection and OMv hardness apply to \#HAC too.

\begin{theorem}\label{thm:subconn-reduces-to-hac}
  Suppose for some constant $c > 0$ that we can solve dynamic / incremental / decremental $O(n^c)$-approximate
  HAC under complete linkage or weighted average linkage in
   $p(m,n)$ preprocessing work, $u(m,n)$ update work, and
   $q(m,n)$ query work. Then we can solve dynamic / decremental / incremental
   SubConn with $\tilde{O}(m) + p(\tilde{O}(m),\tilde{O}(n))$ processing
   work, $u(\tilde{O}(m),\tilde{O}(n))$ update
   work, and $q(\tilde{O}(m),\tilde{O}(n))$ query work.
  The same relationship is also true between \#HAC and ConnSub.
\end{theorem}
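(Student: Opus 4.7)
The plan is to reduce SubConn on an instance $(G=(V,E), S, s, t)$ to dynamic $\lambda$-approximate HAC (with $\lambda = n^c$) by constructing an auxiliary weighted graph $H$ whose HAC clusters, at a fixed query threshold $\theta$, mirror connectivity in the induced subgraph $G[S]$. I would build $H$ by embedding $G$ and attaching to each vertex $v \in V$ a constant-size \emph{activation gadget} connected to the rest of $H$ through a single ``gate'' edge whose presence encodes whether $v \in S$. Structural edges copied from $E(G)$ are given weight $w_{\mathrm{edge}}$ and gate edges are given weight $w_{\mathrm{gate}}$ with $w_{\mathrm{gate}}/w_{\mathrm{edge}} > \lambda^2$, so that the approximation factor can never conflate the two scales. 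A SubConn insertion of $v$ then translates to a single gate-edge deletion in $H$, and a SubConn deletion to a single gate-edge insertion---this is precisely the source of the incremental/decremental swap asserted in the theorem. Preprocessing builds $H$ in $\tilde O(m)$ work and then invokes HAC preprocessing on an instance of size $\tilde O(m), \tilde O(n)$; each SubConn update becomes one HAC update and each SubConn query becomes one HAC query, giving the claimed bounds.

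Next I would prove correctness via a two-phase analysis of $\lambda$-approximate HAC at threshold $\theta$ chosen between the two weight scales. In the high-weight phase, only gate edges are processed, producing one ``active cluster'' per vertex $v \in S$; the $\lambda^2$ gap guarantees that the approximation slack cannot leak into the low-weight phase. In the low-weight phase, merges proceed only along structural edges joining active clusters, so the cluster of $s$ is exactly its connected component in $G[S]$, and a HAC same-cluster query on $(s,t)$ answers the SubConn query. The same construction and threshold also yield the parallel reduction from ConnSub to \#HAC: the number of HAC clusters at $\theta$ equals the number of connected components of $G[S]$ plus a bookkeeping term (one per inactive vertex, or one fixed absorbing inactive cluster, depending on the gadget) that can be maintained alongside updates in $O(1)$ extra work.

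The main obstacle, where I would spend most of the effort, is making this argument go through for complete linkage and weighted average linkage rather than just single linkage. The min-aggregation of complete linkage and the averaging of weighted average linkage can let a cluster ``bleed'' into an inactive vertex's gadget and then, via that fragment, merge with another cluster that it should not see. I would resolve this by augmenting each inactive gadget with an internal edge of weight strictly below $w_{\mathrm{edge}}/\lambda$, so that any cluster that has absorbed one side of an inactive $v$'s gadget exhibits a sub-$\theta$ weight in its cut to the other side and is thereby blocked from further merges through $v$ under complete linkage; under weighted average linkage the same internal barrier drags the averaged similarity below $\theta$. A case analysis then shows that the cluster containing $s$ is exactly its $G[S]$-component, possibly decorated by inactive fragments that do not affect the same-cluster-as-$t$ test, completing the reduction.
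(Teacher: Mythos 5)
Your overall architecture matches the paper's: two weight scales separated by more than the approximation factor, a per-vertex gadget whose gate edge encodes membership in $S$ (so that a SubConn insertion becomes an edge deletion, which is exactly where the incremental/decremental swap comes from), a threshold $\theta$ between the scales, and the same query translation for both the same-cluster and the cluster-count variants. The gap is in your blocking mechanism for inactive vertices, which is the one place you correctly identify as the crux but then resolve incorrectly. You attach the gadget to the rest of $H$ \emph{only} through the gate edge and try to block further merges with an edge \emph{internal} to the gadget. Under complete linkage the similarity between two clusters is the minimum weight over edges in the cut between them; an edge internal to a cluster never appears in any cut, so it cannot lower the similarity between the inactive cluster $\set{v}\cup\text{gadget}$ and a neighbor $u\in N(v)$. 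That similarity remains $w_{\mathrm{edge}}$, the same as for an active vertex, so inactive vertices still transmit connectivity and the reduction is wrong. The situation is the same for weighted average linkage: when $v$ merges with a gadget vertex that has no edge to $u$, the rule ``average if both edges exist, otherwise keep the existing weight'' leaves $\mathcal{W}(\cdot,u)=w_{\mathrm{edge}}$ unchanged. The paper's gadgets avoid this precisely by giving the gadget vertices weight-$1$ edges \emph{to every neighbor in $N(v)$}: for complete linkage one auxiliary vertex $v'$ suffices, since after $v$ merges with $v'$ the min over the cut to each $u\in N(v)$ drops to $1$; for weighted average linkage a star of $\Theta(\log\lambda)$ leaves is needed, each adjacent to all of $N(v)$, because each leaf merge only halves the excess similarity $1+(2\lambda-1)2^{-i}$ and you need $\log(2\lambda)$ halvings to get below $\theta/\lambda$.

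This also explains why your ``constant-size'' gadget cannot be right even in outline: with $\lambda=n^c$ you need $\Theta(\log n)$ gadget vertices per original vertex for weighted average linkage, each carrying a copy of $v$'s adjacency list, which is exactly what produces the $\tilde{O}(m)$ and $\tilde{O}(n)$ (rather than $O(m)$ and $O(n)$) instance sizes in the theorem statement. A gadget touching the rest of the graph through a single edge would yield $O(m)$-size instances, which should have been a signal that the blocking step was not doing the work it needs to do. To repair the proof, replace the internal barrier edge by low-weight edges from the gadget vertices to $N(v)$, and for weighted average linkage make the gadget a star with $\lceil\log(2\lambda)\rceil$ leaves and verify the geometric decay of the similarity as the leaves merge in one at a time.
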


In \aref{lem:subconn-to-hac-upgma}, we also give a reduction from SubConn to average-linkage HAC. 
Due to the large size of the HAC instance resulting from the reduction, however,
the only lower bound the reduction gives is
$\Omega(n^{1/6 - o(1)})$ update work and
$\Omega(n^{1/3 - o(1)})$ query work conditional on OMv hardness.
Meanwhile, \#HAC has $\Omega(m^{1/2 - o(1)})$ update work and
$\Omega(m^{1 - o(1)})$ query work conditional on OMv hardness\noconf{ (\cref{thm:omv-to-hac-upgma-count})} and 
also has lower bounds conditional on triangle detection hardness\noconf{ (\cref{thm:hac-upgma-count-triangle-bound})}.

Finally, the following theorem states $\Omega(n^{1-o(1)})$ dynamic
HAC lower bounds conditional on SETH. 
The lower bounds for complete-linkage and weighted-average-linkage \#HAC come from
\cref{thm:subconn-reduces-to-hac}. The remaining bounds come from
reducing SubUnion to HAC and applying existing lower bounds on SubUnion.

\begin{theorem}\label{thm:hac-seth-lower-bounds}
  Suppose that for some $\eps > 0$ we can solve one of the following problems
  with $\on{poly}(n)$ preprocessing work:
  \begin{itemize}[topsep=1pt,itemsep=0pt,parsep=0pt,leftmargin=10pt]
    \item for some constant $c$, fully dynamic $O(n^c)$-approximate HAC or
      $O(n^c)$-approximate \#HAC under complete linkage or weighted average
      linkage with $O(n^{1-\eps})$ amortized update and query work, 
    \item  the above problem in an incremental or decremental setting with
      $O(n^{1-\eps})$ worst-case update and query work,
    \item for some $c \in [0, 1/5)$, fully dynamic $O(n^c)$-approximate average-linkage HAC
      with $O(n^{1-5c-\eps})$ amortized update and query work,
    \item for some $c \in [0, 1/8)$, incremental or decremental $O(n^c)$-approximate average-linkage HAC
      with $O(n^{(1-8c)/2-\eps})$ worst-case update and query work,
    \item for some $c \in [0, 1)$, fully dynamic $O(n^c)$-approximate average-linkage \#HAC
      with $O(n^{1-c-\eps})$ amortized update and query work,
    \item for some $c \in [0, 1/2)$, incremental or decremental $O(n^c)$-approximate average-linkage \#HAC
      with $O(n^{1-2c-\eps})$ worst-case update and query work.
  \end{itemize}
  Then the SETH is false.
\end{theorem}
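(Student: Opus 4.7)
I would split the six bullets into two groups. The \#HAC lower bounds under complete and weighted average linkage follow directly from \cref{thm:subconn-reduces-to-hac}, which reduces ConnSub to \#HAC with only polylogarithmic overhead in the instance size; combined with the SETH-conditional $\Omega(n^{1-\eps})$ bound on ConnSub from \cref{tab:abboud-lower-bounds}, this handles those subcases in the fully dynamic, incremental, and decremental settings simultaneously. The remaining items, including HAC (not \#HAC) under complete or weighted average linkage and all four average-linkage items, I would prove by reducing from Chan's SubUnion with $|X|=O(\log|U|)$, whose SETH-conditional lower bound from the same table is $\Omega(|U|^{1-\eps})$---amortized in the fully dynamic case and worst-case in the partially dynamic cases.

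\textbf{SubUnion-to-HAC template.} Given a SubUnion instance $(X,U)$, I would construct an HAC graph with one ``element vertex'' per $u\in U$ and a small number of auxiliary vertices per set $X_i$, so that toggling whether $X_i\in S$ becomes a batch of $O(|X_i|)$ edge insertions or deletions. The weights are chosen so that a single query at fixed $s$, $t$, and threshold $\theta$ distinguishes $\bigcup_{X_i\in S}X_i=U$ from $\bigcup_{X_i\in S}X_i\ne U$: any uncovered element keeps its gadget from merging with $s$'s cluster before threshold $\theta$, separating $s$ and $t$ in the resulting dendrogram. For complete and weighted average linkage, the weight ratios can be made arbitrarily large without inflating the vertex count beyond $O(|U|\cdot \on{polylog}(|U|))$, so the $\Omega(n^{1-\eps})$ bound survives under any $O(n^c)$-approximation, giving the HAC halves of the first two bullets.

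\textbf{Average linkage and the main obstacle.} The four average-linkage bullets are more delicate because the similarity of clusters $X$ and $Y$ scales as $1/(|X||Y|)$, and an $O(n^c)$-approximation would otherwise absorb the gap between the covered and uncovered states. To force a genuine gap I would inflate each element and set gadget into a weighted clique whose size is polynomial in $n^c$, chosen so that covering versus not covering an element shifts the average-linkage similarity by strictly more than a factor of $n^c$. The graph then has $n=\Theta(|U|^{1+\alpha(c)})$ vertices for an explicit $\alpha(c)$ that differs among the four subcases; inverting this relationship to re-express $|U|^{1-\eps}$ in terms of $n$ is what produces the exponents $1-5c$, $1-c$, $(1-8c)/2$, and $1-2c$ listed in the theorem. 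The partially dynamic exponents are strictly worse because inflation cannot be amortized across toggles of a single set in only one direction, so each update pays the full gadget cost rather than sharing it over an insertion/deletion pair. The main obstacle I expect is calibrating $\alpha(c)$ separately in each of the four average-linkage subcases so that (i) the inflated gap beats the $n^c$ approximation slack, (ii) one SubUnion update is still simulated by a single batch of HAC updates, and (iii) translating $|U|^{1-\eps}$ back through $n$ yields the tight exponent claimed in the statement. Once the four gadgets are verified, plugging in the SubUnion bound from \cref{tab:abboud-lower-bounds} contradicts SETH under each of the six hypotheses, completing the proof.
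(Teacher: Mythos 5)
Your high-level decomposition is the same as the paper's: the complete/weighted-average \#HAC bullets come from the ConnSub$\to$\#HAC reduction of \cref{thm:subconn-reduces-to-hac} together with the SETH bound on ConnSub, and everything else comes from reducing SubUnion with $\abs{X}=O(\log\abs{U})$ to (approximate) HAC, padding the gadgets by a $\on{poly}(\lambda)$ factor to defeat $O(n^c)$-approximation, and then trading $\lambda$ against the instance size $n'=\tilde O(\lambda^a\abs{U}^b)$ to extract the exponents (this last step is exactly the paper's \cref{lem:subunion-to-approx}, and your $\alpha(c)$ calibration is its $(a,b)$ bookkeeping, with $(a,b)=(5,1),(8,2),(1,1),(2,1)$ producing $1-5c$, $(1-8c)/2$, $1-c$, $1-2c$).

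There is, however, one concrete flaw in your template: you simulate toggling $X_i\in S$ by ``a batch of $O(\abs{X_i})$ edge insertions or deletions.'' The conditional lower bound on SubUnion is \emph{per update}, so if one SubUnion update becomes $t_i$ HAC updates each costing $u(m',n')$, you only conclude $t_i\cdot u(m',n')=\Omega(\abs{U}^{1-\eps})$, i.e., $u=\Omega(\abs{U}^{1-\eps}/\abs{X_i})$. In the hard SubUnion instances $\abs{X}=O(\log\abs{U})$ but individual sets can have $\abs{X_i}=\Theta(\abs{U})$, so this is vacuous and none of the claimed bounds follow. The fix, which is what the paper does in \cref{lem:subunion-to-hac-complete,lem:subunion-to-hac-wpgma,lem:subunion-to-hac-upgma,lem:subunion-to-hac-upgma-count}, is to make all $\abs{X_i}$ element--set edges static (installed once at preprocessing) and route the membership toggle through $O(1)$ edges between the set vertex $X_i$ and one or two anchor vertices ($x$ and $y$); the anchors' large stars then neutralize the non-members. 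Relatedly, your explanation of why the partially dynamic average-linkage exponents degrade (``inflation cannot be amortized across toggles'') is not the actual mechanism: the issue is that a single toggle in the fully dynamic construction performs one insertion \emph{and} one deletion, which is disallowed in a partially dynamic algorithm; the paper instead wires $y$ to all of $X$ permanently and toggles only the $x$-side, which forces a much larger star $\ell_x$ and hence a larger $n'$, and that is where the weaker exponents come from.
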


\subsection{Proof of bounds}\label{sec:hac-prove-lower-bounds}

As examples, we will show two reductions that generate some of our lower bounds.
We defer the remaining reductions to \aref{app:hac-lower-bounds}.

Our first example reduces SubConn to weighted-average-linkage HAC.
\noconf{\Cref{lem:subconn-to-hac-complete} gives a similar reduction for complete linkage.}

\begin{figure}
  \centering
  \begin{subfigure}{0.3\columnwidth}
        \centering
        \includegraphics[scale=0.3]{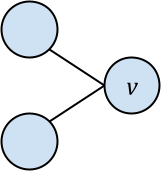}
        \caption{A vertex $v$ with two neighbors in an SubConn instance.}
    \end{subfigure}\hfill%
    \begin{subfigure}{0.65\columnwidth}
        \centering
        \includegraphics[scale=0.3]{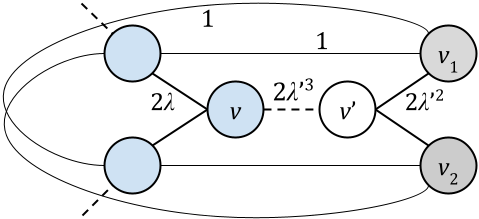}
        \caption{
          The corresponding HAC instance adds
          a star graph with center $v'$ and several leaves.
        }
    \end{subfigure}
  \caption{
    The figure displays the extra vertices added for a particular vertex $v$
    in \cref{lem:subconn-to-hac-wpgma}'s reduction from SubConn to weighted-average-linkage HAC.
  }
  \label{fig:subconn-to-hac-wpgma}
\end{figure}

\begin{lem}\label{lem:subconn-to-hac-wpgma}[Part of \cref{thm:subconn-reduces-to-hac}]
  Let $\lambda \in [1, \on{poly}(n)]$. Suppose we can solve dynamic / incremental / decremental $\lambda$-approximate
 weighted-average-linkage HAC in
 $p(m,n)$ preprocessing work, $u(m,n)$ update work, and
 $q(m,n)$ query work. Then, letting $m' = m(1 + \log \lambda)$ and $n' = n(1 +
 \log \lambda)$, we can solve dynamic / decremental / incremental SubConn
 with $O(m') + p(O(m'),O(n'))$ processing work, $u(O(m'),O(n'))$ update work,
 and $q(O(m'),O(n'))$ query work. The same relationship is also true between \#HAC and ConnSub.
\end{lem}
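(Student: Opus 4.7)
The plan is to encode a SubConn instance $(G,S)$ into a $\lambda$-approximate weighted-average-linkage HAC instance $H$ by attaching to every vertex a \emph{dilution gadget} governed by a single ``switch'' edge, so that one SubConn update translates to one HAC edge insertion or deletion and SubConn queries reduce to HAC co-membership queries at a fixed threshold. The same construction will simultaneously give the ConnSub-to-\#HAC direction once we observe that every gadget contributes a controllable number of clusters.

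\textbf{Construction and switch.} For each $v\in V$, the instance $H$ contains the vertex $v$, a hub $h_v$, and $k=\Theta(1+\log\lambda)$ leaves $\ell^v_1,\dots,\ell^v_k$. The edges of $H$ are: every original edge of $G$ at a common graph weight $w_g:=1$; each hub--leaf edge $\set{h_v,\ell^v_i}$ with strictly decreasing weights $W_1>W_2>\cdots>W_k$ chosen so that $W_i/W_{i+1}>\lambda$ and $W_k>\lambda w_g$; for every $u\in N(v)$ and every $i$, a weight-$0$ edge $\set{\ell^v_i,u}$; and a switch edge $\set{v,h_v}$ of weight $W_0>\lambda W_1$ that is present in $H$ exactly when $v\notin S$. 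The resulting $H$ has $O(n(1+\log\lambda))=O(n')$ vertices and $O(m(1+\log\lambda))=O(m')$ edges, so the reduction's setup costs $O(m')$ plus the HAC preprocessing cost $p(O(m'),O(n'))$. Because toggling $v$'s membership in $S$ toggles exactly one switch edge, each SubConn update maps to one HAC edge update (with the direction flipped: insertions into $S$ become deletions in $H$ and vice versa, matching the statement's ``dynamic / decremental / incremental'' swap). The query threshold is fixed at $\theta:=w_g=1$.

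\textbf{Correctness argument.} The geometric gaps in $W_0,\dots,W_k$ force any $\lambda$-approximate HAC to process gadget edges in strict decreasing order: while the switch at $v$ is present, the schedule must first merge $v$ with $h_v$, then absorb $\ell^v_1,\dots,\ell^v_k$ one at a time before any other merge involving this cluster is eligible. Using the weighted-average rule, each absorption of a leaf (which has a weight-$0$ edge to every $u\in N(v)$ but no edge to anything else the cluster reaches) halves the cluster's effective similarity to each $u\in N(v)$. After $k$ absorptions that similarity is at most $w_g/2^k<w_g/\lambda$, so the $\lambda$-approximate rule forbids the absorbed cluster from ever merging with a graph neighbor; $v$ is effectively cut from $G$. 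When the switch at $v$ is absent, the gadget has no positive-weight edge reaching $v$, so it collapses into an isolated cluster $C_v$ and $v$ joins graph-edge merges as a normal vertex. Thus the original vertices partition precisely by the connected components of $G[S]$, giving the SubConn reduction. Each gadget contributes exactly one further cluster (either $C_v$ or $\set{v}\cup C_v$), so the total \#HAC count equals (\#components of $G[S]$)$+\,n$, giving the ConnSub reduction by comparing the count to $n+1$.

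\textbf{Main obstacle.} The subtle point is that weighted-average similarities are merge-order sensitive, so the argument must hold against an \emph{adversarial} $\lambda$-approximate schedule. I plan a two-phase induction. In the absorption phase, I show by induction on the level $i$ that, as long as any switch or $W_i$ edge remains, every eligible merge is either an unprocessed switch or the highest-weight remaining hub--leaf edge of some already-switched gadget, because every other candidate (graph edges, lower-level hub--leaf edges, weight-$0$ leaf edges) has similarity strictly below $\mathcal{W}_\text{max}/\lambda$. In the post-absorption phase, I show that every remaining cross-cluster similarity involving an absorbed gadget is at most $w_g/2^k<w_g/\lambda$, ruling out both gadget-to-graph and gadget-to-gadget spurious merges regardless of the order in which the remaining graph-edge merges are scheduled. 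Combining the two phases yields the partition claim above, and since the same reduction and proof work for the incremental, decremental, and fully dynamic settings, the lemma follows in all three regimes.
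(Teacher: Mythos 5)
Your construction is essentially the paper's: the paper also attaches to each $v$ a star gadget with $O(1+\log\lambda)$ leaves, connects the leaves to $N(v)$, uses a single heavy ``switch'' edge $\set{v,v'}$ present exactly when $v\notin S$, and exploits the fact that each weighted-average merge with a leaf halves the gadget's similarity to $N(v)$ until it drops below $\mathcal{W}_{\max}/\lambda$. The update mapping, the fixed threshold, and the cluster-counting argument for ConnSub/\#HAC are all the same in spirit (your count of $\#\mathrm{comp}(G[S])+n$ correctly accounts for the orphaned gadgets of vertices in $S$).

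There is one concrete problem in your version: the geometric ladder $W_0>\lambda W_1$, $W_i/W_{i+1}>\lambda$ over $k=\Theta(\log\lambda)$ levels forces $W_0=\lambda^{\Omega(\log\lambda)}$. In the regime where the lemma is actually applied (\cref{thm:hac-seth-lower-bounds} uses $\lambda=n^{\Theta(1)}$), this is $n^{\Omega(\log n)}$, which violates the paper's standing assumption that weights fit in $O(1)$ words and breaks the claimed $O(m')$ construction cost (and the clean $u(O(m'),O(n'))$ update cost). The ladder is also unnecessary: the only ordering you genuinely need is that the switch merges before any leaf of the same gadget (otherwise an adversarial schedule could absorb all $k$ leaves into $h_v$ first and then average $v$ in last, yielding similarity $w_g/2$ instead of $w_g/2^k$ and ruining the dilution). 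Gaps \emph{between} leaf levels buy you nothing, since absorbing the leaves in any order produces the same final similarity by commutativity of the halving. The paper's fix is exactly this: give all $\ell$ leaf edges one common weight $W$ with $W_0>\lambda W>\lambda^2 w_g$, so all weights are $O(\lambda^2)=\on{poly}(n)$, and argue the leaf order is irrelevant by symmetry. With that change (and noting that your weight-$0$ leaf-to-neighbor edges are a legitimate, if slightly different, choice from the paper's weight-$1$ edges, since similarity $0$ can never reach $\mathcal{W}_{\max}/\lambda$), your proof goes through.
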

\begin{proof}
  Suppose we are given an unweighted graph $G = (V, E)$ and we want to solve
  SubConn or ConnSub, maintaining some dynamic subset of vertices $S$. Set $\theta =
  2\lambda$. Define $\ell = \ceil{\log(2\lambda)} = O(1 + \log\lambda)$ and
  $\lambda'=\lambda+1$.

  \emph{Preprocessing}: Construct a new, weighted graph $G'$ by copying $G$ and giving
  every edge a weight of $2\lambda$. For every $v \in V$, add a star graph to $G'$
  consisting of a center vertex $v'$ and $\ell$ leaves $v_1, \ldots, v_{\ell}$ with
  weight-$2\lambda'^2$ edges. For each $v_i$, create weight-$1$ edges to each
  vertex in $N(v)$. Connect $v$ to $v'$ with weight $2\lambda'^3$ if $v \not\in
  S$. \Cref{fig:subconn-to-hac-wpgma} illustrates this construction.
  The graph $G'$ has $O(n(1 + \log(\lambda)))$ vertices and $O(m(1 + \log(\lambda)))$ edges.
  Initialize dynamic HAC on $G'$ with $\theta = 2\lambda$.

  \emph{Update}: Simulate adding or removing a vertex $v$ in $S$ by
  removing or adding the weight-$2\lambda'^3$ edge $\set{v, v'}$.

  \emph{Query}: If we are reducing SubConn to HAC, return
  whether $s$ and $t$ are in the same cluster given similarity threshold
  $\theta$. If we are reducing ConnSub to \#HAC, then return whether the
  number of clusters is $\abs{V \setminus S} + 1$.

  \emph{Correctness}: Consider running HAC until similarity threshold $\theta =
  2\lambda$. Due to the preprocessing and the update strategy, every $v \in V \setminus S$
  has a weight-$2\lambda'^3$ edge to $v'$. These edges merge first,
  and then all of the
  weight-$2\lambda'^2$ edges merge. This puts each $v \in V \setminus S$ in a
  cluster $\set{v, v', v_1, v_2, \ldots, v_\ell}$, where the incident edges connect
  to $N(v)$ with weight $1 + (2\lambda - 1)2^{-\ell} < 2$. To see
  why the weight is $1 + (2\lambda - 1)2^{-\ell}$,
  consider $v
  \in V \setminus S$ and suppose without loss of generality that the
  weight-$2\lambda'^2$ edges for $v'$ merge in the order $v_1, v_2, \ldots,
  v_\ell$. Inductively, for $i = 0, 1, \ldots, \ell$, the incident edges on
  $v'$'s cluster have weight $1 + (2\lambda - 1)2^{-i}$ after $v_i$ merges with
  $v'$'s cluster, where $v_0 = v$. The base case is prior to merging, where the weight is $1+(2\lambda-1)2^0=2\lambda$, which is correct by construction.

  Therefore, the clusters for each $v \in V \setminus S$ do not participate in
  any more merges when clustering until threshold $\theta$. The remaining
  vertices cluster into their connected components in the subgraph induced by
  $S$, and a HAC query answers SubConn correctly. A \#HAC query will give 
  the number of connected components in the subgraph plus $\abs{V \setminus S}$ (one cluster for each
  $v \in V \setminus S$).
\end{proof}

Our second example reduces SubUnion to average linkage HAC. 
\noconf{\Cref{lem:subunion-to-hac-complete,lem:subunion-to-hac-wpgma,lem:subunion-to-hac-upgma-count}
give similar reductions to complete linkage HAC, weighted average linkage HAC, and average linkage \#HAC.}

\begin{figure}
  \includegraphics[scale=0.3]{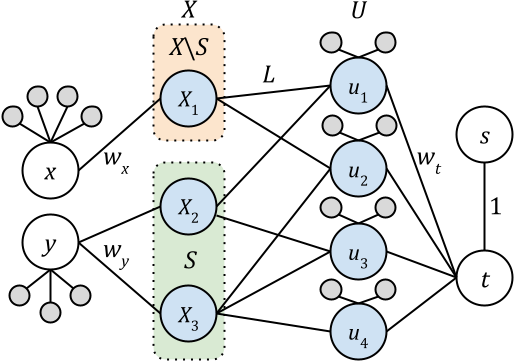}
  \caption{
  The figure illustrates the HAC instance constructed in 
  \cref{lem:subunion-to-hac-upgma}'s reduction from SubUnion
  to average-linkage HAC. The SubUnion instance in this example has
  $X = \set{X_1 = \set{u_1, u_2}, X_2 = \set{u_1, u_3}, X_3 = \set{u_2, u_3, u_4}}$
  and currently has $S = \set{X_2, X_3}$.
  We reduce the number of stars' leaves (gray vertices) displayed for cleanliness
  (e.g., $x$ should actually have dozens of leaves).
  }
  \label{fig:subunion-to-hac-upgma}
\end{figure}

\begin{lem}[Part of \cref{thm:hac-seth-lower-bounds}]\label{lem:subunion-to-hac-upgma}
  Let $\lambda \in [1, \on{poly}(n)]$.
  Suppose we can solve dynamic $\lambda$-approximate
  average-linkage HAC in
  $p(m,n)$ preprocessing work, $u(m,n)$ update work, and
  $q(m,n)$ query work. Then we can solve dynamic
  SubUnion with with $p(m', n')$ processing work, $u(m',n')$ update
  work, and $q(m',n')$ query work where $m'$ and $n'$ are
  $O(\lambda^5\abs{U}\abs{X} + \lambda^2\abs{X}^2)$.

  If we can solve incremental / decremental $\lambda$-approximate average-linkage
  HAC, then then the bounds hold for decremental / incremental
  SubUnion with $m'$ and $n'$ being $O(\lambda^{8}\abs{U}^2\abs{X} +
  \lambda^5\abs{U}\abs{X}^2 + \lambda^2\abs{X}^3)$.
\end{lem}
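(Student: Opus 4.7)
The plan is to build a weighted graph whose average-linkage dendrogram, truncated at a fixed similarity threshold $\theta$, places two designated query vertices $s,t$ in the same cluster precisely when $\bigcup_{X_i \in S} X_i = U$. Updates to $S$ will be simulated by a single edge insertion/deletion in the fully dynamic case, or by a monotone edge change in the partially dynamic case, so that the HAC preprocessing, update, and query subroutines directly implement those of SubUnion up to constant overhead per operation.

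\textbf{Construction (fully dynamic).} For each element $u \in U$ create a vertex $v_u$, and for each set $X_i \in X$ create a vertex $v_i$. Whenever $u \in X_i$ add a ``membership'' edge $\{v_u, v_i\}$ of fixed weight $w_M$. Attach to each $v_i$ an ``activation gadget'' consisting of a heavy edge to an auxiliary star; this activation edge is the single object toggled per update, and its presence encodes $X_i \in S$. Add a collector vertex $c$ linked to every $v_u$ by a weight-$w_C$ edge, with query vertices $s,t$ attached symmetrically to $c$. Finally, pad each principal vertex with many cheap leaves so that effective average-linkage cluster sizes scale as powers of $\lambda$; choosing padding of order $\lambda^5$ around the $|U||X|$ membership edges and $\lambda^2$ around the $|X|^2$ pairs of set-vertices yields a graph of size $O(\lambda^5|U||X|+\lambda^2|X|^2)$. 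Set $\theta$ just above the post-merge similarity on the $s$--$c$--$t$ path that would arise only when all $v_u$ are already co-clustered.

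\textbf{Forced merging order.} With the weight tiers tuned appropriately, any $\lambda$-approximate average-linkage execution up to threshold $\theta$ is forced into three phases: (i) every active $X_i$ first merges with its incident $v_u$'s via the activation edge, producing a ``covered cluster'' $\{v_i\} \cup \{v_u : u \in X_i\}$; (ii) covered clusters sharing a common $v_u$ merge transitively, collapsing each connected component of the covering hypergraph; (iii) the collector $c$, and with it $s$ and $t$, joins the $v_u$-cluster at similarity above $\theta$ only if the preceding phases have placed every $v_u$ in a single cluster, and otherwise attaches to disjoint pieces at similarity below $\theta$. Therefore $s$ and $t$ are co-clustered at threshold $\theta$ if and only if the active $X_i$'s cover $U$, which answers SubUnion.

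\textbf{Partially dynamic case and main obstacle.} For incremental (resp.\ decremental) SubUnion we cannot toggle activation edges in both directions, so activation is encoded by the one-way insertion (resp.\ deletion) of a monotone auxiliary star. Preserving the forced merge order under such monotone changes requires an additional $O(\lambda^3|U|)$ padding per set, which summed over all sets blows the instance up to $O(\lambda^8|U|^2|X|+\lambda^5|U||X|^2+\lambda^2|X|^3)$, matching the stated bound. The main obstacle is certifying phases (i)--(iii) under $\lambda$-approximation: an approximate algorithm may merge any pair whose similarity is within a $\lambda$ factor of the current maximum, and average linkage dilutes weights by the product of cluster sizes after each merge. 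I would fix the ratios $w_{\text{act}}:w_M:w_C$ as successive powers of $\lambda$ large enough to dominate these dilution factors, then verify inductively that after each planned merge the ratio between the next intended similarity and any competing unintended similarity remains at least $\lambda$. Once the merge order is locked in, correctness and the claimed work bounds follow immediately, since the reduction itself performs only $O(1)$ HAC operations per SubUnion operation on an instance of the stated size.
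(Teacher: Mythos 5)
There is a genuine gap in the correctness logic of your reduction. Your phases (ii)--(iii) test the wrong predicate: you have $c$ (and hence $s,t$) join ``the $v_u$-cluster'' only if ``every $v_u$ [is] in a single cluster,'' and conclude that $s,t$ are co-clustered iff the active sets cover $U$. But coverage is not connectivity of the covering hypergraph. For $X = \{X_1 = \{u_1\}, X_2 = \{u_2\}\}$, $U = \{u_1,u_2\}$, and $S = X$, the set $S$ covers $U$ yet your phase (ii) produces two disjoint covered clusters, so your phase (iii) reports non-coverage. The paper's construction instead detects the \emph{existence of an uncovered element} asymmetrically: $s$ is a pendant attached only to $t$ by a weight-$1$ edge, each $u \in U$ sits in its own star of exactly $\lambda$ vertices with a weight-$w_t$ edge to $t$, and the constants are tuned so that an uncovered $u$ presents similarity exactly $\lambda + 1$ to $t$ --- a full factor $\lambda$ above the $t$--$s$ similarity of $1$ --- forcing even a $\lambda$-approximate execution to merge $t$ into that star, which dilutes the $t$--$s$ similarity to $1/(\lambda+1)$ and permanently separates them. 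Covered elements, by contrast, are absorbed into a huge $y$-cluster whose similarity to $t$ drops below $1/\lambda$, so when everything is covered $t$'s only viable partner is $s$. Your symmetric attachment of $s$ and $t$ to a collector $c$ cannot reproduce this one-sided tie-breaking, and nothing in your construction distinguishes ``all of $U$ covered'' from ``all of $U$ covered and the cover is connected.''

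A second, related problem is your treatment of inactive sets. In the paper, removing $X_i$ from $S$ requires \emph{two} edge changes (delete $\{X_i,y\}$, insert $\{X_i,x\}$) precisely because an inactive $X_i$ must be parked inside the enormous star $C_x$ (of $\ell_x = |X|L/\lambda$ leaves) so that its weight-$L$ membership edges to the $u$'s are diluted below $\theta/\lambda$ and can never cause an uncovered $u$ to look covered. Your single-toggle ``activation gadget'' leaves a deactivated $v_i$ with its full-strength membership edges in place, so it can still merge with its $v_u$'s and corrupt the answer. This two-sided update is also exactly what breaks in the partially dynamic setting and motivates the paper's separate construction there ($y$ adjacent to all of $X$, only the $x$-edges toggled, with $\ell_x$ inflated to $2|X|w_y/(\lambda+1)$ so that $C_x$ and $C_y$ never merge) --- which is where the larger $O(\lambda^8|U|^2|X| + \lambda^5|U||X|^2 + \lambda^2|X|^3)$ bound actually comes from, not from generic ``monotone padding.''
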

\begin{proof}
  Suppose we are given a SubUnion instance $(X, U)$ with subset $S \subseteq X$.
  We focus on the case where we have a fully dynamic algorithm and defer the
  partially dynamic case to \aref{app:hac-lower-bounds}. 
  Define $\theta = 1$ as well as the following constants:
  \begin{align*}
    w_t &= (\lambda + 1)\lambda = O(\lambda^2), \\
    \ell_y &= \lambda w_t\abs{U} = O(\lambda^3 \abs{U}), \\
    L &= (\lambda + 1)^2\lambda(\ell_y + 1 + \abs{X} + \lambda\abs{U}) = O(\lambda^6\abs{U} + \lambda^3\abs{X}), \\
    \ell_x &= \abs{X}L/\lambda = O(\lambda^5\abs{U}\abs{X} + \lambda^2\abs{X}^2), \\
    w_y &= (\ell_y + \abs{X})L + 1 = O(\lambda^9\abs{U}^2 + \lambda^6\abs{U}\abs{X} + \lambda^3\abs{X}^2), \\
    w_x &= \lambda(\ell_x + \abs{X})L + 1 = O(\lambda^{12}\abs{U}^2\abs{X} + \lambda^{9}\abs{U}\abs{X}^2 + \lambda^6\abs{X}^3)
  .\end{align*}

  \emph{Preprocessing}: \Cref{fig:subunion-to-hac-upgma} illustrates the graph that we will construct.
  Create a graph $G$ with a vertex representing each $X_i
  \in X$, a vertex representing each $u \in U$, and a weight-$L$ edge $\set{X_i,
  u}$ for each $u \in X_i$ for each $X_i \in X$. Make each $u \in U$ a center of
  a star graph with $\lambda - 1$ leaves connected with weight $w_x$.
  Add a star graph with center $y$ and $\ell_y$ leaves connected to the center
  with weight $w_y$. Add a weight-$w_y$ edge from $y$ to each  $X_i \in
  S$. Add another star graph with center $x$ and $\ell_x$ leaves connected to
  the center with weight $w_x$. Add a weight-$w_x$ edge from $x$ to each 
  each $X_i \in X \setminus S$. Add two more vertices $s$ and $t$ with a
  weight-$1$ edge $\set{s, t}$, and add a weight-$w_t$ edge $\set{t, u}$ for
  each $u \in U$. This graph has $O(\ell_x)$ vertices and edges. Initialize
  HAC on this graph.

  \emph{Update}: Simulate adding $X_i$ to $S$ by adding a weight-$w_y$ edge from $X_i$
  to $y$ and removing the weight-$w_x$ from $X_i$ to $x$. Similarly, simulate removing
  $X_i$ from $S$ by removing edge $\set{X_i, y}$ and adding edge $\set{X_i, x}$.

  \emph{Query}: Query whether $s$ and $t$ are in the same cluster when
  performing HAC until to threshold $\theta = 1$. If yes, then return that $S$ covers
  $U$; otherwise, return that $S$ does not cover $U$.

  \emph{Correctness}: Consider running HAC until threshold $\theta$. The weights
  $w_x$ and $w_y$ are so large that all weight-$w_x$ and weight-$w_y$ edges
  merge before any weight-$L$ edges. Let $C_x$ denote $x$'s cluster (containing
  $X \setminus S$) and $C_y$ denote $y$'s cluster (containing $S$). From this
  point onwards, the similarity between $C_x$ to another cluster $C$ is 
  bounded above by considering the worst case where every $X_i$ in $C_x$ has a
  weight-$L$ edge to every $u \in U$ contained in $C$:
  \begin{align*}
    \frac{\abs{X\setminus S}\abs{C \cap U}L}{\smallabs{C_x}\abs{C}}
    \le
    \frac{\abs{X\setminus S}L}{\smallabs{C_x}\lambda}
    <
    \frac{\abs{X}L}{\ell_x\lambda}
    = 1
  ,\end{align*}
  where the first inequality used the fact that $\lambda\abs{C \cap U} \le
  \abs{C}$ due to each $u \in U$ being in a star of size $\lambda$.
  Hence, $C_x$ does not participate in any more merges until near the end of the
  agglomeration process, at which point those merges will not affect correctness.
  On the other hand, consider $C_y$, which has weight-$L$ edges connecting it to every $u
  \in U$ covered by $S$. Each $u \in U$, until it merges with $C_y$, is in a size-$\lambda$ cluster
  consisting only of the star centered on $u$. The
  similarity between $C_y$ and some $u \in U$ covered by $S$ and not yet merged
  with $C_y$ is always at least 
  \begin{align}\label{eq:subunion-to-hac-upgma:cy}
    \frac{L}{(\ell_y + 1 + \abs{X} + \lambda\abs{U})\lambda} = (\lambda + 1)^2
  .\end{align}
  In comparison, the similarity between $t$ and another adjacent cluster $C$ is $1$ if $C =
  \set{s}$ and is otherwise at most
  \begin{align}\label{eq:subunion-to-hac-upgma:t}
    \frac{w_t\abs{C \cap U}}{\abs{C}} \le \frac{w_t}{\lambda} = \lambda + 1
  \end{align}
  where the first inequality again uses the inequality $\lambda\abs{C
  \cap U} \le \abs{C}$. Comparing \cref{eq:subunion-to-hac-upgma:cy} to \cref{eq:subunion-to-hac-upgma:t}
  shows that $C_y$ merges with all $u \in U$ covered by $S$ before $t$ merges with
  anything.

  Now consider what $t$ merges with.  In the case where $S$ does not cover all of $U$,
  inequality~\eqref{eq:subunion-to-hac-upgma:t} is tight for every
  cluster $C$ representing an uncovered $u \in U$. In particular, the similarity
  between $t$ and such a cluster is $\lambda$ times greater than the
  similarity between $t$ and $s$ or between $C_x$ and any cluster.  Therefore,
  $t$ merges with some uncovered $u$'s star rather than merging with $s$ and is in a
  cluster of size $\lambda + 1$. The similarity between $t$'s
  cluster and $s$ falls to $1/(\lambda + 1)$, and $t$ and $s$ never merge.  Hence
  a query returns the correct result in this case.

  In the case where $S$ covers all of $U$, the only adjacent clusters to $t$ are
  $\set{s}$ and $C_y$. The similarity between $t$ and $C_y$ is 
  \begin{align*}
    \frac{w_t\abs{U}}{\ell_y + 1 + \abs{S} + \lambda\abs{U}}
    < \frac{w_t\abs{U}}{\ell_y}
    = 1/\lambda
  ,\end{align*}
  which is $\lambda$ times less than the similarity between $t$ and $s$.
  Hence $t$ and $s$ merge, and a query returns the correct result in this case
  too.
\end{proof}

To turn \cref{lem:subunion-to-hac-upgma}\noconf{ (and
\cref{lem:subunion-to-hac-upgma-count})} into the lower bounds in
\cref{thm:hac-seth-lower-bounds}, we need the following lemma.

\begin{lem}\label{lem:subunion-to-approx}
  Let $a$ and $b$ be constants. Let $\mathcal{P}$ be some dynamic problem. Suppose that given a dynamic / incremental /
  decremental SubUnion instance with
  $\abs{X} = O(\log \abs{U})$, for any value of $\lambda \ge 1$, we 
  can solve the instance by efficiently converting it to an instance of 
  $\lambda$-approximate $\mathcal{P}$ of size $n' = \tilde{O}(\lambda^a\abs{U}^b)$. 
  Assuming the SETH holds, for any $c < 1/a$, the update or query work of an $O(n^c)$-approximate
  $\mathcal{P}$ algorithm with $\poly{n}$ preprocessing work is $\Omega(n^{(1-ac)/b - o(1)})$ 
  amortized / worst-case / worst-case.
\end{lem}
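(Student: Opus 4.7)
The plan is to convert the SETH-based hardness of SubUnion into a lower bound on $\mathcal{P}$ by applying the hypothesized reduction with a carefully chosen approximation parameter $\lambda$. The SubUnion row of \cref{tab:abboud-lower-bounds} says that, conditional on the SETH, SubUnion with $\abs{X} = O(\log \abs{U})$ and $\poly{\abs{U}}$ preprocessing requires $\Omega(\abs{U}^{1-\eps})$ update/query work for every constant $\eps > 0$, amortized in the fully dynamic setting and worst-case in the partially dynamic settings. So if I can use an $O(n^c)$-approximate $\mathcal{P}$ algorithm to solve SubUnion via the given reduction, this lower bound transfers to $\mathcal{P}$.

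The key calculation is to pick $\lambda$ so that an $O(n^c)$-approximation on the reduction's output instance is at least as strong as a $\lambda$-approximation. The reduction produces an instance of size $n' = \tilde{O}(\lambda^a \abs{U}^b)$, so I want $(n')^c \lesssim \lambda$. Setting $\lambda = \tilde{O}(\abs{U}^{bc/(1-ac)})$ works: the exponent is a finite positive constant since $c < 1/a$, substitution gives $n' = \tilde{O}(\abs{U}^{b/(1-ac)})$, and a brief check yields $(n')^c = \tilde{O}(\lambda)$. The required $\lambda \le \poly{n'}$ follows from the same arithmetic. Composing these steps, if an $O(n^c)$-approximate $\mathcal{P}$ algorithm with $\poly{n}$ preprocessing had update/query work $O(n^{(1-ac)/b - \eta})$ for some constant $\eta > 0$, then on the reduction's output it would solve SubUnion with update/query work $O(\abs{U}^{(b/(1-ac)) \cdot ((1-ac)/b - \eta)}) = O(\abs{U}^{1 - b\eta/(1-ac)})$, contradicting the SETH-based lower bound. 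Thus $u(n) = \Omega(n^{(1-ac)/b - o(1)})$ as claimed, and the fully/incremental/decremental distinction tracks through by invoking the corresponding row of \cref{tab:abboud-lower-bounds} with the matching amortization regime.

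The main obstacle is purely bookkeeping rather than combinatorial: one must choose $\lambda$ so that the approximation ratios align, verify that $\lambda \le \poly{n'}$ remains within the lemma's stated regime (so the $O(n^c)$-approximation on the constructed instance genuinely functions as a $\lambda$-approximation), and confirm that the hypothesized $\poly{n}$ preprocessing work for $\mathcal{P}$ translates to $\poly{\abs{U}}$ preprocessing for SubUnion (so the SubUnion lower bound is applicable). All three reduce to the single exponent calculation above, so no new combinatorial ingredients beyond the hypothesized reduction are required.
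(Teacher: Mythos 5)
Your proof is correct and follows essentially the same route as the paper's: the identical choice $\lambda = \tilde{\Theta}(\abs{U}^{bc/(1-ac)})$, the same substitution yielding $n' = \tilde{O}(\abs{U}^{b/(1-ac)})$ with $(n')^c = \tilde{O}(\lambda)$, and the same contradiction with the SETH-based SubUnion lower bound. The only difference is that you make explicit the bookkeeping (that $\lambda \le \poly{n'}$ and that $\poly{n'}$ preprocessing is $\poly{\abs{U}}$) which the paper leaves implicit.
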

\begin{proof}
  Set $\lambda =
  \tilde{\Theta}(\abs{U}^{bc/(1-ac)})$ so that
  $
    n' = \tilde{O}(\lambda^a \abs{U}^b) =
    \tilde{O}(\abs{U}^{abc/(1-ac)}\abs{U}^b) = \tilde{O}(\abs{U}^{abc/(1-ac)}\abs{U}^{(b-abc)/(1-ac)}) 
    = \break  \tilde{O}(\abs{U}^{b/(1-ac)})
  $ 
  and $\lambda \ge n'^c$. Solve the SubUnion instance by
  generating a $\lambda$-approximate instance of $\mathcal{P}$ of size $n'$ and running an
  $O(n'^c)$-approximate algorithm for $\mathcal{P}$. The
  update and query time for the algorithm cannot both be
  $O(n'^{(1-ac)/b - \Omega(1)}) = \tilde{O}(\abs{U}^{1-\Omega(1)})$ because
  such a work bound for SubUnion is impossible if the SETH holds.
\end{proof}

For example, \cref{lem:subunion-to-hac-upgma} shows that SubUnion
with $\abs{X} = O(\log \abs{U})$
can be solved by running dynamic $\lambda$-approximate average-linkage HAC on a graph with
$\tilde{O}(\lambda^5\abs{U})$ vertices and edges. Setting
$(a, b) = (5, 1)$ in the lemma above gives the conditional lower
bound on fully dynamic average-linkage HAC stated in
\cref{thm:hac-seth-lower-bounds}.
\section{Conclusion}

In this paper, we gave a fully dynamic MSF algorithm that processes a
batch of $k$ updates in $O(k \log^6 n)$ expected amortized work and $O(\log^3 n \log k)$ span w.h.p.
This gives a batch-dynamic algorithm that can answer queries about single-linkage graph HAC clusters.
We also showed that graph HAC requires polynomial query or update time for
other common linkage functions unless we can break long-standing computational
complexity conjectures. This suggests that future work on dynamic HAC algorithms
for these linkage functions may wish to avoid targeting worst-case inputs.

For future work, it would be desirable to reduce the running time of the MSF
algorithm further. Can we match the $O(k \log^4 n)$ work of the
sequential HDT MSF algorithm? Can we match the $O(\log^3 n)$ span of
of Acar et al.'s best dynamic connectivity bounds? It would also be interesting to design practical implementations of our MSF algorithm.  
Finally, it would be interesting to  find restricted input classes on which 
we can break the lower bounds shown in this paper.

\begin{acks} 
We thank the reviewers for their helpful feedback.
This research was supported by DOE
Early Career Award
\#DE-SC0018947, NSF CAREER Award 
\#CCF-1845763, NSF Award \#CCF-2103483, Google Faculty Research Award, Google Research Scholar Award, FinTech@CSAIL Initiative, DARPA SDH Award \#HR0011-18-3-0007, and Applications Driving Architectures (ADA) Research Center, a JUMP Center co-sponsored by SRC and DARPA.
\end{acks}

\bibliographystyle{ACM-Reference-Format}
\bibliography{refs}

\noconf{
\appendix
\section{Relative quantile summaries}\label{app:quantile-proofs}

\subsection{Description}\label{app:quantile-proofs-seq}

In this section, we describe Zhang and Wang's mergeable relative quantile summary and give
proofs of correctness since Zhang and Wang's paper omits several
proofs~\cite{zhang2007efficient}. Some details are changed in the quantile
summaries to make our proofs work.

The quantile summaries discussed here are all of a particular form. Each summary
$Q = \set{q_1, q_2, \ldots, q_{\abs{Q}}}$ of a set $S$
is a sorted subset of $S$ with $q_1 = \min (S)$ and $q_{\abs{Q}} = \max (S)$. The subset
is simply stored as a length-$\abs{Q}$ array. For each $q \in
Q$, we maintain two integers $\rmin(q;Q)$ and $\rmax(q;Q)$ bounding the rank
of $q$ in $S$, i.e., $\rank(q;S) \in [\rmin(q;Q), \rmax(q;Q)]$. (We omit
the second argument of $\rmin(\cdot;\cdot)$ and $\rmax(\cdot;\cdot)$ when it is
clear from context.) We always maintain the minimum and maximum ranks exactly:
$\rmin$ and $\rmax$ are $1$ for $q_1$ and are $\abs{S}$ for $q_{\abs{Q}}$.  We can
assume that $\rmin$ and $\rmax$ are each strictly increasing with respect to their first 
arguments---it is easy to adjust them to be strictly increasing if not. 
The elements of $Q$, the values $\rmin(\cdot;Q)$, and the values $\rmax(\cdot;Q)$ are 
each stored in a length-$\abs{Q}$ array. 

The following lemma shows that if consecutive elements in $Q$ are close
together, then $Q$ can answer the approximate quantile queries described in
\cref{sec:quantile}.
\begin{lem}\label{lem:quantile-constraint}
  Given a summary $Q$ of the above form, suppose that for all $i \in \set{1, 2, \ldots, \abs{Q} -
  1}$ that
  \begin{equation}\label{eq:quantile-constraint}
    \rmax(q_{i+1}) - \rmin(q_i) \le \max \set{\frac{2\eps\rmin(q_i)}{1-\eps},
    1}
  .\end{equation}
  Then $Q$ is a $\eps$-approximate relative quantile summary where
  queries can be answered in $O(\log \abs{Q})$ work.
\end{lem}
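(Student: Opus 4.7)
The plan is to answer a query for rank $r$ by binary searching for the largest index $j$ such that $\rmax(q_j) \le r(1+\eps)$, and then returning $q_j$. Since the values $\rmax(q_1) < \rmax(q_2) < \cdots$ are stored in a length-$\abs{Q}$ array and are strictly increasing, this binary search runs in $O(\log \abs{Q})$ work, matching the claimed query bound. The index $j$ is well defined because $\rmax(q_1) = 1 \le r(1+\eps)$ for any $r \ge 1$. The upper side of the approximation guarantee, $\rank(q_j; S) \le \rmax(q_j) \le r(1+\eps)$, is then immediate from the choice of $j$, so the entire work of the proof lies in establishing the matching lower bound $\rank(q_j; S) \ge r(1-\eps)$.

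To prove the lower bound I would first dispose of the boundary case $j = \abs{Q}$, where $\rank(q_j; S) = \abs{S} \ge r \ge r(1-\eps)$ since $r \in [1, \abs{S}]$. In the remaining case $j < \abs{Q}$, the maximality of $j$ gives $\rmax(q_{j+1}) > r(1+\eps)$, and I would apply the constraint~\eqref{eq:quantile-constraint} to the pair $(q_j, q_{j+1})$, splitting on which branch of the $\max$ dominates.

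When $2\eps\,\rmin(q_j)/(1-\eps) \ge 1$, the constraint simplifies to $\rmax(q_{j+1}) \le \rmin(q_j)(1+\eps)/(1-\eps)$; combining with $\rmax(q_{j+1}) > r(1+\eps)$ yields $\rmin(q_j) > r(1-\eps)$ by direct algebra, and hence $\rank(q_j; S) \ge \rmin(q_j) > r(1-\eps)$. When the other branch dominates, the constraint reads $\rmax(q_{j+1}) - \rmin(q_j) \le 1$. Strict monotonicity of $\rmax$ together with $\rmax(q_j) \ge \rmin(q_j)$, and the fact that both quantities are integers, then forces $\rmax(q_{j+1}) = \rmin(q_j) + 1$ and $\rmax(q_j) = \rmin(q_j)$, so $\rank(q_j; S)$ is pinned down exactly to $\rmin(q_j)$. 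Invoking the hypothesis that $[r(1-\eps), r(1+\eps)]$ contains an integer $k$, and using that $\rmax(q_{j+1})$ is an integer strictly greater than $r(1+\eps) \ge k$, I get $k + 1 \le \rmax(q_{j+1}) = \rmin(q_j) + 1$, so $\rank(q_j; S) = \rmin(q_j) \ge k \ge r(1-\eps)$.

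The main obstacle is this small-rank regime where the ``$1$'' branch of the $\max$ dominates: there the multiplicative $(1 \pm \eps)$ window can be narrower than a single rank, and a purely multiplicative argument as in the other branch fails. The integer-containment hypothesis built into the definition of a valid query is exactly the extra assumption that rescues the argument here, since it lets one convert the strict real inequality $\rmax(q_{j+1}) > r(1+\eps)$ into the integer inequality $\rmax(q_{j+1}) \ge k + 1$.
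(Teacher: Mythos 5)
Your proposal is correct and matches the paper's proof in all essentials: the same binary search for the largest $q$ with $\rmax(q) \le r(1+\eps)$, the same boundary case $j = \abs{Q}$, and the same two-regime analysis in which the multiplicative branch of the constraint is handled by direct algebra and the ``$\le 1$'' branch is rescued by the hypothesis that $[r(1-\eps), r(1+\eps)]$ contains an integer. The only difference is organizational --- you split directly on which branch of the $\max$ dominates and use integrality plus strict monotonicity of $\rmax$ to pin down $\rank(q_j)$ exactly, whereas the paper packages the same inequalities into a single proof by contradiction.
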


We will refer to quantile summaries satisfying \cref{eq:quantile-constraint} as
\emph{ZW summaries}. To prove that \cref{lem:quantile-constraint} is true, we need to give 
an algorithm for answering queries on ZW summaries. The algorithm is given in \cref{alg:quantile-query}, 
and we discuss its correctness in the proof of \cref{lem:quantile-constraint} that immediately follows.

\begin{algorithm}
  \captionsetup{font=footnotesize}
  \caption{The algorithm for answering query for rank $r$ on an $\eps$-approximate ZW summary $Q$.}
  \label{alg:quantile-query}
\begin{algorithmic}[1]
          \figfont
  \Procedure{Query}{$Q,r$}
    \State \Return Largest $q \in Q$ such that $\rmax(q; Q) \le r(1+\eps)$ \Comment Binary search
  \EndProcedure
\end{algorithmic}
\end{algorithm}

\begin{proof}[Proof of \cref{lem:quantile-constraint}]
  The proof is similar to the proof for a similar lemma for uniform quantile
  summaries by Greenwald and Khanna~\cite{greenwald2001space}.

  Suppose we are given a query rank $r \in [1, \abs{S}]$ where $S$ is the set that
  $Q$ represents.  Assuming that $[r(1-\eps), r(1+\eps)]$ contains an integer,
  we want to return some element $y$ such that $\rank(y) \in [r(1-\eps),
  r(1+\eps)]$. The strategy will be to find some $q_i \in Q$ such that
  $[\rmin(q_i),\rmax(q_i)]\subseteq [r(1-\eps), r(1+\eps)]$. Then we will be able
  to return $q_i$ as the answer since $\rank(q_i) \in [\rmin(q_i),\rmax(q_i)]$.
  
  We run \cref{alg:quantile-query}, which binary searches in $\rmax(\cdot;Q)$ for the largest
  index $i$ such that $\rmax(q_i) \le r(1+\eps)$ and returns $q_i$. Such an index 
  always exists since $\rmax(q_1) = 1 \le r < r(1+\eps)$.
  To achieve $[\rmin(q_i),\rmax(q_i)]\subseteq [r(1-\eps), r(1+\eps)]$ as desired, it
  only remains to prove that $r(1-\eps)  \le \rmin(q_i)$.

  If $i = \abs{Q}$, then $\rmin(q_i) = \abs{S} \ge r > r(1-\eps)$. Then we are done.
  If $i < \abs{Q}$, by choice of $i$, we know that $\rmax(q_{i+1}) > r(1+\eps)$.
  Assume for contradiction that $r(1-\eps) > \rmin(q_i)$. Then
  \begin{equation}\label{eq:quantile-large-gap}
    \rmax(q_{i+1}) - \rmin(q_i) > r(1+\ep) - r(1-\ep) = 2\ep r > \frac{2 \ep
    \rmin(q_i)}{1-\ep}
  .\end{equation}
  Comparing this against \cref{eq:quantile-constraint}, we must be in the case
  where $\rmax(q_{i+1}) - \rmin(q_i) \le 1$.
  Then
  \begin{align*}
    \rmin(q_i) &< r(1-\eps) < r(1+\eps) < \rmax(q_{i+1}) \le \rmin(q_i) + 1
  .\end{align*}
  This is impossible since it shows that there is no integer between $r(1-\eps)$
  and $r(1+\eps)$. Hence we have a contradiction---we indeed have
  $[\rmin(q_i),\rmax(q_i)]\subseteq [r(1-\eps), r(1+\eps)]$ and can return $q_i$
  as our answer.
\end{proof}

\begin{proof}[Proof of \cref{lem:quantile-query}]
  Queries are answered by binary search (\cref{alg:quantile-query}) 
  in $O(\log \abs{Q})$ work.
  Fetching $\min (S)$ takes $O(1)$ time because the first element of $Q$ is always $\min
  (S)$ by construction.
\end{proof}

%

\begin{lem}\label{lem:quantile-construct-seq}
  Given a set $S$ of $n$ elements, we can construct an
  $\ep$-approximate ZW summary $Q$ that summarizes $S$ and consists of $O(\log(\ep n)/\ep)$ elements
  from $S$.
\end{lem}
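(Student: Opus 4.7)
The plan is to build $Q$ by greedily selecting a subset of $S$ whose ranks grow geometrically, so that the ZW constraint \eqref{eq:quantile-constraint} is satisfied element-by-element. Since we are constructing from scratch, for each selected element $q_i$ we can set $\rmin(q_i;Q) = \rmax(q_i;Q) = \rank(q_i;S)$, which simplifies the constraint to requiring that consecutive selected ranks $r_i := \rank(q_i;S)$ satisfy
\begin{equation*}
  r_{i+1} - r_i \le \max\!\Bigl\{\tfrac{2\ep\, r_i}{1-\ep},\, 1\Bigr\}.
\end{equation*}

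First I would handle the case $\ep n \le 2$ (or any similar threshold) by just taking $Q = S$, whose size is trivially $O(\log(\ep n)/\ep)$ under a suitable convention. Otherwise, split $Q$ into two parts. The \emph{dense prefix} consists of every element of $S$ with rank in $[1, \lceil (1-\ep)/(2\ep) \rceil]$; on this range $\max\{2\ep r/(1-\ep),1\} = 1$, so advancing rank by $1$ at each step is forced and feasible, and this contributes $O(1/\ep)$ elements. The \emph{geometric tail} starts from rank $r_0 = \Theta(1/\ep)$ and greedily picks $r_{i+1} = r_i + \lfloor 2\ep r_i/(1-\ep)\rfloor$, always including $\max(S)$ as the final element so that $\rmax(q_{|Q|}) = n$ exactly.

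The verification of \eqref{eq:quantile-constraint} is then immediate by construction on both parts. For the size bound on the geometric tail, each step multiplies the rank by at least $1 + \Theta(\ep)$, so the number of steps to reach $n$ starting from $\Theta(1/\ep)$ is at most $\log_{1+\Theta(\ep)}(\ep n) = O(\log(\ep n)/\ep)$. Since $\ep n \ge 2$ forces $\log(\ep n) = \Omega(1)$, the $O(1/\ep)$ dense prefix is absorbed into $O(\log(\ep n)/\ep)$, giving the claimed overall bound.

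The main subtlety, and really the only place where care is needed, is at the boundary between the dense prefix and the geometric tail: once $2\ep r/(1-\ep)$ crosses $1$, the floor in the recursion $r_{i+1} = r_i + \lfloor 2\ep r_i/(1-\ep)\rfloor$ must still guarantee a strictly positive step, which holds precisely because $r_0$ is chosen to make $2\ep r_0/(1-\ep) \ge 1$. Finally, since the selected ranks are strictly increasing and $\rmin, \rmax$ are set to the exact ranks, they satisfy the strict-monotonicity property stated earlier, so the resulting $Q$ is a well-formed ZW summary of the required size, and by \cref{lem:quantile-constraint} it is an $\ep$-approximate relative quantile summary.
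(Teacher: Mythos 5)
Your proposal is correct and follows essentially the same strategy as the paper's proof: take every element in a dense low-rank prefix of size $O(1/\ep)$, then space the remaining selected ranks geometrically so that each gap is at most $2\ep\,\rmin(q_i)/(1-\ep)$, with $\rmin=\rmax=\rank$ set exactly. The only difference is cosmetic --- the paper realizes the geometric tail via explicit dyadic intervals $[2^{i-1}/\ep, 2^i/\ep)$ with $1/\ep$ evenly spaced picks each, while you use a greedy recursion $r_{i+1}=r_i+\lfloor 2\ep r_i/(1-\ep)\rfloor$; both yield the same $O(\log(\ep n)/\ep)$ count by the same multiplicative-growth argument.
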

\begin{proof}
  For simplicity, assume $1/\ep$ is an integer. Put all elements from $S$ of rank less
  than $1/\ep$ into $Q$. Then consider the intervals $[2^{i-1}/\ep, 2^i/\ep)$
  for $i \in \set{1,\ab 2,\ab \ldots,\ab \log(\ep n)}$. For interval $[2^{i-1}/\ep,
  2^i/\ep)$, put the elements of rank $2^{i-1}/\ep + 2^{i-1}j$ into $Q$ for each $j \in
  \{0,\ab 1,\ab 2,\ab \ldots,\ab 1/\ep - 1\}$. Finally, put the maximum of $S$ into $Q$.
  For each element $q$ in $Q$, set $\rmin(q)$ and $\rmax(q)$ to be the rank
  of $q$. The size of $Q$ is $(\log(\ep n) + 1)/\ep$.

  To see that this summary is indeed $\ep$-approximate, we want to show that it
  satisfies \cref{eq:quantile-constraint}. Consider consecutive
  elements $q_k$ and $q_{k+1}$ in $Q$. If $\rank(q_k) < 1/\ep$ then
  $\rank(q_{k+1}) = \rank(q_k) + 1$, so $\rmax(q_{k+1})-\rmin(q_k) \le 1$. Otherwise,
  the rank of $q_k$ is in some interval
  $[2^{i-1}/\ep, 2^i/\ep)$, and the rank of $q_{k+1}$ is $\rank(q_k) + 2^{i-1}$.
  Then
  \begin{align*}
    \rmax(q_{k+1})-\rmin(q_k) = \rank(q_{k+1}) - \rank(q_k) = 2^{i-1} \\ \le
    \ep \rank(q_k) < \frac{2\ep\rmin(q_k)}{1-\ep}
  .\end{align*}
  In either case we have satisfied \cref{eq:quantile-constraint}.
\end{proof}

To combine two ZW summaries $Q_1$ and $Q_2$ over non-overlapping sets
$S_1$ and $S_2$ into a new summary over $S_1 \cup S_2$, we can \emph{merge}
$Q_1$ and $Q_2$. Then, if we need to shrink the space usage of the resulting summary,
we can \emph{prune} it.

\begin{lem}[\cite{zhang2007efficient}]\label{lem:quantile-merge}
  Given $\ep$-approximate ZW summaries $Q_1$ and $Q_2$ representing sets $S_1$
  and $S_2$, we can merge the summaries
  into an $\ep$-approximate ZW summary $Q$ of size $\abs{Q_1} + \abs{Q_2}$ that
  represents set $S_1 \cup S_2$.
\end{lem}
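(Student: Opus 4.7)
The plan is to define the merged summary as the sorted union $Q := Q_1 \cup Q_2$, which has $\abs{Q_1} + \abs{Q_2}$ elements because $S_1$ and $S_2$ are disjoint, and to augment each element with new $\rmin$ and $\rmax$ values that account for contributions from the ``other'' summary. For each $q \in Q_1$, let $q^{\downarrow}$ denote the largest element of $Q_2$ with $q^{\downarrow} \le q$ and $q^{\uparrow}$ the smallest element of $Q_2$ with $q^{\uparrow} > q$. I would set
\[
  \rmin(q;Q) = \rmin(q;Q_1) + \rmin(q^{\downarrow};Q_2), \qquad \rmax(q;Q) = \rmax(q;Q_1) + \rmax(q^{\uparrow};Q_2) - 1,
\]
interpreting the $Q_2$-contribution as $0$ when $q^{\downarrow}$ is absent and as $\abs{S_2}$ when $q^{\uparrow}$ is absent; a symmetric definition handles $q \in Q_2$. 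These are correct rank bounds on $S_1 \cup S_2$ because at least $\rmin(q^{\downarrow};Q_2)$ elements of $S_2$ lie at or below $q$, and at most $\rmax(q^{\uparrow};Q_2) - 1$ elements of $S_2$ are strictly less than $q^{\uparrow}$ and hence at most equal to $q$. The endpoint conditions $q_1 = \min(S_1 \cup S_2)$, $q_{\abs{Q}} = \max(S_1 \cup S_2)$, $\rmin(q_1;Q) = 1$, and $\rmax(q_{\abs{Q}};Q) = \abs{S_1} + \abs{S_2}$ all follow directly from the definitions.

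The main step is verifying the ZW inequality $\rmax(q_{i+1};Q) - \rmin(q_i;Q) \le \max\{2\ep\,\rmin(q_i;Q)/(1-\ep),\ 1\}$ for every consecutive pair in $Q$, split according to the origin of $q_i$ and $q_{i+1}$. In the within-summary case (say both in $Q_1$), letting $q', q'' \in Q_2$ be the $Q_2$-elements immediately below $q_i$ and above $q_{i+1}$ respectively, a direct calculation yields
\[
\rmax(q_{i+1};Q) - \rmin(q_i;Q) = [\rmax(q_{i+1};Q_1) - \rmin(q_i;Q_1)] + [\rmax(q'';Q_2) - \rmin(q';Q_2)] - 1.
\]
Each bracketed term is bounded by \cref{eq:quantile-constraint} applied to its respective summary. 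A four-way case analysis on whether each bound resolves to the $1$-branch or to the $2\ep\,\rmin/(1-\ep)$-branch shows, using the additive identity $\rmin(q_i;Q) = \rmin(q_i;Q_1) + \rmin(q';Q_2)$, that the $-1$ slack in the display always absorbs the overshoot produced when the two branches combine. The cross case ($q_i \in Q_1$, $q_{i+1} \in Q_2$) decomposes analogously into a ZW bound from $Q_1$ (between $q_i$ and its $Q_1$-successor) plus one from $Q_2$ (between $q_{i+1}$'s $Q_2$-predecessor and $q_{i+1}$), again with a $-1$ correction, and the same case analysis applies.

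The main obstacle is bookkeeping the boundary subcases in which one of the bracketing elements does not exist. For instance, in the cross case where $q_{i+1} \in Q_2$ has no $Q_1$-successor, I would observe that $q_i$ must then equal $\max(Q_1)$ (since nothing of $Q_1$ lies between $q_i$ and $q_{i+1}$ or above $q_{i+1}$), so $\rmin(q_i;Q_1) = \rmax(q_i;Q_1) = \abs{S_1}$; the ``missing'' $\abs{S_1}$-sized contribution to $\rmax(q_{i+1};Q)$ then cancels cleanly against $\rmin(q_i;Q_1)$, and the inequality reduces to a single ZW bound on $Q_2$. A dual argument handles the case where $q^{\downarrow}$ is absent because $q_{i+1} = \min(Q_2)$. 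After these verifications, together with an optional postprocessing pass enforcing strict monotonicity of $\rmin$ and $\rmax$ (as permitted by the discussion preceding the lemma), $Q$ is the desired $\ep$-approximate ZW summary of $S_1 \cup S_2$ of size $\abs{Q_1} + \abs{Q_2}$.
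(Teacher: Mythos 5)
Your construction is exactly the one the paper uses (sorted merge with $\rmin(q;Q)=\rmin(q;Q_1)+\rmin(s;Q_2)$ and $\rmax(q;Q)=\rmax(q;Q_1)+\rmax(t;Q_2)-1$, with the same boundary conventions), so the approach is essentially identical. The only difference is that you sketch the case analysis verifying \cref{eq:quantile-constraint} explicitly, whereas the paper delegates that verification to Greenwald and Khanna's proof for the uniform-error setting; your decomposition and the observation that the $-1$ absorbs the overshoot are correct.
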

\begin{proof}
  As described by Zhang and Wang, we construct $Q$ by merging $Q_1$
  and $Q_2$ in sorted order~\cite{zhang2007efficient}. To set $\rmin(q;Q)$ and
  $\rmax(q;Q)$ for some $q \in Q$, suppose that $q \in Q_1$ (the other case is
  symmetric). Let $s$ be the largest element of $Q_2$ smaller than $q$ and let
  $t$ be the smallest element of $Q_2$ larger than $q$. Then set
  \begin{align*}
    \rmin(q;Q) &=
    \begin{cases}
      \rmin(q;Q_1) + \rmin(s; Q_2) & \text{if $s$ exists,} \\
      \rmin(q;Q_1) & \text{otherwise,}
    \end{cases} \\
    \rmax(q;Q) &=
    \begin{cases}
      \rmax(q;Q_1) + \rmax(t; Q_2) - 1 & \text{if $t$ exists,} \\
      \rmax(q;Q_1) + \rmax(s; Q_2)   & \text{otherwise.}
    \end{cases}
  \end{align*}

  This algorithm is the same as the merge algorithm Greenwald and
  Khanna described for uniform approximate quantile
  summaries, and the analysis showing that this algorithm satisfies
  \cref{eq:quantile-constraint} is essentially the same
  as Greenwald and Khanna's proof of correctness~\cite{greenwald2004power}.
\end{proof}

\begin{lem}\label{lem:quantile-prune}
  Given a positive integer $B>0$ and an $\ep$-approximate ZW summary $Q'$
  over a set of $n$ elements, we can prune it to construct a new
  $(\ep + 1/B)$-approximate ZW summary $Q$
  of size $O(B \log (n/B))$ over the same set.
\end{lem}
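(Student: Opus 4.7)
The plan is to construct $Q$ by a greedy left-to-right subsampling of $Q'$: at each step, keep the furthest element of $Q'$ for which the $(\ep+1/B)$-approximation constraint still holds against the previously included element. Concretely, initialize $Q = \{q'_1\}$ (the minimum of $S$); with $q$ denoting the most recently included element, iteratively find the largest index $k'$ such that
\[
\rmax(q'_{k'};Q') - \rmin(q;Q') \;\le\; \max\!\left\{\tfrac{2(\ep+1/B)\rmin(q;Q')}{1-\ep-1/B},\ 1\right\},
\]
insert $q'_{k'}$ into $Q$, and repeat, appending $q'_{\abs{Q'}}$ at the end if the greedy has not already included it. Inherit $\rmin(q;Q) = \rmin(q;Q')$ and $\rmax(q;Q) = \rmax(q;Q')$ for each surviving element; these remain valid bounds on $\rank(q;S)$ and may be tweaked for strict monotonicity as in \cref{app:quantile-proofs-seq}.

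The $(\ep+1/B)$-approximation constraint from \cref{eq:quantile-constraint} holds between consecutive elements of $Q$ by construction of the greedy rule. For the size bound, I would show that each greedy step advances the last-included element's $\rmin$ by at least a constant fraction of the constraint's right-hand side, call it $\mathrm{RHS}$. The maximality of $k'$ forces $\rmax(q'_{k'+1};Q') > \rmin(q;Q') + \mathrm{RHS}$; applying $Q'$'s own $\ep$-approximation constraint to the pair $(q'_{k'},q'_{k'+1})$ then lower-bounds $\rmin(q'_{k'};Q')$ by $\rmin(q;Q') + \Omega(\mathrm{RHS})$. I would then partition rank space into a ``small'' regime $[1, O(B)]$, where $\mathrm{RHS} \ge 1$ bounds the segment by $O(B)$ steps, and the geometrically doubling bands $[2^{i-1}B, 2^iB)$ for $i \ge 1$, where $\mathrm{RHS} = \Omega(\rmin/B)$ bounds each band by $O(B)$ steps; summing over the $O(\log(n/B))$ bands gives $O(B\log(n/B))$ total steps.

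The main obstacle is ensuring that the slack subtracted in the lower bound cannot swallow $\mathrm{RHS}$ itself, since $Q'$'s per-element slack scales with $\ep/(1-\ep)$ while $\mathrm{RHS}$ scales with $(\ep+1/B)/(1-\ep-1/B)$. I expect to handle this via the elementary inequality $2(\ep+1/B)/(1-\ep-1/B) \ge 2\ep/(1-\ep) + 2/B$, so that after subtracting the $\ep$-slack from $\mathrm{RHS}$ a residual of at least $2\rmin(q;Q')/B$ remains, and by using $\rmin(q'_{k'};Q') \le \rmin(q;Q') + \mathrm{RHS}$ to keep the two $\rmin$ values comparable up to a constant factor. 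Together these yield a net per-step growth of $\Omega(\mathrm{RHS})$, which is what the doubling-band argument requires.
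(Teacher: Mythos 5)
Your construction is genuinely different from the paper's: the paper does not greedily subsample $Q'$, but instead issues a fixed schedule of $O(B\log(n/B))$ rank queries to $Q'$ (ranks $2^{i-1}(B+j)$ for $j=0,\ldots,B-1$ in each band $[2^{i-1}B,2^iB)$, plus the $O(B)$-element prefix of $Q'$ below the answer to the rank-$B$ query), keeps the distinct answers, and verifies \cref{eq:quantile-constraint} directly from the query guarantee $[\rmin(q),\rmax(q)]\subseteq[r(1-\ep),r(1+\ep)]$. That route makes the size bound true by construction and, importantly for this paper, is what makes the batched parallel implementation in \cref{lem:quantile-prune-par} possible (all queries are merged against $\rmax(\cdot;Q')$ in one shot); your greedy scan is inherently sequential, so even if completed it would not serve as the basis for the parallel version the paper needs.

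More seriously, your size analysis has a concrete gap. You claim each greedy step advances $\rmin$ by $\Omega(\mathrm{RHS})$ by subtracting $Q'$'s own slack from $\mathrm{RHS}$ and arguing that $\rmin(q'_{k'})$ and $\rmin(q)$ are comparable up to a constant. Writing $\rho=\rmin(q;Q')$ and $\rho'=\rmin(q'_{k'};Q')$, the maximality of $k'$ together with $Q'$'s constraint gives $\rho' > \rho + \mathrm{RHS} - \tfrac{2\ep\rho'}{1-\ep}$, and since $\rho'-\rho$ can itself be as large as $\mathrm{RHS}$, the subtracted term can be as large as $\tfrac{2\ep}{1-\ep}\,\mathrm{RHS}$. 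For $\ep\ge 1/3$ this is at least $\mathrm{RHS}$, so your inequality $2(\ep+1/B)/(1-\ep-1/B)\ge 2\ep/(1-\ep)+2/B$ leaves no positive residual and the claimed $\Omega(\mathrm{RHS})$ (or even $\Omega(\rho/B)$) advance does not follow. This is not a corner case: in \cref{thm:ett-aug} the summaries being pruned have error approaching $1/2$. The argument can be repaired by not plugging an upper bound for $\rho'$ into the right-hand side but instead solving for it: $\rho'\bigl(1+\tfrac{2\ep}{1-\ep}\bigr) > \rho + \mathrm{RHS}$ gives $\rho' > \rho\cdot\frac{(1+\ep+1/B)(1-\ep)}{(1+\ep)(1-\ep-1/B)} \ge \rho(1+2/B)$, a multiplicative gain of $1+2/B$ per step for every $\ep\in(0,1)$ (weaker than $\Omega(\mathrm{RHS})$ additively, but enough for $O(B)$ steps per doubling band, with integrality of $\rmin$ handling ranks below $B$ and the case $\mathrm{RHS}=1$). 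With that repair, and with the observation that the greedy always advances by at least one index of $Q'$ so it terminates at $\max(Q')$, your construction yields a correct, if more delicate, alternative proof of the sequential statement.
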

\begin{proof}
  First, consider the ranks $[B, n]$ by partitioning it into the intervals
  $[2^{i-1}B,2^iB)$ for each $i \in \set{1, 2, \ldots, \log(n/B)}$. In interval
  $[2^{i-1}B,2^iB)$, query for rank $2^{i-1}(B + j)$ in $Q'$ for $j
  \in \set{0, 1, \ldots, B-1}$. The query algorithm (\cref{alg:quantile-query}) 
  guarantees non-decreasing
  elements over the calls. Place each unique element in $Q$, and place $\max (Q')$
  into $Q$ as well if it has not already been inserted.

  Now consider the result of querying for $B$, giving the smallest
  element $b$ in $Q$ so far. Take all elements of $Q'$ less than
  $b$ and place them in the front of $Q$. Since the rank of $b$ is at most
  $B(1+\eps)$, there are at most $B(1+\eps) = O(B)$ such elements.

  Finally, for each $q \in Q$, set $\rmin(q; Q) := \rmin(q; Q')$ and $\rmax(q; Q) := \rmax(q; Q')$. 
  This summary $Q$ has size $O(B \log (n/B))$.

  We want to show that $Q$ satisfies \cref{eq:quantile-constraint} with error
  $\ep + 1/B$. The elements $q \in Q$ such that $q \le b$ are a prefix of $Q'$
  that satisfy \cref{eq:quantile-constraint} with error $\ep$, so they certainly
  also satisfy the constraint with error $\ep + 1/B$. For elements greater
  than $b$, consider a pair of consecutive elements $x$ and $y$ in $Q$. There is
  some $r$ and integer $i$ such that $r \in [2^{i-1}B,2^iB)$, that querying
  $r$ resulted in $x$, and that querying $r + 2^{i-1}$ resulted in $y$. The
  query implementation guarantees that $[\rmin(x), \rmax(x)]\subseteq
  [r(1-\ep), r(1+\ep)]$ and $[\rmin(y), \rmax(y)]\subseteq [(r + 2^{i-1})(1-\ep), (r
  + 2^{i-1})(1+\ep)]$. Then
  \begin{align*}
    \rmax(y) - \rmin(x) \le (r + 2^{i-1})(1+\eps) - r(1-\eps) \\ 
    = 2\eps r + 2^{i-1}(1+\eps)
    \le 2\eps r + (r/B)(1+\eps)
    < 2\eps r + 2r/B \\
    = 2(\eps + 1/B)r 
    \le \frac{ 2(\eps + 1/B)\rmin(x)}{1 - \eps}
    < \frac{2(\eps + 1/B)\rmin(x)}{1 - (\eps + 1/B)}
  ,\end{align*}
  satisfying the constraint with error $\ep + 1/B$. Hence $Q$ is a ZW
  summary with error $\ep + 1/B$.
\end{proof}

\subsection{Parallel algorithms}\label{app:quantile-proofs-par}

This section describes parallel algorithms for ZW summaries. As subroutines, we
use parallel algorithms for merging sorted arrays, computing prefix sums, and
filtering, all of which can be done in $O(m)$ work and $O(\log m)$ span over
length-$m$ arrays~\cite{blelloch2018introduction}.

\begin{proof}[Proof of \cref{lem:quantile-construct}]
  The proof of \cref{lem:quantile-construct-seq} shows that constructing a ZW summary
  just consists of finding $O(\log(\ep n)/\ep)$ elements by rank. 
\end{proof}

\begin{lem}\label{lem:quantile-merge-par}
  Given two ZW summaries with a total size of $m$, merging them
  as described in \cref{lem:quantile-merge} can be done in $O(m)$ work and
  $O(\log m)$ span.
\end{lem}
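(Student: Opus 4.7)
The plan is to mirror the sequential merge algorithm from \cref{lem:quantile-merge}, but to express each of its two phases as a composition of standard parallel primitives (merging sorted arrays and prefix scans) that the excerpt already assumes run in $O(m)$ work and $O(\log m)$ span.

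First, I would treat $Q_1$ and $Q_2$ as sorted arrays (which the excerpt guarantees, since ZW summaries are stored as sorted length-$\abs{Q}$ arrays) and invoke the parallel sorted-array merge of \cite{blelloch2018introduction} to obtain the combined sorted sequence of elements of $Q$, tagging each output slot with the index of its source summary and its rank within that source. This step costs $O(m)$ work and $O(\log m)$ span and gives me the underlying array of $Q$ exactly as described in \cref{lem:quantile-merge}; the remaining task is only to fill in $\rmin(q;Q)$ and $\rmax(q;Q)$ for every $q \in Q$.

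Second, I need, for each $q$ in the merged array that came from $Q_1$, the largest element $s$ of $Q_2$ less than $q$ and the smallest element $t$ of $Q_2$ greater than $q$ (and symmetrically when $q$ came from $Q_2$). Because each output slot already knows its source, I would run two parallel prefix scans over the merged array: a left-to-right scan that at each position carries the last seen element (and its stored $\rmin,\rmax$) from $Q_2$, and a right-to-left scan that carries the next upcoming element from $Q_2$; then symmetric scans for elements originating in $Q_1$. Each scan is a standard associative prefix-sum computation, so all four together take $O(m)$ work and $O(\log m)$ span. After these scans, every slot in $Q$ has direct access to its $s$ and $t$ in the other summary.

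Finally, in a parallel \texttt{for} over all $m$ slots, I apply the exact $\rmin$ and $\rmax$ formulas from the proof of \cref{lem:quantile-merge} in $O(1)$ work per slot, handling the boundary cases where $s$ or $t$ does not exist using the flags produced by the scans. This adds $O(m)$ work and $O(1)$ span, so the total cost remains $O(m)$ work and $O(\log m)$ span. The only real obstacle is confirming that the predecessor/successor information needed by the merge formulas can be recovered within the target span bound; a plain parallel binary search for each element would cost $O(m \log m)$ work, which is why I route it through prefix scans on the already-merged array instead.
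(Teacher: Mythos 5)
Your proposal is correct and follows essentially the same route as the paper: a parallel sorted-array merge followed by prefix scans over the merged array to recover each element's predecessor/successor in the other summary, then a pointwise application of the $\rmin$/$\rmax$ formulas from \cref{lem:quantile-merge}. The paper realizes the scan step slightly differently—via a prefix sum over a $0/1$ indicator array whose running total directly gives the index of the predecessor in $Q_2$—but this is an implementation detail of the same idea, and both yield the claimed $O(m)$ work and $O(\log m)$ span.
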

\begin{proof}
  Let $Q_1$ and $Q_2$ be the summaries to be merged.
  Following the algorithm given in the proof of \cref{lem:quantile-merge}, use a
  parallel algorithm for merging sorted arrays to combine $Q_1$ and $Q_2$ into
  an output $Q$. When merging, keep track of whether each element is from $Q_1$ or $Q_2$.

  To compute $\rmin(\cdot;Q)$ values, each element in
  $Q_1$ must determine its predecessor in $Q_2$. Create an array $A$ of size
  $\abs{Q}$. For each index $i \in \set{1,2,\ldots,\abs{Q}}$, set $A[i]$ to 1 if
  the $i$-th element of $Q$ is from $Q_2$ and to 0 otherwise. Then compute a
  prefix sum over $A$ to obtain a list of sums $S$. Now for any $i$, if the $i$-th
  element $q_i$ of $Q$ is from $Q_1$, then the $S[i]$-th entry of $Q_2$ is the
  predecessor of $q_i$. As a result, we can now compute $\rmin(q;Q)$ for all $q$
  in $Q_1$. Similar logic allows us to compute the rest of $\rmin(\cdot;Q)$ and
  $\rmax(\cdot;Q)$.

  All of this takes $O(\abs{Q_1} + \abs{Q_2})$ work and
  $O(\log(\abs{Q_1} + \abs{Q_2}))$ span.
\end{proof}
\begin{lem} \label{lem:quantile-prune-par}
  Given a ZW summary of size $m$, pruning it as described in
  \cref{lem:quantile-prune} can be done in $O(m)$ work and
  $O(\log m)$ span.
\end{lem}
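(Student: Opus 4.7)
The plan is to parallelize each step of the sequential pruning algorithm from \cref{lem:quantile-prune} using parallel sorted-array merging, prefix-sum, and pack primitives, each of which runs in $O(m)$ work and $O(\log m)$ span on length-$m$ arrays.

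The bulk of the sequential algorithm is to answer the $k = O(B\log(n/B))$ queries at ranks $r_{i,j} = 2^{i-1}(B+j)$ in $Q'$. These ranks are produced in sorted order as $(i,j)$ varies, and can be generated in $O(k)$ work and $O(1)$ span. Running $k$ independent binary searches would cost $\Theta(k\log m)$ work, which may exceed our budget. Instead, I would use the fact that \cref{alg:quantile-query} on input $r$ returns the element $q_\ell \in Q'$ of largest index satisfying $\rmax(q_\ell;Q') \le r(1+\ep)$, so the answer index is monotone non-decreasing in $r$. I would therefore perform a single parallel merge of the sorted threshold array $(r_{i,j}(1+\ep))$ with the sorted array $(\rmax(q_\ell;Q'))_\ell$; each query then reads off its answer index directly from its position in the merged order. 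This merge costs $O(m + k)$ work and $O(\log(m + k))$ span. If $B\log(n/B) \ge m$ the pruning does not reduce size and we can return $Q'$ unchanged, so we may assume $k \le m$, giving $O(m)$ work and $O(\log m)$ span.

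I would then deduplicate the resulting non-decreasing sequence of answer indices with a parallel pack: mark an index as kept iff it differs from its left neighbor, then pack via prefix sum in $O(k)$ work and $O(\log k)$ span. I would separately add $\max(Q')$ if not already present. For the prefix portion of the sequential algorithm, the smallest query answer $b$ is read off as the answer to the query for rank $B$; the elements of $Q'$ strictly less than $b$ form a prefix of $Q'$ of length at most $B(1+\ep) = O(B)$, which I would prepend with a parallel copy. Finally, I would map over the kept elements in $Q$ and copy their $\rmin$ and $\rmax$ values from $Q'$ in parallel.

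The main obstacle is the query-answering step, since a naive binary-search-per-query approach loses a $\log m$ factor we cannot afford; the monotonicity of the answer index with respect to the query rank is what lets us fuse all $k$ searches into a single linear-work merge. Everything else is a routine application of the parallel merge, pack, and prefix-sum primitives discussed at the start of \cref{app:quantile-proofs-par}, so the overall bounds match the $O(m)$ work and $O(\log m)$ span claimed.
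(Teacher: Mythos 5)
Your proposal is correct and matches the paper's proof: the key step in both is to exploit the sortedness of the query thresholds $r_i(1+\ep)$ and of $\rmax(\cdot;Q')$ to replace $k$ independent binary searches with a single parallel merge, then filter/deduplicate the answers, all within $O(m)$ work and $O(\log m)$ span (after dispensing with the $k > m$ case by returning the summary unchanged). The only cosmetic difference is that you deduplicate answer indices with an explicit pack while the paper discards a query whose predecessor in the merged array is another query value; these are equivalent.
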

\begin{proof}
  Given an $\eps$-approximate summary $Q$ of size $m$ and a parameter $B$, the
  core part of pruning is to compute $k = O(B \log(n/B))$ rank
  queries in parallel efficiently. If $k > m$ then we should just return $Q$ as
  it is already small enough.

  Otherwise, consider the set of rank queries $\set{r_1, ..., r_k}$ that we want to compute.
  A query for rank $r_i$ consists of binary searching for the
  location of $r_i(1+\eps)$ in $\rmax(\cdot;Q)$ to get the largest element
  $q^{(i)} \in
  Q$ such that $\rmax(q^{(i)}) \le r_i(1+\eps)$ (\cref{alg:quantile-query}). We can batch all the searches
  together by creating a $k$-length array $A$ containing query value $r_i(1+\eps)$ for all $i$ and
  performing a parallel merge of $A$ with $\rmax(\cdot;Q)$ to get an output
  array $M$. Then for each $r_i(1+\eps) \in M$, its predecessor in $M$ will either
  be $q^{(i)}$ or be $r_{i-1}(1+\eps)$. If it is
  $q^{(i)}$, then we can complete the corresponding query, and
  otherwise the query can be discarded since it will have the same result as the
  query on rank $r_{i-1}$. The results of the queries can be written into a new array
  $C$, and filtering all successful queries from $C$ gives the output summary.
  Merging and filtering takes $O(m)$ work and $O(\log m)$ span.
\end{proof}
\begin{proof}[Proof of \cref{lem:quantile-combine}]
  Merge $Q_1$ and $Q_2$ in parallel into an
  $\ep$-approximate summary using \cref{lem:quantile-merge-par}.
  Then, apply \cref{lem:quantile-prune-par} to prune the
  result to size $O(b \log (n/b)$ at the cost of $1/b$ additional error using
  \cref{lem:quantile-prune-par}.
\end{proof}

\section{Tighter batch-decremental MSF work bound}\label{app:msf-work-bounds}

This section describes how to get the bound of $O(\log^3 n \log (1 + n/\Delta))$ expected amortized work per edge for the
batch-decremental MSF algorithm in \cref{thm:dec-msf}, where $\Delta$ denotes the average deletion batch size.

We use two lemmas given by Acar et al.~\cite{acar2019parallel}:

\begin{lem}\label{lem:log-sum}
For any non-negative integers $n$ and $r$,
\[
  \sum_{w=0}^r 2^w \log \left(1 + \frac{n}{2^w}\right)
  =
  O\left(
      2^r \log \left(1 + \frac{n}{2^r}\right)
   \right)
\]
\end{lem}

\begin{lem}\label{lem:log-increasing}
For any $n \ge 1$, the function $x \log (1 + n/x)$ is strictly increasing with respect to $x$ for $x \ge 1$.
\end{lem}

Now we analyze the cost of searching for replacement edges in the batch-decremental MSF algorithm.

\begin{lem}\label{lem:replacement-search-cost}
The expected work of searching for replacement edges on 
one HDT level for the batch-decremental MSF algorithm is
\[
  O\left(
      k \log^2 n \log \left(1 + \frac{n}{k} \right) 
    + p \log^2 n \log \left(1 + \frac{n}{p} \right) 
  \right)
\]
where $k$ is the batch size and $p$ is the number of edges pushed down to the next HDT level during the search.
\end{lem}
\begin{proof}
  Let $\ell$ be the HDT level that we are searching.
  The work of the replacement edge search process is dominated by inserting found replacement edges into the level-$\ell$ ETT and by updating the level-$(\ell - 1)$ ETT when pushing edges. 
  By deferring pushing edges to the end of the replacement search at level $\ell$ 
  when we have found all $p$ edges that should be pushed, 
  we update the level-$(\ell - 1)$ ETT with one operation that costs $O(p \log^2 n (1 + n/p))$ expected work.
  
  Inserting edges into the level-$\ell$ ETT, on the other hand, occurs on each round within the replacement search on this level. 
  There are $O(\log k)$ rounds, and in each round, we search for a replacement edge out of every ``active'' component, 
  i.e., each component that still has incident edges to search and that has size at most $2^{\ell - 1}$. At most half as many components will be active on each subsequent round because 
  the worst case is when the replacement edges discovered in a round pair the components off. The number of edges inserted in a round is bounded by the number of active components, and
  on the first round, there are at most $2k$ active components. 
  Therefore, on round $i$, the number of edges $k_i$ inserted into the ETT is at most $k_i \le 4k2^{-i}$. The work of insertions across all $O(\log k)$ rounds is then proportional to
  \begin{align*}
    \sum_{i=1}^{O(\log k)} k_i \log^2 n \log \left(1 + \frac{n}{k_i}\right) \\
    \le
    \sum_{i=1}^{O(\log k)} 4k 2^{-i} \log^2 n \log \left(1 + \frac{n}{4k2^{-i}}\right) \\
    = O\left(4k \log^2 n \log\left(1 + \frac{n}{4k}\right)  \right)
    = O\left(k \log^2 n \log\left(1 + \frac{n}{k}\right)  \right)
  ,\end{align*}
  where \cref{lem:log-increasing} provides the inequality and \cref{lem:log-sum} provides the first equality.
  
  The expected work from inserting replacement edges and pushing edges is then
    $ O(k \log^2 n \log (1 + n/k) + p \log^2 n \log(1 + n/p))$.
\end{proof}

Similar to the above lemma minus a multiplicative factor of $O(\log^2 n)$, Acar et al.~\cite{acar2019parallel} 
prove in Lemma 4.5 of their paper that the 
expected work of searching for replacement edges on one HDT level for their ``interleaved'' connectivity algorithm 
is $O(k \log (1 + n/k) + p \log (1 + n/p))$ where $k$ is the batch size and $p$ is the number of pushed edges. 
Acar et al. then use their lemma to show in Theorem 4.7 of their paper that their interleaved connectivity algorithm achieves 
$O(\log n \log (1 + n/\Delta))$ expected amortized work per edge by summing the work of replacement searches 
across all HDT levels and all batches of deletions. We can apply the same analysis as Acar et al.'s Theorem 4.7 to get
the same work plus a multiplicative factor of $O(\log^2 n)$ for our batch-decremental MSF algorithm, i.e.,
$O(\log^3 n \log (1 + n/\Delta))$ expected amortized work per edge, as desired.
\section{Dynamic HAC dendrogram counterexamples}\label{app:dendro-counterexamples}

The canonical output for HAC is a dendrogram representing the hierarchy of clusters.
A natural question for dynamic graph HAC is whether we can maintain the dendrogram dynamically.
In this section, we show examples where an edge update causes $\Omega(n)$ changes in the dendrogram. 
Explicitly maintaining the dendrogram dynamically therefore seems inefficient in the worst case.
Here we assume a dendrogram is represented a binary tree (for each connected component of the graph) 
with labeled leaves, and changes in a dendrogram mean
changes in the parent or child pointers of a dendrogram node. 

\subsection{Single linkage}\label{app:dendro-single}

\begin{figure}
    \centering
    \begin{subfigure}{0.5\columnwidth}
      \centering
      \includegraphics[scale=0.3]{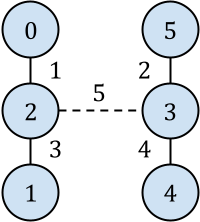}
      \caption{Graph}
    \end{subfigure}
    \begin{subfigure}{0.5\columnwidth}
        \centering
        \begin{tikzpicture}[scale=0.6]
            \node (a) at (0,0) {$0$};
            \node (b) at (1,0) {$1$};
            \node (c) at (2,0) {$2$};
            \node (d) at (3,0) {$3$};
            \node (e) at (4,0) {$4$};
            \node (f) at (5,0) {$5$};
            \node (de) at (3.5,1.5) {};
            \node (bc) at (1.5,2) {};
            \node (def) at (4.25,2.5) {};
            \node (abc) at (0.75,3) {};
            
            \draw  (d) |- (de.center);
            \draw  (e) |- (de.center);
            \draw  (b) |- (bc.center);
            \draw  (c) |- (bc.center);
            \draw  (de.center) |- (def.center);
            \draw  (f) |- (def.center);
            \draw  (bc.center) |- (abc.center);
            \draw  (a) |- (abc.center);
        \end{tikzpicture}
        \caption{Dendrogram without edge $\set{2,3}$}
    \end{subfigure}%
    \begin{subfigure}{0.5\columnwidth}
        \centering
        \begin{tikzpicture}[scale=0.6]
            \node (a) at      (0,0) {$0$};
            \node (b) at      (1,0) {$1$};
            \node (c) at      (2,0) {$2$};
            \node (d) at      (3,0) {$3$};
            \node (e) at      (4,0) {$4$};
            \node (f) at      (5,0) {$5$};
            \node (cd) at     (2.5,1) {};
            \node (cde) at    (3.25, 1.5) {};
            \node (bcde) at   (2.125, 2) {};
            \node (bcdef) at  (3.5625, 2.5) {};
            \node (abcdef) at (1.78125, 3) {};
            
            \draw  (c) |- (cd.center);
            \draw  (d) |- (cd.center);
            \draw  (cd.center) |- (cde.center);
            \draw  (e) |- (cde.center);
            \draw  (cde.center) |- (bcde.center);
            \draw  (b) |- (bcde.center);
            \draw  (bcde.center) |- (bcdef.center);
            \draw  (f) |- (bcdef.center);
            \draw  (bcdef.center) |- (abcdef.center);
            \draw  (a) |- (abcdef.center);
        \end{tikzpicture}
        \caption{Dendrogram with edge $\set{2,3}$}
    \end{subfigure}
    \caption{Example of a graph whose single-linkage HAC dendrogram
      changes a lot if an edge is added.}
    \label{fig:dendro-single}
\end{figure}

Let $n \in \N$ be even. Consider a graph consisting of 
two star graphs of $n/2$ vertices. In one star, the edge weights
are $1, 3, 5, \ldots, n - 3$, and in the other star, the edge weights
are $2, 4, 6, \ldots, n - 2$. 

Adding an edge of weight $n - 1$ between the 
centers of the two stars causes $\Theta(n)$ changes to the dendrogram. 
\cref{fig:dendro-single}
shows this graph with $n = 6$ along with its dendrograms.
Without the weight-$(n-1)$ edge, the dendrogram consists of two binary trees with long spines.
In one of the stars, consider any vertex $v$ whose incident edge's weight is $w < n-3$, and
let $u$ be the vertex whose incident edge's weight is $w + 2$, e.g., we could pick 
$v=0$ and $u=1$ in \cref{fig:dendro-single}.
In the dendrogram, $v$'s parent's other child is the parent of $u$. With the weight-$(n-1)$ edge added,
however, the two binary trees ``interleave'', and $v$'s parent's other child is no longer the parent of $u$.
Therefore, either $v$'s parent changed, $u$'s parent changed, or $v$'s parent's child changed.
This is true for any of the $\Theta(n)$ choices of $v$. Moreover, no two vertices share parents, so changes to
one vertex's parent's child do not overlap with changes to another vertex's parent's child.
This shows that the number of changes in the dendrogram caused by the addition of the 
weight-$(n-1)$ edge is indeed $\Theta(n)$.

\subsection{Complete and weighted average linkage}\label{app:dendro-wpgma}

\begin{figure}
    \centering
    \begin{subfigure}{0.5\columnwidth}
      \centering
      \includegraphics[scale=0.3]{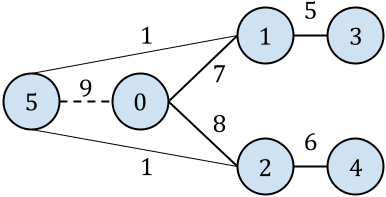}
      \caption{Graph}
    \end{subfigure}
    \begin{subfigure}{0.5\columnwidth}
        \centering
        \begin{tikzpicture}[scale=0.6]
            \node (f) at (0,0) {$5$};
            \node (a) at (1,0) {$0$};
            \node (c) at (2,0) {$2$};
            \node (b) at (3,0) {$1$};
            \node (e) at (4,0) {$4$};
            \node (d) at (5,0) {$3$};
            \node (ac) at (1.5,1) {};
            \node (abc) at (2.25,1.5) {};
            \node (abce) at (3.125,2) {};
            \node (abcde) at (4.0625,2.5) {};
            \node (abcdef) at (2.03125,3) {};
            
            \draw  (a) |- (ac.center);
            \draw  (c) |- (ac.center);
            \draw  (ac.center) |- (abc.center);
            \draw  (b) |- (abc.center);
            \draw  (abc.center) |- (abce.center);
            \draw  (e) |- (abce.center);
            \draw  (abce.center) |- (abcde.center);
            \draw  (d) |- (abcde.center);
            \draw  (abcde.center) |- (abcdef.center);
            \draw  (f) |- (abcdef.center);
        \end{tikzpicture}
        \caption{Dendrogram without edge $\set{0,5}$}
    \end{subfigure}%
    \begin{subfigure}{0.5\columnwidth}
        \centering
        \begin{tikzpicture}[scale=0.6]
            \node (f) at (0,0) {$5$};
            \node (a) at (1,0) {$0$};
            \node (c) at (2,0) {$2$};
            \node (e) at (3,0) {$4$};
            \node (b) at (4,0) {$1$};
            \node (d) at (5,0) {$3$};
            \node (af) at     (0.5,1) {};
            \node (ce) at    (2.5, 1.5) {};
            \node (bd) at   (4.5, 2) {};
            \node (acef) at   (1.5, 2.5) {};
            \node (abcdef) at (3, 3) {};
            
            \draw  (a) |- (af.center);
            \draw  (f) |- (af.center);
            \draw  (c) |- (ce.center);
            \draw  (e) |- (ce.center);
            \draw  (b) |- (bd.center);
            \draw  (d) |- (bd.center);
            \draw  (af.center) |- (acef.center);
            \draw  (ce.center) |- (acef.center);
            \draw  (acef.center) |- (abcdef.center);
            \draw  (bd.center) |- (abcdef.center);
        \end{tikzpicture}
        \caption{Dendrogram with edge $\set{0,5}$}
    \end{subfigure}
    \caption{Example of a graph whose HAC dendrogram under complete linkage or
      weighted average linkage
      changes a lot if an edge is added.}
    \label{fig:dendro-wpgma}
\end{figure}
  
Let $k \in \N$. Create a graph on $n = 2k+2$ vertices. For $i \in \set{1, 2, \ldots, k}$,
add edge $\set{0, i}$ with weight $3k + i$, edge $\set{i, k +i}$ with weight $2k + i$, and edge
$\set{i, n-1}$ with weight $1$.

Adding an edge of weight $4k + 1$ between vertices $0$ and $n-1$ 
produces $\Theta(n)$ changes in the dendrogram.
With the weight-$(4k + 1)$ edge, for all $i \in \set{1, 2, \ldots, k}$, vertices $i$ and $k+i$ 
directly merge with each other. Therefore, vertices $i$ and $k+i$ 
have the same parent in the dendrogram.
Without the edge, each vertex successively merges as a singleton cluster with vertex $0$'s cluster.
Then no vertices have the same parent in the dendrogram besides vertices $0$ and $n-1$.

\subsection{Average linkage}

\begin{figure}
    \centering
      \includegraphics[scale=0.3]{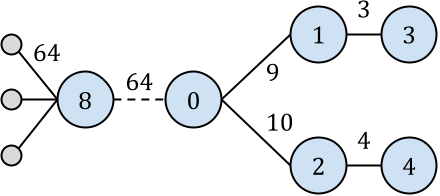}
    \caption{Example of a graph whose average-linkage HAC dendrogram an edge is added.}
    \label{fig:dendro-upgma}
\end{figure}

The construction is similar to the graph given for complete and weighted average linkage in
\cref{app:dendro-wpgma}.
Let $k \in \N$. Create a graph on $n = 4k + 1$ vertices.
For $i \in \set{1, 2, \ldots, k}$, add edge $\set{0, i}$ with weight $2k^2 + i$ and edge
$\set{i, k + i}$ with weight $k + i$. For $i \in \set{2k + 1, 2k + 2, 2k + 3, \ldots, 4k - 1}$,
add edge $\set{i, n - 1}$ with weight $8k^3$.

Adding an edge of weight $8k^3$ between vertices $0$ and $n-1$ produces $\Theta(n)$ changes to the dendrogram.
\cref{fig:dendro-upgma} shows the graph with $k = 2$.
As in \cref{app:dendro-wpgma}, with the extra edge, vertices $1$ through $2k$ merge as pairs, whereas
without the extra edge, those vertices merge as singleton clusters with vertex $0$.

\section{HAC lower bounds}\label{app:hac-lower-bounds}
This section contains deferred proofs of the lower bounds stated
in \cref{sec:hac-prove-lower-bounds}.

To complete the proof of \cref{thm:subconn-reduces-to-hac}, 
we give a reduction from SubConn to complete-linkage HAC.

\begin{figure}
  \centering
  \begin{subfigure}{0.3\columnwidth}
        \centering
        \includegraphics[scale=0.3]{img/reductions/subconn-to-hac-before.png}
        \caption{A vertex $v$ with two neighbors in an SubConn instance.}
    \end{subfigure}\hfill%
    \begin{subfigure}{0.65\columnwidth}
        \centering
        \includegraphics[scale=0.3]{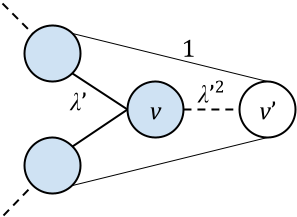}
        \caption{
          The corresponding HAC instance adds
          a vertex $v'$.
        }
    \end{subfigure}
  \caption{
    The figure displays the extra vertex and edges added for a particular vertex $v$
    in \cref{lem:subconn-to-hac-complete}'s reduction from SubConn to complete-linkage HAC.
  }
  \label{fig:subconn-to-hac-complete}
\end{figure}

\begin{lem}\label{lem:subconn-to-hac-complete}[Part of \cref{thm:subconn-reduces-to-hac}]
  Let $\lambda \in [1, \on{poly}(n)]$.
  Suppose we can solve dynamic / incremental /
  decremental $\lambda$-approximate complete-linkage HAC $p(m,n)$
  preprocessing work, $u(m,n)$ update work, and $q(m,n)$ query work. Then we can
  solve dynamic / decremental / incremental SubConn with $O(m) +
  p(O(m),O(n))$ processing work, $u(O(m),O(n))$ update work, and $q(O(m),O(n))$
  query work.  The same relationship is also true between \#HAC and ConnSub.
\end{lem}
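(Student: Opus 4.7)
The plan is to mirror Lemma~\ref{lem:subconn-to-hac-wpgma}'s construction, but to exploit that complete linkage takes a minimum over the cut rather than an average. Where the weighted-average-linkage reduction needed a star of $\Theta(\log\lambda)$ leaves per vertex to attenuate similarities by averaging, for complete linkage a single companion vertex $v'$ per $v$ will suffice: a single sufficiently light edge in the cut is enough to drag the minimum below the merge threshold.

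For the construction I would set $\theta=2\lambda$ and $W=2\lambda^2+1$ and build $G'$ on vertex set $V\cup\{v':v\in V\}$ with three kinds of edges: (a) a weight-$2\lambda$ copy of every original edge $\{u,v\}\in E$; (b) a weight-$W$ edge $\{v,v'\}$ for each $v\in V\setminus S$, to be toggled on updates to $S$; (c) weight-$1$ edges $\{u,v'\}$ and $\{u',v\}$ for every $\{u,v\}\in E$. This gives $O(n)$ vertices and $O(m)$ edges, and an insertion to or deletion from $S$ corresponds to a single edge update of type~(b), which handles all three variants (dynamic / incremental / decremental) by toggling this edge. A SubConn query asks whether $s$ and $t$ share a cluster after agglomerating to threshold $\theta$; a \#HAC query reducing ConnSub checks whether the cluster count equals $n+1$.

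For correctness, I will argue that $\lambda$-approximate complete-linkage HAC proceeds in three phases up to threshold $\theta$. First, because $W/\lambda=2\lambda+1/\lambda>2\lambda$, the algorithm must merge every weight-$W$ edge before any weight-$2\lambda$ edge; after this phase each $v\notin S$ lies in the cluster $\{v,v'\}$. Second, each such cluster has complete-linkage similarity exactly $1$ to every adjacent cluster, because the cut always contains a type-(c) weight-$1$ edge $\{v',u\}$ or $\{u',v\}$; since $1<2\lambda/\lambda=2$, these clusters are inert for the remainder of the process. Third, the remaining active clusters are the singletons $\{v\}$ for $v\in S$ (together with inert singletons $\{v'\}$ for $v\in S$, whose incident edges all have weight $1$); cuts between pure $S$-clusters contain only weight-$2\lambda$ original edges, so they merge exactly according to connectivity in the subgraph induced by $S$, and the process halts at threshold $\theta$ before any weight-$1$ pair can merge. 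The final cluster count is $|V\setminus S|+|S|+(\text{number of components of the }S\text{-subgraph})$, which equals $n+1$ exactly when the subgraph is connected.

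The hard part will be making the three-phase argument fully robust to $\lambda$-approximation, in which at each step the algorithm is free to pick any pair with similarity at least $\mathcal{W}_{\max}/\lambda$ rather than the exact maximum. The $\lambda$-factor separations between $W$, $2\lambda$, and $1$ are what keep the phases cleanly ordered: since complete linkage never increases similarities, $\mathcal{W}_{\max}$ remains at $W$ throughout phase~1, drops cleanly to $2\lambda$ once all weight-$W$ pairs are consumed, and never rises above $1$ thereafter. What remains is routine verification of the adjacencies within each phase and of the final cluster count for the \#HAC query.
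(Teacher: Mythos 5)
Your construction is essentially identical to the paper's: one companion vertex $v'$ per vertex, a heavy $\{v,v'\}$ edge toggled to encode membership in $S$, and weight-$1$ edges between $v'$ and $N(v)$ so that any cluster containing a merged pair becomes inert under complete linkage; only the constants differ ($\theta = 2\lambda$, $W = 2\lambda^2+1$ versus the paper's $\theta = \lambda+1$ and weight $(\lambda+1)^2$). Your three-phase robustness argument for $\lambda$-approximation and your \#HAC cluster count of $n+1$ (which correctly accounts for the $|S|$ inert singleton companions $v'$ with $v\in S$) are both sound, so the proof goes through.
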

\begin{proof}
  Suppose we are given an unweighted graph $G = (V, E)$ and we want to solve SubConn,
  maintaining some subset of vertices $S$, using HAC. Define $\lambda' = \lambda + 1$ and set
  $\theta = \lambda'$.

  \emph{Preprocessing}: Construct a new, weighted graph $G'$ by copying $G$ and giving
  every edge a weight of $\lambda'$. For every $v \in V$, create a vertex $v'$ and
  add it to $G'$.  Create weight-$1$ edges by from $v'$ to each neighbor in  $N(v)$, and then
  connect $v$ to $v'$ with weight $\lambda'^2$ if $v \in V \setminus S$.  
  \Cref{fig:subconn-to-hac-complete} illustrates this construction.
  The new graph
  $G'$ has $2n$ vertices and at most $2m + n$ edges. Initialize dynamic
  HAC on graph $G'$ with $\theta = \lambda'$

  \emph{Update}: Simulate inserting a vertex $v$ into $S$ by removing edge $\set{v,
  v'}$, and simulate deletion by adding edge $\set{v, v'}$ with weight
  $\lambda'^2$.

  \emph{Query}: If we are  reducing SubConn to HAC, return
  whether $s$ and $t$ are in the same cluster given similarity threshold
  $\theta$. If we are reducing ConnSub to \#HAC, then return whether the
  number of clusters is $\abs{V \setminus S} + 1$.

  \emph{Correctness}: Consider running HAC until similarity
  threshold $\theta$. First, all the weight-$\lambda'^2$ edges merge, leaving
  each $v \in V \setminus S$ in a cluster $\set{v, v'}$ that only has incident
  edges of weight $1$. These clusters do not participate in any more merges.
  The remaining vertices $v \in S$ cluster into their connected components in
  the subgraph induced by $S$.
\end{proof}

We can similarly reduce SubConn to average-linkage HAC, albeit with worse bounds:

\begin{lem}\label{lem:subconn-to-hac-upgma}
  Let $\lambda \in [1, \on{poly}(n)]$.
  Suppose we can solve dynamic / incremental /
  decremental $\lambda$-approximate average-linkage HAC $p(m,n)$
  preprocessing work, $u(m,n)$ update work, and $q(m,n)$ query work. 
  Then, with $m'$ and $n'$ being $O(\lambda n^3)$, we can solve dynamic /
  decremental / incremental SubConn with $O(m') + p(O(m'),O(n'))$ preprocessing
  work, $u(O(m'),O(n'))$ update work, and $q(O(m'),O(n'))$ query work.
  The same relationship is also true between \#HAC and ConnSub.
\end{lem}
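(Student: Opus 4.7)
The plan is to adapt the gadget construction of \cref{lem:subunion-to-hac-upgma} to the SubConn/ConnSub setting. Given an SubConn instance on unweighted $G=(V,E)$ with dynamic subset $S\subseteq V$, I would build a weighted graph $G'$ by copying every edge of $E$ into $G'$ with weight $1$, setting the HAC query threshold to $\theta=1$, and reusing $s$ and $t$ directly as the HAC query vertices. For every $v\in V$ I attach a gadget: a fresh shadow vertex $v'$ connected to $P=\Theta(\lambda n)$ padding leaves by heavy edges of weight $W_1$, plus an edge $\set{v,v'}$ of intermediate weight $W_2$ that is present iff $v\notin S$. An update to $S$ then reduces to a single-edge toggle on $\set{v,v'}$. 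For the ConnSub variant, I would use a \#HAC query, noting that the cluster count at $\theta$ should equal $\abs{V\setminus S}$ plus the number of connected components of the subgraph induced by $S$.

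The correctness analysis would mirror \cref{lem:subunion-to-hac-upgma}, showing that for any merge schedule allowed by $\lambda$-approximation, three phases occur in order before the process crosses $\theta$. First, each padding star collapses into a tight cluster around $v'$; the star-collapse calculation used in \cref{lem:subconn-to-hac-wpgma} controls how the intra-star average similarity drops as leaves are absorbed, and $W_1$ can be chosen large enough that this remains the dominant similarity throughout. Second, for each $v\notin S$, the edge $\set{v,v'}$ pulls $v$ into the resulting gadget cluster $D_v$ of size $\Theta(\lambda n)$; from that moment onward the average similarity between $D_v$ and any neighboring cluster $C$ is at most $\abs{N(v)}/(\abs{D_v}\abs{C}) = O(1/\lambda)$, so $D_v$ does not participate in any further merges that could affect the answer. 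Third, the vertices still active are exactly those in $S$, connected only by weight-$1$ $E$-edges between them, and the remaining merges reproduce precisely the connected components of the subgraph induced by $S$, making HAC/\#HAC queries answer SubConn/ConnSub correctly.

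The main obstacle will be choosing $P$, $W_1$, and $W_2$ so that the three phases survive the adversarial $\lambda$-approximate scheduling. Writing out the inequalities gives three coupled constraints: after $k$ padding merges the intra-star similarity $\Theta(W_1/k)$ must exceed $\lambda$ times $\max\set{W_2/(k+1),\,1}$; after the gadget is fully formed, $W_2/(P+1)$ must still exceed $\lambda$ so that $\set{v,v'}$ beats every weight-$1$ edge incident to $v$; and after $v$ joins $D_v$, the average similarity from $D_v$ to any $S$-cluster (of size up to $n$, with up to $n$ cut edges) must fall below $1/\lambda$. The last constraint is what forces $P = \Omega(\lambda n)$, so each gadget has $\Theta(\lambda n)$ vertices and the fully dynamic construction has $n' = m' = O(\lambda n^2 + m) = O(\lambda n^2)$. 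The extra factor of $n$ in the stated $O(\lambda n^3)$ bound accommodates the partially dynamic case, in which gadget edges cannot be reset and the padding must be inflated to preserve phase~(ii) across a linear number of sequential updates to $S$, directly analogous to the inflation from $O(\lambda^5\abs{U}\abs{X})$ to $O(\lambda^{8}\abs{U}^2\abs{X})$ in the partially dynamic branch of \cref{lem:subunion-to-hac-upgma}.
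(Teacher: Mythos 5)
Your construction has a genuine gap in the third phase, and it stems from setting the threshold to $\theta=1$. Under average linkage, the similarity between two adjacent clusters $A$ and $B$ inside the subgraph induced by $S$ is $\smallabs{\on{Cut}(A,B)}/(\abs{A}\abs{B})$, which decays as the clusters grow: already for a path on three vertices, after the first weight-$1$ merge the next similarity is $1/2<1$, and in general it can fall as low as $4/n^2$ (two clusters of $n/2$ vertices joined by a single edge). So with $\theta=1$ the $S$-induced subgraph does \emph{not} merge into its connected components, and your queries would answer SubConn incorrectly. The paper's proof fixes this by setting $\theta=4/n^2$, the worst-case similarity of any two adjacent clusters in an $n$-vertex weight-$1$ graph, so that all intra-component merges are guaranteed to happen. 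Once the threshold is that low, the deactivating gadget must be correspondingly larger: the cluster containing $V\setminus S$ needs similarity below $\theta/\lambda = 4/(\lambda n^2)$ to everything else, and since its cut to a cluster $C$ can carry weight up to $\abs{V\setminus S}\cdot\abs{C}$, this forces the gadget to have $\Omega(\lambda n^3)$ vertices. Your padding of $P=\Theta(\lambda n)$ per vertex (total $O(\lambda n^2)$) is an order of magnitude too small once the threshold is corrected.

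This also explains why your account of the $O(\lambda n^3)$ bound is off: it is not slack reserved for the partially dynamic case. The paper uses a single global star with $\lambda n^3-1$ leaves attached to a center $x$, with a heavy edge $\set{x,v}$ present exactly when $v\notin S$; an update to $S$ is then a single edge insertion or deletion, so decremental SubConn maps directly to incremental HAC and vice versa with no inflation and no rollback. The $n^3$ is already required in the fully dynamic case. Your per-vertex shadow gadgets would also work if each were inflated to $\Theta(\lambda n^2)$ vertices, but the phase analysis must be carried out at $\theta=4/n^2$, not $\theta=1$.
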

\begin{proof}
  \emph{Preprocessing}: Construct a new, weighted graph $G'$ by copying $G$ and giving
  every edge a weight of $1$. Add a star graph to $G'$ consisting of a new center
  vertex $x$ and $\lambda n^3 - 1$ leaves with weight-$2\lambda^2 n^3$ edges. Add a
  weight-$2\lambda^2 n^3$ edge from $x$ to each $v \in V \setminus S$. Initialize
  dynamic HAC on $G'$ with $\theta = 4/n^2$.

  \emph{Update and query}: Simulate inserting or removing a vertex $v$ in $S$ by removing
  or adding the weight-$2\lambda^2 n^3$ edge $\set{x, v}$. Querying is the same
  as $\cref{lem:subconn-to-hac-complete}$ with $\theta = 4/n^2$.

  Correctness: The weight-$2\lambda^2 n^3$ edges have such high weight that they merge before 
  any weight-$1$ edge. After that, $x$'s cluster $C_x$ contains $V \setminus S$ and has size $\abs{V \setminus S} + \lambda n^3$.
  The similarity between $x$'s cluster $C_x$ and
  another cluster $C$ is at most
  \begin{align*}
  \frac{\abs{V \setminus S}\abs{C}}{(\abs{V \setminus S} + \lambda n^3)\abs{C}}
  < \frac{n}{2\lambda n^3}
  = \frac{1}{2\lambda n^2}
  < \frac{\theta}{\lambda}
  \end{align*}
  with the numerator on the left-hand side representing the worst case where every
  vertex of $V \setminus S$ has an edge to every vertex of $C$. 
  Therefore, $C_x$ experiences no more merges when clustering until threshold $\theta$.
  The smallest similarity between any two adjacent clusters in the subgraph induced by $S$, 
  on the other hand, is
  at least $4/n^2 = \theta$ (achieved by having two clusters of $n/2$ vertices
  each and only one edge $e \in E$ crossing their cut), so adjacent clusters will merge. The subgraph induced by $S$ hence merges into its connected components.
\end{proof}

The cubic preprocessing work in
\cref{lem:subconn-to-hac-upgma} is too high to achieve lower bounds conditional
on triangle detection hardness or 3SUM hardness, but it does allow lower bounds
conditional on online Boolean matrix-vector multiplication (OMv) hardness. 

In the OMv problem, we are
given an $n$-by-$n$ Boolean matrix $M$ and $n$ column vectors $v_1, \ldots, v_n
\in \set{0,1}^n$ one-by-one. For each $i \in \set{1, ..., n}$, we must 
output $Mv_i$ over the Boolean semiring, and we cannot see $v_j$ for any $j > i$
until we output $Mv_i$.
The \emph{OMv hardness conjecture} is that solving OMv with an error probability
of at most $1/3$ requires $\Omega(n^{3-o(1)})$ work~\cite{henzinger2015unifying}.

\begin{theorem}\label{thm:omv-to-upgma}
  Suppose for $\eps > 0$ and $c \in [0, 1)$ that we can solve
  dynamic / partially dynamic $O(n^c)$-approximate average-linkage HAC with 
  $\on{poly}(n)$ preprocessing work,
  $O(n^{(1-c)/6 - \eps})$ amortized / worst-case update work, and 
  $O(n^{(1-c)/3 - \eps})$ query work. Then the OMv hardness conjecture is 
  false.
\end{theorem}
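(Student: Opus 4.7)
The plan is to reduce $st$-SubConn on a dense graph to $\lambda$-approximate average-linkage HAC via \cref{lem:subconn-to-hac-upgma}, and then invoke the OMv-conditional lower bound on $st$-SubConn from \cref{tab:abboud-lower-bounds}. I start with an $st$-SubConn instance on $n$ vertices with $m = \Theta(n^2)$ edges; assuming OMv, no algorithm with $\poly{n}$ preprocessing can simultaneously achieve $O(m^{1/2-\eps})$ update work and $O(m^{1-\eps})$ query work (amortized in the fully dynamic setting, worst-case in the partially dynamic setting).

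I then apply \cref{lem:subconn-to-hac-upgma} with an approximation parameter $\lambda$ to convert the SubConn instance into a $\lambda$-approximate average-linkage HAC instance of size $N = O(\lambda n^3)$. I choose $\lambda$ to satisfy $\lambda = \Theta(N^c)$, so that the assumed $O(n^c)$-approximate HAC algorithm suffices for the $\lambda$-approximate reduction. Solving $N = \Theta(\lambda n^3)$ together with $\lambda = \Theta(N^c)$ gives $N = \Theta(n^{3/(1-c)})$ and $\lambda = \Theta(n^{3c/(1-c)})$, both polynomial in $n$ since $c < 1$, so the preprocessing remains $\poly{n}$.

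Plugging $N = \Theta(n^{3/(1-c)})$ into the hypothesized HAC bounds yields a SubConn algorithm with update work $O(N^{(1-c)/6 - \eps}) = O(n^{1/2 - 3\eps/(1-c)})$ and query work $O(N^{(1-c)/3 - \eps}) = O(n^{1 - 3\eps/(1-c)})$. Since $m = \Theta(n^2)$, these are $O(m^{1/4 - \delta})$ and $O(m^{1/2 - \delta})$ for some $\delta > 0$, each strictly below the OMv-conditional thresholds $m^{1/2 - \eps'}$ and $m^{1-\eps'}$ on SubConn for any small enough $\eps' > 0$. Both barriers are broken simultaneously, contradicting OMv.

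I do not expect a substantial obstacle: the reduction from SubConn to average-linkage HAC is already supplied by \cref{lem:subconn-to-hac-upgma}, so the remaining work is algebraic bookkeeping to pick $\lambda$ consistently with $N$ and to verify the exponents line up. The cubic blow-up in the HAC instance size is precisely what pushes the exponents from $(1-c)/2$ and $(1-c)$, which would be achieved by a linear-size reduction, down to the $(1-c)/6$ and $(1-c)/3$ appearing in the theorem statement.
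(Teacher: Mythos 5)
Your proposal is correct and takes essentially the same route as the paper's proof of \cref{thm:omv-to-upgma}: both apply \cref{lem:subconn-to-hac-upgma} with $\lambda = \Theta(n^{3c/(1-c)})$ to obtain a HAC instance of size $\Theta(n^{3/(1-c)})$, then observe that the hypothesized bounds yield an $st$-SubConn algorithm with $O(n^{1/2-\Omega(1)})$ update and $O(n^{1-\Omega(1)})$ query work, contradicting the OMv-based lower bound of Henzinger et al. The only cosmetic difference is that you phrase the final contradiction via dense instances with $m = \Theta(n^2)$, whereas the paper states it directly in terms of the vertex count $n$.
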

\begin{proof}
  Fix arbitrary $c \in [0, 1)$ and suppose we had an $n'^c$-approximate (on
  inputs with $n'$ vertices) HAC algorithm matching
  the work bounds described in the theorem. Then if we were given an 
  $st$-SubConn instance with $n$ vertices and $m$ edges, 
  we can apply the construction from \cref{lem:subconn-to-hac-upgma}
  with $\lambda = \Theta(n^{3c/(1-c)})$ to get a
  $\lambda$-approximate average-linkage HAC instance 
  over a graph $G'$ with $m'$ and $n'$ being $O(\lambda n^3) = O(n^{3/(1-c)})$
  and $\lambda \ge n'^c$. Then we can solve the instance with our
  $n'^c$-approximate HAC algorithm with $\on{poly}(n)$ preprocessing work,
  $O(n^{1/2-\Omega(1)})$ update work, and 
  $O(n^{1-\Omega(1)})$ query work. Henzinger et al.\ show that such a fast
  algorithm for $st$-SubConn is impossible conditional on OMv hardness~\cite{henzinger2015unifying}.
\end{proof}

\Cref{lem:subconn-to-hac-upgma} also implies a lower bound on average-linkage
\#HAC conditional on OMv hardness since 
Henzinger et al.'s~\cite{henzinger2015unifying} construction that reduces
OMv to $st$-SubConn also works for ConnSub, but we can get a stronger lower bound
by a different reduction. Henzinger et al.\ prove the hardness of $st$-SubConn
conditional on OMv hardness by proving that it suffices to reduce
a related problem called $1$-uMv to $st$-SubConn. In the reduction, Henzinger
et al.\ construct a bipartite graph $G$ and use $st$-SubConn to solve the following
problem: ``activate'' several vertices of $G$ on demand and determine whether 
there is any edge connecting activated vertices. 
Instead of $st$-SubConn, we can use average-linkage \#HAC to solve this 
bipartite graph problem and hence get a \#HAC lower bound conditional on OMv.

Similarly, Abboud and Vassilevska Williams prove the hardness of $st$-SubConn
conditional on triangle detection hardness by constructing a bipartite graph and 
solving the same problem of determining whether an edge in the bipartite graph
connects activated vertices~\cite{abboud2014popular}.
We can then also get an average-linkage \#HAC lower bound conditional on 
triangle detection hardness. 

The following theorem states the average-linkage \#HAC lower bound conditional on OMv.

\begin{theorem}\label{thm:omv-to-hac-upgma-count}
  Suppose for $\eps > 0$ and $c \in [0, 1)$ that we can solve
  dynamic / partially dynamic $O(m^c)$-approximate average-linkage \#HAC with 
  $\on{poly}(n)$ preprocessing work,
  $O(m^{(1-c)/2 - \eps})$ amortized / worst-case update work, and
  $O(m^{1-c - \eps})$ query work. Then the OMv hardness conjecture is 
  false.
\end{theorem}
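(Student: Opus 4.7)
The plan is to reduce the $1$-uMv problem to dynamic $\lambda$-approximate average-linkage \#HAC, following Henzinger et al.'s bipartite-graph formulation but replacing their $st$-SubConn query with a \#HAC cluster-count query. Under the OMv conjecture, answering $n$ online $1$-uMv queries on an $n \times n$ matrix $M$ requires $\Omega(n^{3-o(1)})$ total online work after $\on{poly}(n)$ preprocessing. Henzinger et al.\ translate $M$ into a bipartite graph $B$ on $R \cup C$ with $|R|=|C|=n$ and an edge $\{r_i, c_j\}$ iff $M_{ij}=1$. An online query vector pair $(u,v)$ becomes the question of whether any edge of $B$ has both endpoints in the activated set $A = \{r_i : u_i = 1\} \cup \{c_j : v_j = 1\}$, which equals $u^T M v$. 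I would answer this question using \#HAC: if $B[A]$ has no edge then $B[A]$ has exactly $|A|$ connected components, and otherwise strictly fewer, so a single cluster-count query suffices.

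To turn this into an $O(m^c)$-approximate average-linkage \#HAC instance, I would construct a weighted graph $G'$ consisting of $B$ with every edge of weight $1$, together with a ``sink'' gadget: a new vertex $x$ plus $\Theta(n\lambda)$ leaves connected to $x$ by very-high-weight edges, and an initial weight-$W$ ``deactivation'' edge $\{v, x\}$ for every $v \in R \cup C$ so that every vertex of $B$ starts deactivated. Toggling activation of $v$ is simulated by deleting or inserting the single edge $\{v, x\}$, so each $1$-uMv query costs $O(n)$ HAC updates plus one \#HAC query at a fixed threshold $\theta \le 1$. The correctness analysis parallels that of \cref{lem:subconn-to-hac-upgma}: by choosing $W$ large enough, all high-weight edges merge before any weight-$1$ edge into a single sink cluster $C_x$ of size $\Omega(n\lambda)$ containing $x$, the leaves, and every currently deactivated vertex. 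Since each activated vertex has at most $n$ weight-$1$ cross-edges into $C_x$, the average-linkage similarity between $C_x$ and any activated-vertex cluster is at most $n/|C_x| = O(1/\lambda)$, which is below $\theta/\lambda$; thus even $\lambda$-approximate HAC cannot merge $C_x$ with any activated cluster, and the remaining merges follow the connected components of $B[A]$. The final cluster count is $1$ plus the number of components of $B[A]$, which detects whether $B[A]$ contains an edge.

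For the work analysis, the constructed graph has $m = O(n^2 + n\lambda)$ edges. Setting $\lambda = m^c$ gives $m = \Theta(n^2)$ when $c \le 1/2$ and $m = \Theta(n^{1/(1-c)})$ when $c \ge 1/2$. Each of the $n$ $1$-uMv queries uses $O(n)$ HAC updates and one query, so the total online cost is $n^2 T_u + n T_q$; in either regime, plugging in $T_u = O(m^{(1-c)/2 - \eps})$ and $T_q = O(m^{1-c-\eps})$ yields a bound of the form $O(n^{3-\Omega(1)})$, contradicting OMv. For the partially dynamic case, during preprocessing I would prebuild $n$ independent fresh \#HAC instances on $G'$ (at total cost $\on{poly}(n)$), each starting with either all deactivation edges present (for decremental) or none (for incremental), and use the $i$-th instance to answer the $i$-th $1$-uMv query via a monotone sequence of $O(n)$ updates followed by a single query. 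The main obstacle is calibrating $W$ and the sink size $\Theta(n\lambda)$ so that the internal merges of the sink gadget complete in the intended order under $\lambda$-approximation before any partially formed sink cluster develops average-linkage similarity exceeding $\theta/\lambda$ with a cluster of activated vertices; once this ordering is secured, the remainder of the argument is analogous to the proof of \cref{lem:subconn-to-hac-upgma}.
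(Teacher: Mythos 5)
Your proposal follows the same route as the paper: the paper's proof of this theorem is itself only a sketch that builds Henzinger et al.'s bipartite $1$-uMv graph and defers to the triangle-detection reduction (\cref{lem:triangle-to-hac-upgma}) for the \#HAC gadgetry, and that is exactly the skeleton you reconstruct. You differ in two implementation choices. First, the paper's gadget (in \cref{lem:triangle-to-hac-upgma}) attaches a private star of $\lambda-1$ leaves to \emph{each} vertex and toggles the single edge $\set{v,v'}$, whereas you use one global sink $x$ with $\Theta(n\lambda)$ leaves and a weight-$W$ deactivation edge per vertex; both give an instance of size $O(n^2+n\lambda)$ and the same average-linkage dilution bound $O(1/\lambda)$, so the final arithmetic is identical, though your single sink forces the extra calibration of $W$ against the sink's internal merge order that you correctly flag (the per-vertex stars avoid this because each star has constant-in-$n$ size). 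Second, for the partially dynamic case the paper uses rollback (decremental) and complement-deactivation (incremental), while you prebuild $n$ fresh instances and run a monotone update sequence on each; your version is sound and arguably cleaner. One imprecision to fix: under average linkage with $\theta=1$, the activated vertices do \emph{not} merge into the connected components of $B[A]$ (a path $u$--$v$--$w$ stops after one merge, since $\mathcal{W}(\set{u,v},\set{w})=1/2<\theta$), so your claim that the final count is ``$1$ plus the number of components'' is false; however, the detection you actually need survives, because the count equals the no-edge baseline if and only if no weight-$1$ merge occurs, and any edge of $B[A]$ forces at least one such merge --- this weaker monotone argument is also the one the paper uses.
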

\begin{proof}
  As discussed in the text above, we 
  construct a bipartite graph following Henzinger et al.'s
  reduction from $1$-uMv to $st$-SubConn. Then we want to use \#HAC
  determine whether there is any edge connecting activated vertices in the
  bipartite graph. We omit the remainder of this proof because it is similar 
  to the proofs of \cref{thm:hac-upgma-count-triangle-bound} and
  \cref{lem:triangle-to-hac-upgma} for reducing triangle detection to \#HAC.
\end{proof}

We state the average-linkage \#HAC lower bound conditional on 
triangle detection hardness next. The \emph{triangle detection hardness conjecture}
is that there is a constant $\delta > 0$
such that any word-RAM algorithm detecting whether an $m$-edge graph has a
triangle requires $\Omega(m^{1+\delta-o(1)})$ expected work~\cite{abboud2014popular}.
(The best existing triangle detection
algorithms take
$O(\min\{m^{2\omega/(\omega+1)},n^\omega\})$ work where $O(n^\omega)$ is
the work to multiply $n$-by-$n$ matrices~\cite{itai1978finding,alon1997finding}.)

\begin{theorem}\label{thm:hac-upgma-count-triangle-bound}
  Suppose for $\delta > 0$, $\eps > 0$, and $c \in [0, 1)$ that we can solve one
  of the following problems with $O(m^{(1+\delta)(1-c)-\eps})$ preprocessing
  work, $O(m^{\delta(1-c) - \eps})$ update work, and $O(m^{2\delta(1-c) -
  \eps})$ query work:
  \begin{itemize}[topsep=1pt,itemsep=0pt,parsep=0pt,leftmargin=10pt]
    \item fully dynamic $O(m^c)$-approximate average-linkage \#HAC with the
      work bounds being amortized,
    \item incremental (on dense graphs) or decremental $O(m^c)$-approximate
      average-linkage \#HAC with the work bounds being worst-case.
  \end{itemize}
  Then the triangle detection hardness conjecture is false for this value of $\delta$.
\end{theorem}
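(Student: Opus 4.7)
The plan is to mimic the pattern used for \cref{thm:omv-to-hac-upgma-count}, adapting the existing reduction from triangle detection to $st$-SubConn (Abboud--Vassilevska Williams~\cite{abboud2014popular}) to use average-linkage \#HAC in place of $st$-SubConn. Their construction builds, from an $m$-edge graph on which we wish to detect a triangle, a bipartite ``activation'' gadget in which we must answer a batch of queries of the form: activate a specified subset of vertices, and report whether any edge of the gadget has both endpoints activated. The triangle-detection lower bound follows from doing $O(m^{1-\delta})$ such activation queries within the claimed work budget $(m^{1+\delta-\eps},\ m^{\delta-\eps},\ m^{2\delta-\eps})$. The same framework works if the activation query is answered via \#HAC rather than $st$-SubConn, so our job is to express the activation query as an average-linkage \#HAC query while keeping the instance size $\tilde O(\lambda\, m)$ (not $\tilde O(\lambda\, m^3)$ as in \cref{lem:subconn-to-hac-upgma}; otherwise the triangle exponent blows up and the bound becomes vacuous).

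First, I would build the HAC instance in the style of \cref{lem:subunion-to-hac-upgma}, which already achieves a roughly linear blow-up by giving each base vertex its own private gadget rather than routing inactive vertices through one giant star. Concretely, copy the bipartite activation graph with edge weight $L$; for each bipartite vertex $v$, attach a private star of $\Theta(\lambda)$ leaves with weight $w_x$, and, analogously to the $\set{X_i,x}$ edges in \cref{lem:subunion-to-hac-upgma}, toggle an extra heavy edge to a global ``sink'' cluster depending on whether $v$ is deactivated. Choose the weight hierarchy $w_x \gg w_y \gg L \gg 1$ so that, when HAC is run to threshold $\theta$, inactive vertices are first absorbed into the sink (a cluster so large that its similarity to everything else falls below $\theta/\lambda$), while each active vertex sits in a private size-$\Theta(\lambda)$ cluster whose similarity to an active neighbor is exactly $L/\Theta(\lambda)$, above $\theta$ iff the bipartite edge is present.

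Second, read off the activation-query answer from the cluster count. If the active subgraph is an independent set in the gadget, the final number of clusters equals a fixed baseline $B$ depending only on the vertex count and the sizes of the private stars; each surviving bipartite edge between two active vertices causes exactly one extra merge and decreases the count by one. Thus ``\#HAC $=B$'' is equivalent to ``no edge among activated vertices,'' which is the activation query. Third, as in \cref{lem:subunion-to-approx}, set $\lambda = \Theta(m^{c\cdot g})$ for the appropriate exponent $g$ tied to the linear blow-up of the construction, so that $\lambda \ge m'^{c}$ on the resulting instance of size $m' = \tilde O(\lambda m)$; an $O(m'^c)$-approximate algorithm then yields a valid $\lambda$-approximate answer, and plugging the assumed work bounds into the Abboud--Vassilevska Williams template produces an $O(m^{1+\delta-\Omega(1)})$ triangle detector.

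The main obstacle is the gadget design: the similarity inequalities that separate the ``active--active edge present'' case from ``absent'' must retain a factor-$\lambda$ slack so that the $\lambda$-approximation cannot reorder the merges that determine the cluster count, while simultaneously keeping vertex and edge counts linear in $\lambda m$ rather than $\lambda m^2$ or $\lambda m^3$. The partially dynamic variants are the other subtle point: for the decremental case we initialize with all toggles on and only delete the sink edges to activate vertices, and for the incremental case we initialize with all toggles off and only insert them, using the dense-graph caveat already noted in the theorem statement to absorb the $\Omega(m)$ initial edge count. Verifying that the \#HAC count changes by exactly one per present edge under both cases (and that the approximation tolerance never crosses the $\theta$ vs.\ $\theta/\lambda$ gap) is the technically delicate step.
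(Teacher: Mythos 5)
Your high-level plan is the one the paper actually uses: take Abboud--Vassilevska Williams's bipartite activation gadget from their reduction of triangle detection to $st$-SubConn, replace the connectivity query by an average-linkage \#HAC cluster-count query against a known baseline, keep the instance size $O(\lambda n + m)$, set $\lambda = \Theta(m^{c/(1-c)})$ so that $m' = O(m^{1/(1-c)})$ and $\lambda \ge m'^c$, and finish with the lemma of Abboud and Vassilevska Williams converting an $O(m^{1+\delta-\Omega(1)} + nm^{2\delta-\Omega(1)})$-work procedure into an $O(m^{1+\delta-\Theta(1)})$-work triangle detector. This matches \cref{lem:triangle-to-hac-upgma} and the paper's proof of \cref{thm:hac-upgma-count-triangle-bound}.

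There are, however, two concrete gaps in the execution. First, the gadget you sketch is both over-engineered and quantitatively off. You place each \emph{active} vertex in a private size-$\Theta(\lambda)$ cluster and claim its similarity to an active neighbor is $L/\Theta(\lambda)$; under average linkage it is $L/\Theta(\lambda^2)$ when both endpoints sit in size-$\Theta(\lambda)$ clusters, and forcing the global sink's similarities a further factor of $\lambda$ below that pushes the sink's leaf count (and hence $m'$) to $\Omega(\lambda^2 n)$, which shrinks the admissible range of $c$ in the final statement. The paper's gadget avoids all of this: each bipartite vertex $v$ gets its own star (center $v'$, $\lambda-1$ leaves, weight-$(\lambda+1)^2$ edges), the toggle is the single edge $\set{v,v'}$, active vertices remain \emph{singletons}, deactivated vertices are absorbed into their own size-$(\lambda+1)$ star so that their weight-$1$ edges fall below $\theta/\lambda = 1/\lambda$, and an edge between two active singletons has similarity exactly $\theta = 1$; no global sink is needed and $m' = O(\lambda n + m)$. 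Second, your partially dynamic handling does not work as stated: the reduction must activate and then \emph{deactivate} $v$'s neighbors for each of the $n$ choices of $v$, and a decremental (resp.\ incremental) algorithm cannot re-insert (resp.\ delete) the toggle edges. The paper resolves this by rolling back the data structure's state after each query, which is precisely why the partially dynamic bounds are only worst-case rather than amortized. Relatedly, the ``dense graphs'' caveat in the incremental case arises because one must deactivate the \emph{non}-neighbors of each $v$, costing $O(n)$ updates per vertex and $O(n^2)$ updates total in place of $O(m)$ --- not because of the initial edge count.
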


We prove the above theorem by directly reducing
triangle detection to average-linkage \#HAC in the following lemma.

\begin{figure}
  \centering
  \begin{subfigure}{0.3\columnwidth}
        \centering
        \includegraphics[scale=0.3]{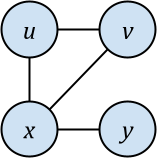}
        \caption{Triangle detection instance}
        \label{fig:triangle-to-hac-upgma-before}
    \end{subfigure}\hfill%
    \begin{subfigure}{0.65\columnwidth}
        \centering
        \includegraphics[scale=0.3]{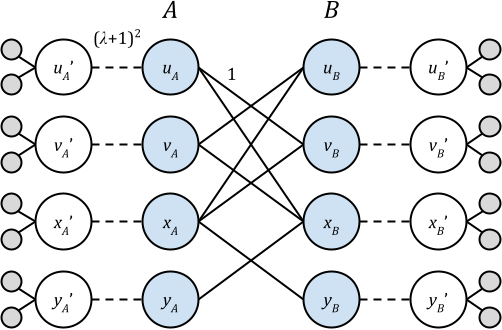}
        \caption{
          The corresponding HAC instance
        }
        \label{fig:triangle-to-hac-upgma-after}
    \end{subfigure}
  \caption{
    On the left is an example graph, and on the right is the 
    result of applying to the example graph the
    reduction from triangle detection to average-linkage \#HAC
    given by \cref{lem:triangle-to-hac-upgma}.
  }
  \label{fig:triangle-to-hac-upgma}
\end{figure}

\begin{lem} \label{lem:triangle-to-hac-upgma}
  Let $\lambda \in [1, \on{poly}(n)]$. Suppose that we can solve dynamic /
  decremental $\lambda$-approximate average-linkage
  \#HAC with preprocessing work
  $p(m, n)$, update work $u(m, n)$, and query work $q(m, n)$ amortized /
  worst-case. Let $m' = O(\lambda n + m)$ and $n' = O(\lambda n)$. Then
  triangle detection can be solved in $O(m \cdot u(m', n') + n \cdot q(m', n') + p(m',
  n'))$ work. If we have an incremental algorithm instead with worst-case
  work bounds, then triangle detection can be solved in $O(n^2 \cdot u(m', n')
  + n \cdot q(m', n') + p(m', n'))$ work.
\end{lem}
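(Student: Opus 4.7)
The plan is to follow the Abboud--Vassilevska Williams style of reduction: a graph $G=(V,E)$ has a triangle iff there is some $v \in V$ such that two vertices of $N(v)$ are adjacent in $G$. So I would iterate over all $v \in V$, ``activate'' the vertices in $N(v)$ inside a single \#HAC instance, and use one \#HAC query to detect whether some pair of activated vertices is joined by an edge of $G$.

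To build the HAC instance $G'$, I would adapt the gadget from \cref{lem:subconn-to-hac-upgma,lem:subunion-to-hac-upgma}. Copy $G$ with all edges given some small weight (say $1$). For each $v \in V$, add a ``sink'' star with a new center $x_v$ and $\Theta(\lambda)$ leaves connected by very heavy edges, so that after the first round of agglomeration each such star collapses into a cluster of size $\Theta(\lambda)$. Attach $v$ to $x_v$ via a heavy switching edge; when the switching edge is present the size-$\Theta(\lambda)$ sink absorbs $v$, driving the average-linkage similarity from $v$'s cluster to its original neighbors below $\theta/\lambda$ and making $v$ inert (``inactive''). Deleting the switching edge frees $v$ to participate in unit-weight merges, i.e.\ makes $v$ ``active.'' Add one global calibration widget (in the spirit of the $s$--$t$--$y$--$x$ structure of \cref{lem:subunion-to-hac-upgma}) so that the number of clusters at threshold $\theta$ equals a known baseline $C_0$ exactly when no pair of active vertices is adjacent in $G$, and drops by at least one as soon as any such pair exists. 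The instance has $n' = O(\lambda n)$ vertices and $m' = O(\lambda n + m)$ edges by construction.

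In the dynamic or decremental setting, testing vertex $v$ costs $\deg(v)$ updates to activate $N(v)$, one \#HAC query, and $\deg(v)$ updates to revert. Summing over $v$ gives $\sum_v 2\deg(v) = O(m)$ updates and $n$ queries, plus one preprocessing call, matching the claimed bound $O(p(m',n') + m \cdot u(m',n') + n \cdot q(m',n'))$. In the incremental setting there are no deletions available, so I would either re-preprocess per vertex (rolled into the preprocessing term) or, keeping the algorithm worst-case, perform up to $n$ insertions per vertex without being able to undo them; either way the cost inflates to $O(n^2 \cdot u(m',n') + n \cdot q(m',n') + p(m',n'))$ as stated.

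The main obstacle, and the step I would spend the most care on, is calibrating the gadget weights so that the $\lambda$-approximation cannot flip a ``no triangle'' instance into a ``triangle'' one. Concretely, I need a polynomial-in-$\lambda$ gap between (i) the similarity of a genuine unit-weight merge between two active vertices, which must occur, and (ii) the similarities of all other candidate merges involving inactive vertices, the star centers, or the calibration widget, which must \emph{not} occur even under an adversarial $\lambda$-approximation. This is the same balancing act carried out in \cref{lem:subunion-to-hac-upgma}, where the scales $w_t, w_y, w_x, L, \ell_x$ each sit a factor of $\Omega(\lambda)$ apart; I would reuse that analysis template, adjusted for the simpler ``edge-in-induced-subgraph'' question, to verify that the cluster count at threshold $\theta$ is exactly $C_0$ in the no-triangle case and strictly less than $C_0$ otherwise.
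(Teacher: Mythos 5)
Your core construction matches the paper's: per-vertex sink stars with $\Theta(\lambda)$ leaves attached by heavy edges, a deletable switching edge that toggles whether $v$ is absorbed into a size-$\Theta(\lambda)$ cluster (driving its unit-weight similarities below $\theta/\lambda$), and a cluster-count query that detects whether any two activated vertices are adjacent. Two cosmetic differences: the paper runs this on the Abboud--Vassilevska Williams bipartite double cover of $G$ rather than on $G$ itself (immaterial for the edge-between-activated-vertices test), and it needs no ``global calibration widget''---the baseline count is simply the number of clusters when no activated pair is adjacent, which is fixed by construction. The weight calibration you defer is also much lighter than in \cref{lem:subunion-to-hac-upgma}: a single scale suffices (stars of size $\lambda$ with weight-$(\lambda+1)^2$ edges), since an inert vertex's unit edges have average-linkage similarity at most $1/(\lambda+1) < \theta/\lambda$ while an active--active unit edge has similarity exactly $\theta = 1$.

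The genuine gap is in the partially dynamic cases. For decremental \#HAC you write that testing $v$ costs $\deg(v)$ updates to activate and $\deg(v)$ ``updates to revert''---but reverting means re-inserting the switching edges, which a decremental algorithm does not support. The paper's fix is to roll back the data structure's state to before the activations; this is also why the decremental bound is stated worst-case rather than amortized (rollback invalidates amortized accounting). For the incremental case, neither of your two alternatives yields the claimed bound: re-preprocessing per vertex costs $n \cdot p(m',n')$ rather than $p(m',n')$, and performing insertions ``without being able to undo them'' leaves vertex $v$'s activations in place when you test the next vertex, breaking correctness. The paper instead initializes the instance with every vertex activated (no switching edges present), deactivates the non-neighbors of $v$ by \emph{inserting} their switching edges---$O(n)$ insertions per vertex, hence $O(n^2)$ updates total---and again rolls back between vertices. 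You would need both this complementation trick and the rollback trick to complete these cases.
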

\begin{proof}
  Suppose we are given an unweighted graph $G = (V,E)$ on which we want to 
  detect a triangle.
  Following the construction in Abboud and Vassilevska Williams's reduction
  from triangle detection to $st$-SubConn~\cite{abboud2014popular}, 
  we construct an (initially unweighted) bipartite graph $G' = (V' = A \cup B, E')$
  with partitions $A$ and $B$ such that for each $v \in V$ we create $v_A \in A$
  and $v_B \in B$. For every edge $\set{u, v} \in E$, create edges $\set{u_A,
  v_B}$ and $\set{v_A, u_B}$. Abboud and Vassilevska Williams remark that given
  an arbitrary vertex $v \in V$, if we ``activate'' $u_A$ and $u_B$ for every
  neighbor $u$ of $v$, then $v$ participates in a triangle if there is an edge 
  connecting two activated vertices. Therefore, to detect whether the graph $G$  
  has a triangle, we can iterate through each
  $v \in V$, activate $v$'s neighbors, 
  check for an edge between activated vertices, and deactivate $v$'s neighbors.
  
  We focus on the case of fully dynamic \#HAC.
  To solve this bipartite graph problem using \#HAC, we 
  give every existing edge in $E'$ a weight of $1$. For each $v \in A \cup B$, add
  a star graph to $G'$ consisting of a center vertex $v'$ and $\lambda - 1$ 
  leaves with weight-$(\lambda + 1)^2$ edges. 
  Add a weight-$(\lambda + 1)^2$ edge
  $\set{v, v'}$ as well. Initialize \#HAC on $G'$
  with threshold $\theta = 1$. We can ``activate'' a vertex $v$ by deleting edge $\set{v,
  v'}$ and ``deactivate'' it by re-inserting the edge.
  The graph $G'$ has $m' = O(\lambda n + m)$ edges and $n' = O(\lambda n)$ vertices.
  \Cref{fig:triangle-to-hac-upgma} illustrates this construction with $G$ being
  \cref{fig:triangle-to-hac-upgma-before} and $G'$ being
  \cref{fig:triangle-to-hac-upgma-after}.
  
  To determine whether there is an edge between activated vertices, query for
  the number of clusters if we cluster until threshold $\theta$.
  If the number of clusters is $\abs{A} + \abs{B}$, then there is no such edge,
  otherwise there is such an edge. This is true because all deactivated vertices
  merge with their corresponding star, and then their incident edges fall below
  weight $1/\lambda$ and hence are no longer considered when clustering until
  threshold $\theta = 1$. If there are no edges between two active vertices,
  then no more merges occur and we are left with $\abs{A} + \abs{B}$ clusters.
  If there is an edge between two active vertices, at least one of them will
  merge and we will be left with fewer clusters.
  
  The number of \#HAC queries is $O(n)$ (once for each vertex in $V$), and the 
  number of \#HAC updates is $O(m)$ (once for each neighbor of each vertex 
  in $V$, i.e., twice for each edge of $E$). The total work is then
  $O(m \cdot u(m', n') + n \cdot q(m', n') + p(m', n'))$. 
  
  As Abboud and Vassilevska Williams note for the partially dynamic case,
  we cannot deactivate vertices by inserting edges
  if we are considering decremental \#HAC. Instead, we roll back the
  state of the \#HAC algorithm until the vertices are no longer activated.
  This rolling back means that we can only analyze worst-case work and not amortized
  work. For incremental \#HAC, we initialize the graph $G'$ to not have 
  the edges of the form $\set{v, v'}$ (i.e., all vertices are activated). When we
  are iterating over $v \in V$, instead of activating $v$'s neighbors, we deactivate
  its non-neighbors. This increases the number of \#HAC updates from $O(m)$ to $O(n^2)$.
\end{proof}
\begin{proof}[Proof of \cref{thm:hac-upgma-count-triangle-bound}]
  Fix arbitrary $c \in [0, 1)$ and suppose we had an $m'^c$-approximate (on
  inputs with $m'$ edges) \#HAC algorithm matching
  the work bounds described in the theorem. Then given a graph $G$ on $n$
  vertices and $m \ge n$ edges upon which we want to solve triangle detection,
  we can apply the construction from \cref{lem:triangle-to-hac-upgma} with
  $\lambda = \Theta(m^{c/(1-c)})$ to get a $\lambda$-approximate average-linkage \#HAC instance
  over a graph $G'$ with $m' = O(m^{c/(1-c)}n + m) \le O(m^{c/(1-c)}m) =
  O(m^{1/(1-c)})$ and with $\lambda = m'^c$. We apply a
  $m'^c$-approximate \#HAC algorithm to solve triangle
  detection via \cref{lem:triangle-to-hac-upgma} in $O(m \cdot u(m', n') + n \cdot q(m', n') + p(m',
  n'))$ work (for fully dynamic and decremental \#HAC; the analysis for
  incremental \#HAC swaps the $m$ factor with $n^2$ and hence only gives 
  bounds on dense graphs). Substituting in the bounds of the \#HAC
  algorithm and substituting $m' = O(m^{1/(1-c)})$ shows that we've solved
  triangle detection in
  $O(m^{1+\delta - \Omega(1)} + nm^{2\delta - \Omega(1)})$ work.  Abboud and
  Vassilevska Williams provide a lemma showing that such a fast triangle detection
  algorithm gives an $O(m^{1+\delta - \Theta(1)})$ work algorithm for triangle
  detection~\cite{abboud2014popular}.
\end{proof}

Next, we move on to reductions from SubUnion. 
We finish the proof of \cref{lem:subunion-to-hac-upgma} by reducing partially dynamic SubUnion 
to partially dynamic average-linkage HAC.

\begin{proof}[Continued proof of \cref{lem:subunion-to-hac-upgma}]
  If we want to
  reduce incremental/decremental SubUnion to decremental/incremental
  average-linkage HAC, the update strategy in the proof of \cref{lem:subunion-to-hac-upgma}
  for the fully dynamic case 
  is invalid since we cannot add a weight-$w_y$ edge \emph{and} delete 
  a weight-$w_x$ edge (or vice versa).
  
  Instead, we construct the same HAC instance as in the fully dynamic
  case except that we make $y$ have edges to every $X_i \in X$
  rather than only $X_i \in S$. When processing updates, we skip
  modifying edges incident on $y$ and only add or remove the edges incident on
  $x$. Then we need to increase $\ell_x$ so that
  $C_x$ and $C_y$ do not merge with each other too early; we increase the constants as follows:
  \begin{align*}
    \ell_x &= 2\abs{X}w_y/(\lambda + 1) = O(\lambda^{8}\abs{U}^2\abs{X} + \lambda^5\abs{U}\abs{X}^2 + \lambda^2\abs{X}^3),  \\
    w_x &= \lambda(\ell_x + \abs{X})w_y + 1
  .\end{align*}
  The weight $w_x$ is chosen to be so large that all of the weight-$w_x$ edges
  merge before any weight-$w_y$ edge. After the weight-$w_x$ edges merge,
  the similarity between $C_x$ and $C_y$ is bounded above by 
  \begin{align*}
    \frac{\abs{X\setminus S}(w_y + \smallabs{C_y \cap U}L)}{\smallabs{C_x}\smallabs{C_y}}
    <
    \frac{\abs{X}(w_y + \smallabs{C_y \cap U}L)}{\ell_x\smallabs{C_y}} \\
    = \frac{\abs{X}w_y}{\ell_x\smallabs{C_y}}
    + \frac{\abs{X}\smallabs{C_y \cap U}L}{\ell_x\smallabs{C_y}}
    <
    \frac{\abs{X}w_y}{\ell_x}
    + \frac{\abs{X}L}{\ell_x}
    < \frac{2\abs{X}w_y}{\ell_x}
    = \lambda + 1,
  \end{align*}
  where the numerator of the first term comes from having a weight-$w_y$ edge from $C_y$ to 
  each $X_i$ in $C_x$ and from potentially having a weight-$L$ edge from every $X_i \in C_x$ to every $u \in U$
  contained in $C_y$. Comparing \cref{eq:subunion-to-hac-upgma:cy} to this upper bound, we find that
  all of the weight-$L$ edges incident on $u \in U$ covered by $S$ merge before
  $C_x$ and $C_y$ merge. Regardless of whether or not $C_x$ and $C_y$ merge,
  it is still true that $t$ either merges with an uncovered $u \in U$ if it
  exists or merges with $s$, and hence queries return the correct answer.
\end{proof}

Finally, we give reductions from SubUnion to complete-linkage HAC, weighted-average-linkage
HAC, and average-linkage \#HAC.  

\begin{figure}
  \includegraphics[scale=0.3]{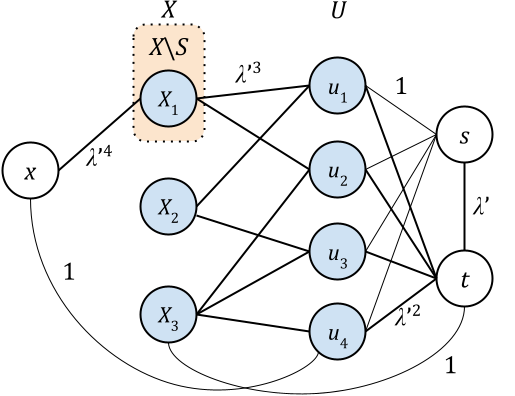}
  \caption{
  The figure illustrates the HAC instance constructed in 
  \cref{lem:subunion-to-hac-complete}'s reduction from SubUnion
  to complete-linkage HAC. 
  For cleanliness, we do not draw all the weight-$1$ edges from $x$ to each $u \in U$ and from
  $t$ to each $X_i \in X$.
  }
  \label{fig:subunion-to-hac-complete}
\end{figure}

\begin{lem}[Part of \cref{thm:hac-seth-lower-bounds}]\label{lem:subunion-to-hac-complete}
 Let $\lambda \in [1, \on{poly}(n)]$.
 Suppose we can solve dynamic / incremental / decremental $\lambda$-approximate
 complete-linkage HAC with
 $p(m,n)$ preprocessing work, $u(m,n)$ update work, and
 $q(m,n)$ query work. Then we can solve dynamic / decremental / incremental
 SubUnion with with $p(m', n')$ processing work, $u(m',n')$ update
 work, and $q(m',n')$ query work where $m' = O(\sum_i \abs{X_i})$ and $n' =
 O(\abs{U} + \abs{X})$.
\end{lem}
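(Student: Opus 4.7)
The plan is to adapt \cref{lem:subconn-to-hac-complete}'s construction so that the HAC query decides coverage rather than $st$-connectivity. Following the figure, I would build a graph $G'$ with a vertex for each $X_i \in X$, a vertex for each $u \in U$, and auxiliary vertices $s$, $t$, $x$. For every membership pair $u \in X_i$ I would add an edge $\set{X_i, u}$; I would add weight-$1$ edges from $x$ to every $u \in U$ and from $t$ to every $X_i \in X$, together with one calibrated edge incident to $s$. Membership $X_i \in S$ would be encoded by a single heavy edge from $X_i$ to a fixed activation vertex, so that toggling $X_i \in S$ costs one edge insertion or deletion. These choices give $n' = O(\abs{U} + \abs{X})$ vertices and $m' = O(\sum_i \abs{X_i})$ edges, with the partially dynamic variants handled by the same bookkeeping used in \cref{lem:subunion-to-hac-upgma}.

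Next I would pick the edge weights and the query threshold $\theta$ as polynomials in $\lambda$ so that complete-linkage HAC runs in three phases. First, the heavy activation edges merge every $X_i \in S$ into one ``active'' cluster. Second, the membership edges $\set{X_i, u}$ pull every $u$ covered by $S$ into the active cluster, producing a cluster $C^*$ whose vertex set is the active $X_i$'s together with the elements of $U$ covered by $S$. Third, because complete linkage requires every vertex of $C^*$ to be adjacent to the merge target, $C^*$ can merge with $t$ at similarity at least $\theta$ only when $C^*$ contains no uncovered $u$; in the fully covered case the weight-$1$ edges $\set{x, u}$ and $\set{t, X_i}$ guarantee the required pairwise cut edges, and a final merge across the edge incident to $s$ places $s$ and $t$ in the same cluster. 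The HAC query ``are $s$ and $t$ in the same cluster at threshold $\theta$'' then decides SubUnion.

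The main obstacle will be tuning the weights so that the complete-linkage minima behave monotonically in coverage: a single uncovered $u^* \in U$ must drop the minimum edge weight across the relevant cut below $\theta/\lambda$, while full coverage must keep it above $\theta$, despite the slack allowed by $\lambda$-approximation. Verifying this requires a careful case analysis of which intermediate merges the $\lambda$-approximate algorithm could trigger, and confirming that none of them brings $s$ and $t$ together in a cluster that also contains an uncovered $u$. Once this calibration is done, the work bounds are immediate: preprocessing and queries issue one call each to the HAC algorithm on $G'$, and every SubUnion update translates to $O(1)$ HAC edge updates, giving total work $p(m', n')$, $u(m', n')$, and $q(m', n')$ as claimed.
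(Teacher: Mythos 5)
Your construction is missing the mechanism that actually lets the query detect an uncovered element, and the reason you give for why it would work misstates the linkage function. In this paper complete linkage is $\min_{(x,y)\in\on{Cut}(X,Y)} w(x,y)$, a minimum over \emph{edges present in the cut}, not over all vertex pairs; so the claim that ``complete linkage requires every vertex of $C^*$ to be adjacent to the merge target'' is not available to you. Moreover, an uncovered $u^*$ has no edge to any active $X_i$, so it never enters $C^*$, and in your sketch the only edges in the cut between $t$ and the rest of the graph are the weight-$1$ edges $\set{t,X_i}$ plus the single calibrated edge to $s$ --- these look identical whether or not $U$ is covered. The paper's proof solves exactly this problem with two edge families you omit: a weight-$\lambda'^2$ edge $\set{t,u}$ and a weight-$1$ edge $\set{s,u}$ for every $u\in U$. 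A covered $u$ gets absorbed into a cluster containing some $X_i\in S$, and the weight-$1$ edge $\set{t,X_i}$ drags the complete-linkage similarity from $t$ to that cluster down to $1$; an uncovered $u$ stays a singleton, so $t$ merges with it at weight $\lambda'^2$ \emph{before} the weight-$\lambda'$ edge $\set{s,t}$ can fire, and the weight-$1$ edge $\set{s,u}$ then poisons the $s$--$t$ similarity. Without the $t$--$u$ and $s$--$u$ edges there is no such signal.

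A second, related gap is your inverted encoding: you attach the heavy toggled edge to $X_i\in S$, whereas the paper attaches a weight-$\lambda'^4$ edge from $x$ to each $X_i\in X\setminus S$. This is not cosmetic. The inactive sets carry heavy membership edges $\set{X_j,u}$, and if they remain singletons an uncovered $u$ will happily merge with an inactive $X_j$ containing it, wrecking the intended phase structure. The paper neutralizes them by having $x$ absorb all of $X\setminus S$ first (the $\lambda'^4$ edges merge before the $\lambda'^3$ membership edges), after which the weight-$1$ edges $\set{x,u}$ force the complete-linkage similarity from the $x$-cluster to every $u$ down to $1$. Your sketch gives $x$ the weight-$1$ edges to $U$ but never connects $x$ to the inactive sets, so this neutralization never happens. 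Finally, the appeal to the partially dynamic bookkeeping of \cref{lem:subunion-to-hac-upgma} is unnecessary here: because each SubUnion update toggles a single edge incident to $x$, the same construction is already incremental/decremental-safe, which is precisely why the paper's complete-linkage lemma needs no separate partially dynamic argument.
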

\begin{proof}
  Suppose we are given a SubUnion instance $(X, U)$ with subset $S \subseteq X$.
  Define  $\lambda' = \lambda +1$ and set the
  clustering threshold to be $\theta = \lambda'$.

  \emph{Preprocessing}: \Cref{fig:subunion-to-hac-complete} illustrates 
  the graph we will construct.
  Given a SubUnion instance $(X, U)$ with subset $S
  \subseteq X$, create a graph with a vertex representing each $X_i \in X$, a
  vertex representing each $u \in U$, and a weight-$\lambda'^3$ edge
  $\set{X_i, u}$ for each $u \in X_i$ for each $X_i \in X$. Add three extra
  vertices $x$, $s$, and $t$. Add a weight-$1$
  edge $\set{x, u}$ for each $u \in U$. Add a weight-$\lambda'^4$ edge $\set{x,
  X_i}$ for each $X_i \in X \setminus S$.  Add a weight-$\lambda'$ edge $\set{s,
  t}$. Add a weight-$1$ edge $\set{s, u}$ and a weight-$\lambda'^2$ edge
  $\set{t, u}$ for each $u \in U$. Add a weight-$1$ edge $\set{t, X_i}$ for each
  $X_i \in X$. The resulting graph $G$ has $n = O(\abs{X} + \abs{U})$ vertices and
  $m = O(\sum_i \abs{X_i})$ edges.  Initialize dynamic HAC on $G$
  with $\theta = \lambda'$.

  \emph{Update}: To simulate adding $X_i$ into $S$ in the SubUnion instance, delete
  edge ${x, X_i}$ in $G$. To remove $X_i$ from $S$, insert the
  weight-$\lambda'^4$ edge $\set{x, X_i}$.

  \emph{Query}: Query if vertices $s$ and $t$ are in the same cluster if we cluster up
  to similarity $\theta$. If yes, then return that $S$ covers $U$, otherwise
  return that $S$ does not cover $U$.

  \emph{Correctness}: All the weight-$\lambda'^4$ merge first, putting $x$ and all
  $X_i \in X \setminus S$ in the same cluster. Then all edges incident on this
  cluster have weight-$1$, so they don't participate in any more merges when
  clustering with threshold $\theta$.  Then the weight-$\lambda'^3$ edges
  incident on $X_i \in S$ merge, causing the subgraph induced by taking $X_i \in S$
  and all $u \in U$ to merge into its connected components. Connected components
  containing some $X_i$ will have a weight-$1$ edge to $t$, whereas the other
  connected components consisting of a singleton $u \in U$ will have a weight
  $\lambda'^2$ edge with $t$.

  Consider the case where $S$ covers $U$. In this case, all edges from $t$ to
  some $u \in U$ have weight $1$, and the next edge to merge is the
  weight-$\lambda'$ between $s$ and $t$. Hence $s$ and $t$ are in the same
  cluster, and a query will return the correct answer. In the case where $S$
  does not cover $U$, the next edge to merge is a weight-$\lambda'^2$ edge
  between $t$ and some $u \in U$. After that, $t$'s cluster has a weight-$1$
  edge with $s$, so $t$ does not merge with $s$ when clustering until threshold
  $\theta$. Hence a query will return the correct answer in this case too.
\end{proof}

\begin{figure}
  \includegraphics[scale=0.3]{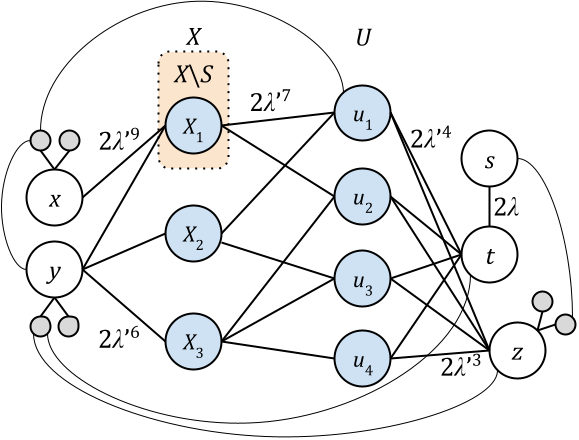}
  \caption{
  The figure illustrates the HAC instance constructed in 
  \cref{lem:subunion-to-hac-wpgma}'s reduction from SubUnion
  to weighted-average-linkage HAC. 
  In this example, the number of leaves (gray vertices) $\ell$ per star is $2$.
  Many weight-$1$ edges (the thinly drawn edges) from leaves to other vertices
  are omitted for cleanliness.
  }
  \label{fig:subunion-to-hac-wpgma}
\end{figure}

\begin{lem}[Part of \cref{thm:hac-seth-lower-bounds}]\label{lem:subunion-to-hac-wpgma}
  Let $\lambda \in [1, \on{poly}(n)]$.
 Suppose we can solve dynamic / incremental / decremental $\lambda$-approximate
 weighted-average-linkage HAC with
 $p(m,n)$ preprocessing work, $u(m,n)$ update work, and
 $q(m,n)$ query work. Then we can solve dynamic / decremental / incremental
 SubUnion with with $p(m', n')$ processing work, $u(m',n')$ update
 work, and $q(m',n')$ query work where $m' = O(\abs{U} \log \lambda + \sum_i \abs{X_i})$
 and $n' = O(\abs{U} + \abs{X} + \log \lambda)$.
\end{lem}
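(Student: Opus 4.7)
The plan is to mirror the strategy of \cref{lem:subunion-to-hac-complete}'s reduction from SubUnion to complete-linkage HAC, but replace the complete-linkage trick of ``deactivating'' the $x$-cluster via minimum edge weights with a weighted-average-linkage trick based on a $\Theta(\log\lambda)$-leaf star, in the spirit of \cref{lem:subconn-to-hac-wpgma}. In the complete-linkage reduction, once $x$ absorbs all of $X \setminus S$, incident edge weights collapse to the minimum ($1$), deactivating the cluster; weighted average linkage averages instead of taking the minimum, so we need auxiliary star leaves to drive the averages down.

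First I would build a graph with the same core as \cref{lem:subunion-to-hac-complete}: a vertex for each $X_i \in X$ and each $u \in U$; weight-$L$ edges $\set{X_i,u}$ whenever $u \in X_i$; extra vertices $x$, $s$, $t$; weight-$w_x$ edges from $x$ to each $X_i \in X \setminus S$; a weight-$\theta$ edge $\set{s, t}$; weight-$w_t$ edges from $t$ to each $u \in U$ and weight-$1$ edges from $t$ to each $X_i$; plus weight-$1$ edges from $s$ and from $x$ to each $u \in U$. Updates simulate membership in $S$ by toggling the weight-$w_x$ edge $\set{x, X_i}$. On top of this core, I would attach a star of $\ell = \Theta(\log \lambda)$ leaves to $x$ via weight-$w_x$ edges, with each leaf carrying a weight-$1$ edge to every $u \in U$; this adds $\Theta(\log \lambda)$ vertices and $\Theta(\abs{U}\log \lambda)$ edges, matching the stated $n'$ and $m'$ bounds.

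I would then argue that for a correctly chosen ordering $w_x \gg L \gg w_t > \theta$ with $\ell = \Theta(\log w_x)$, the merges proceed in stages. The heaviest edges fire first, merging $x$, the star, and $X \setminus S$ into a single cluster; by the inductive averaging argument of \cref{lem:subconn-to-hac-wpgma}, the resulting similarity from the $x$-cluster to any $u \in U$ is $1 + (w_x - 1)2^{-\ell}$, which is driven below $\theta/\lambda$ by the choice of $\ell$, so this cluster is inert. Next, the weight-$L$ edges fire, so each $X_i \in S$ absorbs the $u$'s it covers. Finally, if some $u$ is uncovered, then its weight-$w_t$ edge to $t$ fires first, after which the averaged edge weight from $t$'s cluster to $s$ drops well below $\theta$ and $s,t$ never merge; if instead $S$ covers $U$, then every $u$ is already inside a large covered cluster whose averaged similarity to $t$ has been diluted by weight-$1$ and weight-$L$ contributions to below $\theta$, leaving $\set{s,t}$ as the next edge to fire. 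A query then returns ``$S$ covers $U$'' iff $s$ and $t$ are in the same cluster at threshold $\theta$.

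The main obstacle will be pinning down the constants $w_x$, $L$, $w_t$, $\theta$, $\ell$ so that every desired inequality holds simultaneously under the $(\mathcal{W}(X,U) + \mathcal{W}(Y,U))/2$ averaging rule of weighted average linkage: in particular, I need the similarity from a ``covered'' cluster ($X_i \in S$ together with absorbed $u$'s) to $t$ to cross below $\theta$ after a bounded number of averaging merges, while the similarity from $t$ to an uncovered $u$ stays strictly above $\theta$ until $t$ actually merges with it. A secondary obstacle, following the continued proof of \cref{lem:subunion-to-hac-upgma}, is the partially dynamic case, where we cannot both add one edge and remove another in a single update; this will require always including the candidate edges $\set{x, X_i}$ for every $X_i$ from the start and simulating $S$-membership purely through additions (for decremental HAC) or purely through deletions (for incremental HAC), with $\ell$ and the weights scaled up to keep the permanently-attached $x$-cluster harmless.
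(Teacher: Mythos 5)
Your high-level plan---take the complete-linkage SubUnion reduction of \cref{lem:subunion-to-hac-complete} and graft on the star-with-$\Theta(\log\lambda)$-leaves averaging trick from \cref{lem:subconn-to-hac-wpgma}---is the right starting point, and your star on $x$ correctly deactivates the $X\setminus S$ cluster. But the construction as you describe it, with a single star, does not work, and the two difficulties you defer to ``pinning down the constants'' are not constant-tuning issues: they require additional gadgets. First, in the covered case, once an $X_i\in S$ absorbs a covered $u$ via a weight-$L$ edge, weighted average linkage sets the new cluster's similarity to $t$ to $(\mathcal{W}(X_i,t)+\mathcal{W}(u,t))/2=(1+w_t)/2$, and each further merge only halves the gap once more. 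A cluster that absorbs $O(1)$ elements of $U$ (e.g.\ $X_i=\{u\}$) retains similarity $\approx w_t/2$ to $t$, far above $\theta/\lambda$, so the $\lambda$-approximate process may merge $t$ into a covered cluster and the query answer is wrong exactly when $S$ covers $U$. There is no source of repeated dilution in your topology; averaging against a handful of weight-$1$ and weight-$L$ edges does not ``drive it below $\theta$.'' Second, in the uncovered case, after $t$ merges with one uncovered $u$, the similarity of $\{t,u\}$ to $s$ is $(\theta+1)/2$, which is again well above $\theta/\lambda$; a single halving cannot separate $s$ from $t$ under $\lambda$-approximation.

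The paper's proof resolves both issues by using \emph{three} stars with carefully staggered weights $2\lambda'^2,\ldots,2\lambda'^9$ (where $\lambda'=\lambda+1$ and each star has $\ell=\lceil\log(2\lambda'^7)\rceil$ leaves): the star on $x$ plays the role you assign it; a second star centered at $y$ attaches to every $X_i$ and, after every covered component merges into $y$'s cluster, its $\ell$ leaves (each with a weight-$1$ edge to $t$) merge in one by one and halve the cluster's similarity to $t$ down below $2<\theta/\lambda$, which is the missing mechanism for your covered case; and a third star centered at $z$ (with weight-$2\lambda'^3$ edges to every $u$ and leaves carrying weight-$1$ edges to $s$) absorbs $t$ after $t$ picks up an uncovered $u$, and its leaves then halve the $s$--$t$ similarity below $2$, which is the missing mechanism for your uncovered case. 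Your accounting of $n'=O(\abs{U}+\abs{X}+\log\lambda)$ and $m'=O(\abs{U}\log\lambda+\sum_i\abs{X_i})$ survives the addition of the two extra stars, and your treatment of the partially dynamic case is consistent with the paper's (which simply toggles the $\set{x,X_i}$ edges), but without the $y$ and $z$ gadgets the reduction is incorrect.
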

\begin{proof}
  Suppose we are given a SubUnion instance $(X, U)$ with subset $S \subseteq X$.
  Define $\lambda' = \lambda + 1$,
  define $\ell = \ceil{\log(2\lambda'^7)}$, and fix the clustering threshold to be
  $\theta = 2\lambda$. 

  \emph{Preprocessing}: 
  \Cref{fig:subunion-to-hac-wpgma} illustrates the graph we will construct.
  Create a graph
  $G$ with a vertex representing each $X_i \in X$, a vertex representing each $u
  \in U$, and a weight-$2\lambda'^7$ edge $\set{X_i, u}$ for each $u \in X_i$ for
  each $X_i \in X$. Add extra vertices $s$ and $t$.  Add a weight-$2\lambda$ edge
  $\set{s,t}$, and add a weight-$2\lambda'^4$ edge $\set{t,u}$ for each $u \in U$.
  We'll want to construct this graph so that $s$ and $t$ end up in the same
  cluster when clustering until threshold $\theta = 2\lambda$ if and only if $S$
  covers $U$.

  We'll add three star graphs to $G$ with centers $x$, $y$, and
  $z$, each with $\ell$ leaves. The purpose of the star centered on $x$ is to
  merge with $X_i \in X \setminus S$ and stop them from merging from with
  anything else. The purpose of star $z$ is that in the case where $S$ does not
  cover $U$, $t$ will merge with some uncovered $u \in U$, then merge with $z$,
  and finally the star will stop $t$ from merging with $s$. The purpose of star
  $y$ is to lower the weight from $u$ to $t$ for all $u \in U$ covered by $S$ so
  that if $S$ covers $u$, then $t$ will not merge with $z$.

  For the star graph with center $x$ and leaves $x_1, \ldots, x_\ell$, make the
  edge from $x$ to $x_i$ have weight $2\lambda'^8$ for each $x_i$. Connect $x$ to
  $X_i$ for each $X_i \in X \setminus S$ with weight $2\lambda'^9$. For each leaf $x_i$, connect $x_i$ to $y$ with
  weight $1$, and connect $x_i$ to $u \in U$ with weight $1$ for each $u \in U$.
  For the star graph with center $y$ and leaves $y_1, \ldots, y_\ell$, make the
  edge from $y$ to $y_i$ have weight $2\lambda'^5$ for each $y_i$. 
  Connect $y$ to $X_i$ for each $X_i \in X$ with weight $2\lambda'^6$.
  For each leaf
  $y_i$, add edges $\set{y_i, t}$ and $\set{y_i, z}$ with weight $1$. 
  For star graph with center $z$ and leaves $z_1, \ldots, z_\ell$, make the edge
  from $z$ to $z_i$ have weight $2\lambda'^2$ for each $z_i$, and connect each
  $z_i$ to $s$ with weight $1$. Connect $z$ to $u$ for each $u \in U$ with
  weight $2\lambda'^3$.

  \emph{Update and query}: Like
  in \cref{lem:subunion-to-hac-complete}, simulate adding or removing
  $X_i$ in $S$ by deleting or adding the weight-$2\lambda'^9$ weight edge
  $\set{x, X_i}$ in $G$. Answer queries about whether $S$ covers $U$ by returning
  whether $s$ and $t$ are in the same cluster if we run HAC until similarity
  $\theta$.

  \emph{Correctness}: Consider what happens when we run HAC up to similarity
  threshold $\theta = 2\lambda$. First, the weight-$2\lambda'^9$ edges merge so
  that each $X_i \in X \setminus S$ is in a cluster with $x$. Then the
  weight-$2\lambda'^8$ edges merge the leaves of $x$'s star with $x$ so that the cluster's remaining
  incident edges fall below weight $2$ (using the same inductive rationale as in the proof
  of \cref{lem:subconn-to-hac-wpgma}). Hence each $X_i \in X \setminus S$ no
  longer participates in any more merges. Then the weight-$2\lambda'^7$ edges
  incident on $X_i \in S$ merge, causing the subgraph induced by taking $X_i \in S$
  and all $u \in U$ to merge into its connected components. Each connected
  component containing an $X_i$ then merges with $y$ via its weight-$2\lambda'^6$ edge. The
  weight-$2\lambda'^5$ edges merge the leaves of $y$'s star with $y$ so that
  other edges incident on those connected components fall below $2$
  and don't participate in any more merges.

  In the case where $S$ covers $U$, the weight-$2\lambda'^2$ edges merge, and
  then the weight-$2\lambda$ edge merges, placing $s$ and $t$ in the same
  cluster. Hence a query will return the correct answer. In the case where $S$
  does not cover $U$, $t$ merges with each uncovered $u \in U$ via the
  weight-$2\lambda'^4$ edges, then merges with $z$ via a weight-$2\lambda'^3$
  edge. The weight-$2\lambda'^2$ edges merge the leaves of $z$'s star with $z$,
  decreasing the weight between $t$ and $s$ to below $2$. Hence $t$ and $s$ do
  not merge, and a query again returns the correct answer.
\end{proof}

\begin{figure}
  \includegraphics[scale=0.3]{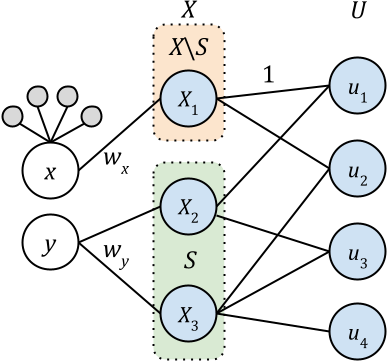}
  \caption{
  The figure illustrates the HAC instance constructed in 
  \cref{lem:subunion-to-hac-upgma-count}'s reduction from SubUnion
  to average-linkage \#HAC. 
  For cleanliness, we reduce the number of leaves (gray vertices) displayed attached to vertex $x$.
  }
  \label{fig:subunion-to-hac-upgma-count}
\end{figure}

\begin{lem}[Part of \cref{thm:hac-seth-lower-bounds}]\label{lem:subunion-to-hac-upgma-count}
  Let $\lambda \in [1, \on{poly}(n)]$.
   Suppose we can solve dynamic $\lambda$-approximate
   average-linkage \#HAC in
   $p(m,n)$ preprocessing work, $u(m,n)$ update work, and
   $q(m,n)$ query work. Then we can solve dynamic
   SubUnion with with $p(m', n')$ processing work, $u(m',n')$ update
   work, and $q(m',n')$ query work where $m'$ and $n'$ are
   $O(\lambda(\abs{X} + \abs{U})\abs{X})$.

   If we can solve incremental / decremental $\lambda$-approximate average-linkage
   \#HAC, then then the bounds hold for solve decremental / incremental
   SubUnion with $m'$ and $n'$ being $O(\lambda^2(\abs{X}+\abs{U})\abs{X}^2)$.
\end{lem}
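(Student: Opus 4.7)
The plan is to adapt the construction from \cref{lem:subunion-to-hac-upgma}, but strip out the $s$--$t$ query gadget, since \#HAC reports the entire cluster count at threshold $\theta$ and the two cases of SubUnion can be distinguished by two different counts. This is why the lemma's size bound $O(\lambda(\abs{X}+\abs{U})\abs{X})$ is smaller than \cref{lem:subunion-to-hac-upgma}'s bound: we no longer need the large $w_t,w_x,w_y$ blowups required to force the $s$--$t$ merge to reflect coverage.

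First I would build essentially the same graph as in \cref{lem:subunion-to-hac-upgma}: a vertex for each $X_i\in X$; a vertex for each $u\in U$ sitting inside a size-$\lambda$ star of heavy weight; weight-$L$ edges $\set{X_i,u}$ whenever $u\in X_i$; a ``left'' star centered at $x$ with $\ell_x$ leaves (weight-$w_x$ edges) together with weight-$w_x$ edges from $x$ to each $X_i\in X\setminus S$; and a ``right'' star centered at $y$ with $\ell_y$ leaves (weight-$w_y$ edges) together with weight-$w_y$ edges from $y$ to each $X_i\in S$. Choosing $w_x\gg w_y\gg L$ in the same spirit as before, the heavy edges merge first, so $C_x$ absorbs $X\setminus S$ and $C_y$ absorbs $S$. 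Then $\ell_y$ is chosen so that $\mathcal{W}(C_y,\text{unmerged covered $u$-star})>\theta\lambda$, forcing every covered $u$-star into $C_y$, while $\ell_x$ is chosen so that $C_x$'s similarity to every adjacent cluster (covered or uncovered $u$-stars and, once they are absorbed, $C_y$) falls below $\theta/\lambda$, freezing $C_x$. Simulating updates to $S$ is identical to \cref{lem:subunion-to-hac-upgma}: moving $X_i$ into $S$ swaps $\set{X_i,x}$ for $\set{X_i,y}$.

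In the covered case, every $u$-star merges into $C_y$ and the final cluster count at threshold $\theta$ is exactly $2$; in the uncovered case the uncovered $u$-stars remain isolated, giving a count of $2+r$ with $r>0$, which \#HAC returns directly. Counting one vertex per $X_i$, one per $u$ together with $\lambda$ star leaves, plus $\ell_x=\Theta(\lambda\abs{X})$ and $\ell_y=\Theta(\lambda\abs{U})$ leaves, and $\sum_i\abs{X_i}=O(\abs{X}\abs{U})$ weight-$L$ edges yields the stated size $O(\lambda(\abs{X}+\abs{U})\abs{X})$. For the partially dynamic case I would import the trick from the continued proof of \cref{lem:subunion-to-hac-upgma}: install weight-$w_y$ edges from $y$ to \emph{every} $X_i\in X$ at preprocessing time and simulate updates by toggling only the edges incident on $x$. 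Because $C_y$ is now adjacent to every $X_i$, preserving $\mathcal{W}(C_x,C_y)<\theta/\lambda$ requires inflating $\ell_x$ (and hence $w_x$) by another $\Theta(\lambda\abs{X})$ factor, yielding the stated size $O(\lambda^2(\abs{X}+\abs{U})\abs{X}^2)$.

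The main obstacle is verifying the two threshold inequalities under $\lambda$-approximation: $\mathcal{W}(C_y,\text{covered }u\text{-star})>\theta\lambda$ and $\mathcal{W}(C_x,\cdot)<\theta/\lambda$ for every cluster adjacent to $C_x$, even after covered $u$-stars migrate into $C_y$ and thereby introduce new $C_x$--$C_y$ edges through $X_i\in X\setminus S$. These follow from the same averaging arguments as in \cref{lem:subunion-to-hac-upgma}, using $\lambda\smallabs{C\cap U}\le\smallabs{C}$ for any cluster assembled out of size-$\lambda$ $u$-stars, but the constants need retuning to match the smaller blowup targeted here. No fundamentally new idea beyond \cref{lem:subunion-to-hac-upgma} is required.
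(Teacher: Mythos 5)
Your high-level idea---drop the $s$--$t$ gadget and read coverage off the cluster count---is exactly what the paper does, but you kept two auxiliary gadgets from \cref{lem:subunion-to-hac-upgma} whose only purpose was to control the $s$--$t$ dynamics, and keeping them makes the claimed size bound unachievable. Concretely, the size-$\lambda$ stars around each $u \in U$ (together with the $\ell_y$ leaves on $y$) force $\smallabs{C_y} = \Omega(\lambda\abs{U})$ once the covered $u$-stars are absorbed. For an unmerged covered $u$-star to keep similarity at least $\theta$ to $C_y$ you then need $\theta \le L/(\lambda\smallabs{C_y}) = O(L/(\lambda^2\abs{U}))$, while freezing $C_x$ requires $\abs{X}L/(\ell_x\lambda) < \theta/\lambda$, i.e., $\ell_x > \abs{X}L/\theta = \Omega(\lambda^2\abs{X}\abs{U})$; the $L$ cancels, so no retuning of edge weights helps. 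This exceeds the claimed $O(\lambda(\abs{X}+\abs{U})\abs{X})$ by a factor of $\lambda$, and your stated parameters $\ell_x = \Theta(\lambda\abs{X})$ and $\ell_y = \Theta(\lambda\abs{U})$ are mutually inconsistent: freezing $C_x$ would force $\theta = \Omega(L/\lambda)$ while absorbing the covered stars forces $\theta = O(L/(\lambda^2\abs{U}))$, which is impossible for $\lambda\abs{U} > 1$. Even after correcting the constants, a construction of size $\tilde{O}(\lambda^2\abs{U})$ (with $\abs{X} = O(\log\abs{U})$) only yields the weaker exponent $1-2c$ via \cref{lem:subunion-to-approx}, not the $1-c-\eps$ fully dynamic bound in \cref{thm:hac-seth-lower-bounds} that this lemma is meant to support.

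The fix is structural, and it is what the paper does: delete the $u$-stars and the $y$-leaves entirely, so each $u \in U$ and the vertex $y$ are bare, and give the $\set{X_i,u}$ edges weight $1$. Then $\smallabs{C_y} \le 1 + \abs{X} + \abs{U}$, so $\theta = 1/(\abs{X}+\abs{U})$ suffices to absorb every covered $u$, and $\ell_x = \lambda\abs{X}/\theta = O(\lambda(\abs{X}+\abs{U})\abs{X})$ freezes $C_x$ below $\theta/\lambda$. The stars were only ever needed in \cref{lem:subunion-to-hac-upgma} to bound $t$'s similarity to clusters containing elements of $U$ via $\lambda\abs{C \cap U} \le \abs{C}$; with no $t$, the trivial bound $\abs{C \cap U} \le \abs{C}$ does the job. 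Your partially dynamic adaptation (connect $y$ to all of $X$ up front, toggle only the edges at $x$, and inflate $\ell_x$ by a $\Theta(w_y) = \Theta(\lambda\abs{X})$ factor) matches the paper once it is built on this corrected base graph.
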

\begin{proof}
  Suppose we are given a SubUnion instance $(X, U)$ with subset $S \subseteq X$.
  We start by describing the case where we have a fully dynamic algorithm rather
  than a partially dynamic algorithm.
  Define the following constants:
  \begin{align*}
    \theta &= 1/(\abs{X}+\abs{U}), \\
    w_y &= \lambda\abs{X} + 1 = O(\lambda\abs{X}), \\
    \ell_x &= \lambda\abs{X}/\theta = O(\lambda(\abs{X} + \abs{U})\abs{X}), \\
    w_x &= \lambda(\ell_x + \abs{X}) + 1 = O(\lambda^2(\abs{X} + \abs{U})\abs{X})
  .\end{align*}

  \emph{Preprocessing}: \Cref{fig:subunion-to-hac-upgma-count} illustrates the graph 
  we will construct.
  Create a graph $G$ with a vertex representing each $X_i
  \in X$, a vertex representing each $u \in U$, and a weight-$1$ edge $\set{X_i,
  u}$ for each $u \in X_i$ for each $X_i \in X$. Add an extra vertex $y$, and
  add a weight-$w_y$ edge from $y$ to $X_i$ for each $X_i \in S$. Add a star
  graph with center $x$ and $\ell_x$ leaves connected to the center with weight
  $w_x$. Add a weight-$w_x$ edge from $x$ to $X_i$ for each $X_i \in X\setminus
  S$. This graph has $O(\ell_x)$ vertices and edges. Initialize \#HAC on
  this graph.

  \emph{Update}: Simulate adding $X_i$ to $S$ by adding a weight-$w_y$ edge from $X_i$
  to $y$ and removing the weight-$w_x$ from $X_i$ to $x$. Similarly, simulate removing
  $X_i$ from $S$ by removing edge $\set{X_i, y}$ and adding edge $\set{X_i, x}$.

  \emph{Query}: Query whether there are exactly two clusters when performing HAC
  up to threshold $\theta$. If yes, then return that $S$ covers $U$, otherwise
  return that $S$ does not cover $U$.

  \emph{Correctness}: The weights $w_x$ and $w_y$ are chosen to be large enough
  that all the edges of weight $w_x$ and $w_y$ merge before any weight-$1$ edge. Let
  $C_x$ denote $x$'s cluster (containing $X \setminus S$) and $C_y$ denote $y$'s
  cluster (containing $S$). From this point onwards, the similarity between $C_x$ to
  another cluster $C$ is always at most
  \begin{align*}
    \frac{\abs{X\setminus S}\abs{C \cap U}}{\abs{C_x}\abs{C}}
    \le
    \frac{\abs{X\setminus S}}{\abs{C_x}}
    <
    \frac{\abs{X}}{\ell_x}
    = \frac{\theta}{\lambda}
  ,\end{align*}
  so $C_x$ experiences no more merges. On the other hand, consider $C_y$, which
  has edges connecting it to every $u \in U$ covered by $S$. Each $u \in U$ is a
  singleton cluster until it merges with $C_y$. The similarity between $C_y$ and
  some $u \in U$ covered by $S$ and not yet merged with $C_y$ is always at least
  \begin{align*}
    \frac{1}{\abs{X} + \abs{U}} = \theta
  ,\end{align*}
  so every $u \in U$ covered by $S$ merges with $C_y$ when clustering until
  threshold $\theta$.

  After all the merges, in the case where $S$ covers $U$, we return the correct
  answer to a query since the only two clusters are $C_x$ and $C_y$. Otherwise,
  if $S$ does not cover $U$, then there will be more than two clusters since
  each uncovered $u \in U$ will be in a singleton cluster. Hence we answer
  queries correctly in both cases.

  \emph{Partially dynamic}: The update strategy is invalid if we want to
  reduce incremental/decremental SubUnion to decremental/incremental \#HAC.
  Instead, construct the graph so that $y$ has edges to every $X_i
  \in X$ rather than only $X_i \in S$. When processing updates, we skip
  modifying edges incident on $y$ and only add or remove the edges incident on
  $x$. Then we need to increase $\ell_x$ so that
  $C_x$ and $C_y$ do not merge with each other; we adjust the constants as follows:
  \begin{align*}
    \ell_x &= 2\lambda\abs{X}w_y/\theta = O(\lambda^2(\abs{X}+\abs{U})\abs{X}^2), \\
    w_x &= \lambda(\ell_x + \abs{X})w_y + 1 = O(\lambda^4(\abs{X} + \abs{U})\abs{X}^3)
  .\end{align*}
  The weight $w_x$ is chosen to be so large that all the weight-$w_x$ edges
  merge before the weight-$w_y$ edges. The last change in the correctness
  analysis to make sure that $C_y$ and $C_x$ never merge from this point onwards.
  The weight between $C_x$ and $C_y$ is bounded above by
  \begin{align*}
    \frac{\abs{X\setminus S}(w_y + \smallabs{C_y \cap U})}{\smallabs{C_x}\smallabs{C_y}}
    < \frac{\abs{X}(w_y + \smallabs{C_y \cap U})}{\ell_x\smallabs{C_y}} \\
    = \frac{\abs{X}w_y}{\ell_x\smallabs{C_y}}
    + \frac{\abs{X}\smallabs{C_y \cap U}}{\ell_x\smallabs{C_y}}
    < \frac{\abs{X}w_y}{\ell_x}
    + \frac{\abs{X}}{\ell_x}
    < \frac{2\abs{X}w_y}{\ell_x}
    = \frac{\theta}{\lambda}
  ,\end{align*}
  so indeed $C_y$ and $C_x$ do not merge.
\end{proof}

Existing lower bounds conditional on SETH for
SubUnion apply directly to complete-linkage HAC and weighted-average-linkage HAC, whereas to
turn \cref{lem:subunion-to-hac-upgma-count} into the lower bounds in
\cref{thm:hac-seth-lower-bounds}, we apply \cref{lem:subunion-to-approx}.

}

\end{document}